 \newcommand{\ly}[1]{\textcolor{black}{#1}}
\newtheorem{lemma}{Lemma}
\newtheorem{theorem}{Theorem}
\newtheorem{corollary}{Corollary}
\newtheorem{remark}{Remark}
\def\BibTeX{{\rm B\kern-.05em{\sc i\kern-.025em b}\kern-.08em
    T\kern-.1667em\lower.7ex\hbox{E}\kern-.125emX}}
\begin{document}
\title{On the Effect of Bounded Rationality\\ in Electricity Markets}

\author{Lihui Yi and Ermin Wei
\thanks{This work was supported by the National Science Foundation (NSF) under Grants ECCS-2030251 and ECCS-2216970 and the Center for Engineering Sustainability and Resilience at Northwestern Engineering.}
\thanks{The authors are with Northwestern University, Evanston,
IL, 60208. L. Yi is with the Department
of Electrical and Computer Engineering (e-mail: lihuiyi2027@u.northwestern.edu).
 E. Wei is with the Department
of Electrical and Computer Engineering and Department of Industrial Engineering and Management Sciences (e-mail: ermin.wei@northwestern.edu).}}
\maketitle
\begin{abstract}
Nash equilibrium is a common solution concept that captures strategic interaction in electricity market analysis. However, it requires a fundamental but impractical assumption that all market participants are fully rational, {implying} unlimited computational resources and cognitive abilities. To tackle the limitation, level-\textit{k} reasoning is proposed and studied to model the bounded rational behaviors. In this paper, we consider a Cournot competition in electricity markets with two suppliers{,} both following level-\textit{k} reasoning. One is a self-interested firm and the other serves as a benevolent social planner. First, we observe that the optimal strategy of the social planner {corresponds to} a {particular} rationality level{, where being either} less or more rational {may} both result in {reduced} social welfare. {We then} investigate the effect of bounded rationality on social welfare performance and find that it {can} largely deviate from that at the Nash equilibrium point. {From the perspective of the social planner, we characterize optimal, {expectation maximizing and robust maximin strategies}, when having access to different information. Finally, by designing its utility function, we find that social welfare is better off if the social planner cooperates with or fights the self-interested firm.} {Numerical} experiments further demonstrate and validate our findings. 
\end{abstract}

\begin{IEEEkeywords}
Bounded rationality, level-\textit{k} reasoning, Cournot competition, Nash equilibrium, electricity market, DC power flow.
\end{IEEEkeywords}

\section{Introduction}\label{sec: introduction}
\IEEEPARstart{U}{nderstanding} the interactions between market participants is crucial in electricity markets, where game theory has been widely adopted to {model} strategic generator behaviors \cite{Abapour}. While many game-theoretic frameworks {have been} applied to electricity markets (e.g., supply function equilibria \cite{Yuanzhang}, Stackelberg game \cite{Lang, VijayGupta}, etc.), Cournot competition is one of the most {fundamental} ones. It is a classical economic model that analyzes the behavior of participants in an oligopoly market. In this model, the firms strategically compete with each other by choosing their production quantities, and the market price is determined by the total output of the firms. 

The current literature uses Cournot competition to study electricity markets from various aspects. For example, \cite{Gountis} proposes an efficient algorithm for the calculation of the Cournot equilibrium of an electricity market in the absence of transmission constraints; \cite{Cunningham} analyzes three market players in a transmission-constrained system and takes into account the non-constant marginal production cost; \cite{Baosen} investigates the impact of coalition formation on the efficiency of Cournot competitions where the suppliers face production uncertainties; \cite{Cai} studies a situation where a market maker is present and shows that the equilibrium structure is dramatically impacted by its objective. 

We notice that most of the existing literature applies the Nash equilibrium solution concept, which assumes the market participants to be fully rational and have unlimited computational resources to calculate the equilibrium. However, studies \cite{Simon, Dhami} show that this assumption is rarely true in reality, either due to limited cognitive ability or lack of computational power. While there is one way to be fully rational, there could be infinite ways to be bounded rational. A few typical approaches include {the} logit choice model \cite{Lang2}, anecdotal reasoning \cite{Hongming} and level-$k$ reasoning (or cognitive hierarchy) \cite{Runco}. 
\ly{In particular, \textit{level-$k$ reasoning} was initially conceptualized by Nagel in 1995, suggesting that people {engage in} levels of strategic thinking as follows:}

\textit{{A} level-0 agent plays a random strategy; {a} level-1 agent best responds {to} a level-0 agent; {a} level-2 agent best responds to a level-1 agent; ...; {a} level-$k$ agent best responds to a level-$(k-1)$ agent.}

\ly{Level-$k$ reasoning is {well supported by many behavioral} experiments \cite{Stahl, CH_Model, Lk_model, lk_experiment}. For example, Camerer et al. \cite{CH_Model} observe concentrated populations on level-1 and level-2 behaviors; Arad and Rubinstein \cite{lk_experiment} find that 84\% of the subjects chose the strategies corresponding to level-0 through level-3 behaviors. In level-$k$ reasoning, the rationality of an agent is naturally modeled by the level $k$, as more cognitive ability and computational power are required for a larger $k$. {As $k$ approaches infinity, level-$k$ behaviors may converge to Nash equilibrium (e.g., $p$-beauty contest game and our game in this paper), where agents are fully rational, and no one can do better by unilaterally changing {their} strategy.} }

In this paper, we use level-$k$ reasoning to model the bounded rationality of firms in Cournot competition in electricity markets. Specifically, we focus on a stylized scenario where there are two suppliers in the market{:} one is the self-interested firm{,} and the other is the benevolent social planner. This setup is motivated by the fact that electricity markets usually have a social planner who prioritizes social welfare maximization \cite{public_sector}. Driven by bounded rational human behaviors, this work aims to answer the following key questions under level-$k$ reasoning:

\vspace{5pt}
\noindent \textbf{Question 1.} \ly{Does the social planner being more rational (i.e., having a higher rationality level in our level-$k$ model) benefit social welfare?}

\vspace{5pt}
\noindent \textbf{Question 2.} How does {the} level-$k$ reasoning outcome deviate from the Nash equilibrium outcome? 

\vspace{5pt}
\noindent \textbf{Question 3.} Given access to different information, what are the strategies for the social planner?

\vspace{5pt}
\noindent \textbf{Question 4.} Is social welfare better off if the social planner cooperates with or fights the self-interested firm?

\vspace{5pt}

Our main contributions are then \ly{four-fold}:
\begin{itemize}
    \item First, {we adopt level-$k$ reasoning in electricity markets to model the bounded rational behaviors of power suppliers. Then, we derive the closed-form solutions of their level-$k$ behaviors and find that the optimal strategy for the benevolent social planner is to be exactly one level more rational than its opponent, rather than being infinitely rational. As the social planner moves away from its optimal rationality level (i.e., becomes less rational or more rational), social welfare decreases monotonically.}  
    \item Second, we find that the social welfare outcome of {the} level-$k$ reasoning model may be better, equal {to} or worse than that of the Nash equilibrium, depending on the network {constraints}. Our numerical results show that bounded rationality plays a crucial role in system outcomes.
    \item Thrid, we propose strategies for the benevolent social planner to strategically respond to the self-interested firm, when having access to complete information, incomplete information or no information. We further compare their performance {through} numerical studies. 
    \item {Lastly, we design the utility function of the social planner and find that social welfare is better off if the social planner cooperates with or fights the self-interested firm.}
\end{itemize}

\section{Model}\label{sec: model}

In this section, we introduce our model and define the key terms. We focus on a Cournot competition in the electricity market with one consumer {representing the entire demand} and two suppliers. First, we ignore the physical power system constraints and introduce the demand and supply sides of the market, respectively. Then, we introduce the constrained network topology, embedded with the physical power flow equations. 

\subsection{Demand side}
In the electricity market, the \ly{consumer (or demand) $D$} buys a single commodity, electricity, from the power suppliers. As commonly seen in Cournot competition and power system literature, we assume the consumer's utility function $u(\cdot)$ to be quadratic  as follows \cite{quad_utility, quad_utility2, quad_utility3, Cai},
$${u(q_D) = -\frac{a}{2}q_D^2 + bq_D + m},$$
{for some $a,b > 0$ and $m \ge 0$. Here {$q_D \in [0,b/a]$} is the demand in the market.} \ly{The consumer aims to maximize the net benefit, i.e., utility {minus} payment to the suppliers, 
\begin{align*}
    \pi_D(q_D) = u(q_D) - pq_D,
\end{align*}
where $p$ is the market price. Since $\pi_D(q_D)$ is concave, the optimal demand satisfies the first-order condition, which leads to
\begin{align}\label{eq: price}
    {p(q_D)=b-aq_D,}
\end{align}
i.e., the market exhibits a linear inverse demand function as seen in {classical} Cournot competitions. 
}

\subsection{Supply side}
We assume there are two firms as electricity suppliers in the market. One is self-interested and aims to maximize its own profit, denoted by firm $S$, while the other is benevolent and maximizes social welfare, denoted by firm $B$ or social planner. We further assume that {the} two firms have identical linear production {costs} \cite{linear_cost} as $c_i(q_i) = c q_i$, where {$c \in [0,b)$} is the cost per unit and {$q_i \in [0,b/a]$} is the quantity supplied by firm $i$ with $i \in \{S,B\}$. 

\ly{Here, we consider all goods produced by each supplier to be fully transmitted to the consumer. } This is a typical setup in Cournot competition, but is not trivial in electricity markets due to the physical constraints. We will establish the underlying reason in the next subsection. 

The payoff (or utility) of the self-interested firm $S$ is defined as its net profit, given by
$$\pi_S(q_S,q_B) = p(q_D)q_S - c q_S,$$
\ly{where $p(q_D)$ is given by Eq. \eqref{eq: price}.}
The payoff (or utility) of the benevolent social planner $B$ is defined as the system social welfare $W(q_S,q_B)$, i.e., 
$$\pi_B(q_S,q_B) = W(q_S,q_B),$$
where $W(q_S,q_B) = u(q_D) - cq_S - cq_B$. Both firms aim to maximize their payoff functions{,} and the market must be cleared, not only due to economic considerations but also by physical {constraints} in power systems, i.e., $q_D=q_S+q_B$. 

\subsection{Network topology}

First, we consider the 3-bus network example depicted in Fig. \ref{fig: topology 1}. Firm $B$ and firm $S$ can be regarded as two generators at buses 1 and 2, respectively. The consumer can be regarded as a load at bus 3. We assume the transmission line between bus 1 and 2 has a maximum flow capacity $0< f \le (b-c)/a$. Otherwise, the social planner can produce exactly {$(b-c)/a$} to maximize the social welfare and the self-interested firm will exit the market, which is less interesting for studies. Moreover, we assume the line between bus 2 and 3 has a flow capacity $2b/a$. Due to Eq. \eqref{eq: price}, each supplier will never produce more than {$b/a$ units}, thus the flow on line 2-3 never exceeds {$2b/a$}. Essentially, the flow capacity of line 2-3 is always sufficient. Only line 1-2 is possibly congested, as also seen in practice (e.g., Path 15 in California \cite{Path15}).

\begin{figure}[ht]
\centering
\includegraphics[width=0.8\linewidth]{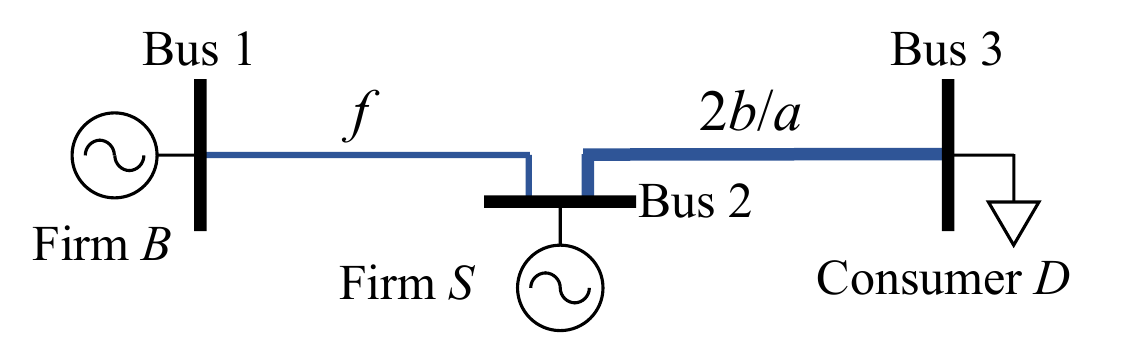}
\caption{A 3-bus network example.}
\label{fig: topology 1}
\end{figure}

\ly{In this work, we apply the direct current (DC) power flow model, as commonly used in electricity market studies \cite{Yuanzhang}, and then {we} have the following observations for this particular simple network:
}

\begin{enumerate}
    \item The system is lossless;
    \item The power that flows from bus 1 to bus 2 and from bus 2 to bus 3 is $q_B$ and $q_B+q_S$, respectively;
    \item \ly{The power flow equation constraint is equivalent to $|q_B| \le f$.}
\end{enumerate}
{These {observations} justify that all goods produced by each supplier are transmitted to the consumer, as assumed in the previous subsection.} {Note that the results also apply to other topologies under DC assumptions, such as the 2-bus and 3-bus systems depicted in Fig. \ref{fig: other topology}. 
\begin{figure}[ht]
    \subfigure[A 2-bus system]{
        \begin{minipage}[t]{0.5\linewidth}
        \centering
        \includegraphics[width=\linewidth]{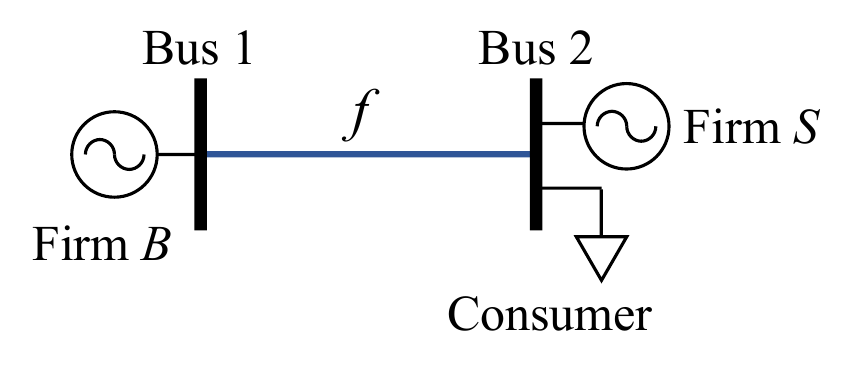}
    \end{minipage}
    }%
    \subfigure[A 3-bus system]{
    \begin{minipage}[t]{0.46\linewidth}
        \centering
        \includegraphics[width=\linewidth]{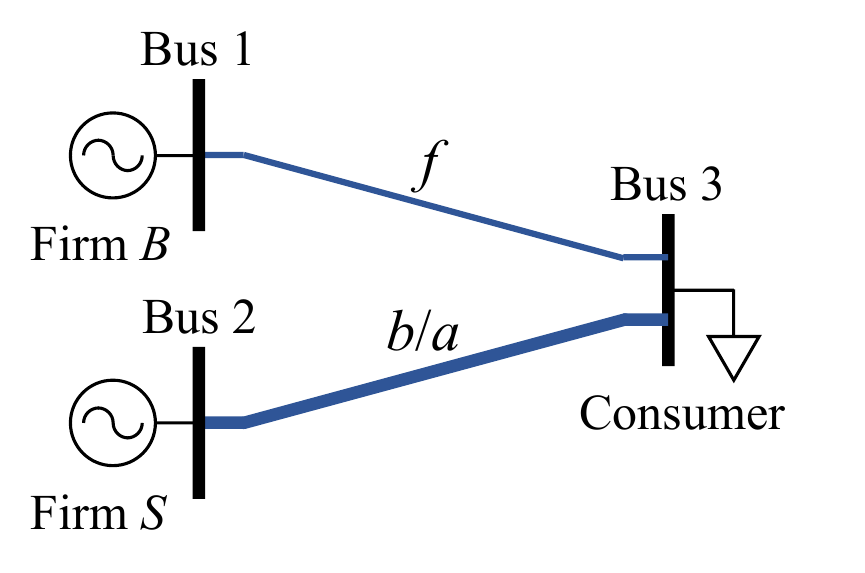}
    \end{minipage}
    }
    \caption{Other network topologies.}
    \label{fig: other topology}
\end{figure}}

\section{Level-\textit{k} Reasoning}\label{sec: level-k reasoning}
In this section, we elaborate {on} level-$k$ reasoning in {detail}. As in the classical Cournot competition, \ly{we define the \textit{best response} of firm $S$ to firm $B$ and of firm $B$ to firm $S$ as $\mathcal{B}_S(q_B)$ and $\mathcal{B}_B(q_S)$, respectively, such that $\mathcal{B}_S(q_B) \in \arg\max_{q_S \ge 0} \pi_{S}(q_S, q_B)$ and $\mathcal{B}_B(q_S) \in \arg\max_{0 \le q_B \le f} \pi_{B}(q_S, q_B)$ due to {the} network constraint.} The following lemma {provides} the closed-form solution of the best responses for both suppliers. 

\begin{lemma}\label{lemma: best response functions}
    The best responses of the two suppliers can be compactly expressed as follows,{
    \begin{align*}
        &\mathcal{B}_S(q_B) = \max \Big\{ (b-c-aq_B)/2a, 0 \Big\},\\
        &\mathcal{B}_B(q_S) = \max \Big\{ \min\big\{ (b-c-aq_S)/a, f \big\}, 0 \Big\}.
    \end{align*}}
\end{lemma}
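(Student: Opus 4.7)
The plan is to treat each best response as the unconstrained (or box-constrained) maximizer of a strictly concave quadratic in the firm's own decision variable, then project the interior stationary point onto the feasible interval.

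First I would handle firm $S$. Substituting the market clearing condition $q_D=q_S+q_B$ and the inverse demand \eqref{eq: price} into the payoff yields $\pi_S(q_S,q_B)=(b-a(q_S+q_B))q_S-cq_S$, which, viewed as a function of $q_S$ with $q_B$ fixed, is a concave quadratic with coefficient $-a<0$. Hence the stationary point $\partial \pi_S/\partial q_S = b-c-aq_B-2aq_S=0$ gives a unique unconstrained maximizer $q_S^\star=(b-c-aq_B)/(2a)$. Since $q_S$ is only required to be nonnegative, the constrained maximizer is obtained by taking the nonnegative part, giving the stated expression for $\mathcal{B}_S(q_B)$.

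Next I would handle firm $B$. Expanding $\pi_B(q_S,q_B)=u(q_S+q_B)-cq_S-cq_B=-\tfrac{a}{2}(q_S+q_B)^2+b(q_S+q_B)+m-c(q_S+q_B)$, again a concave quadratic in $q_B$ with coefficient $-a<0$. Setting $\partial \pi_B/\partial q_B=-a(q_S+q_B)+b-c=0$ yields the interior stationary point $\tilde q_B=(b-c-aq_S)/a$. Here the feasible set is $[0,f]$ because of the transmission line constraint $|q_B|\le f$ established earlier, together with $q_B\ge 0$. By concavity, the constrained maximizer is the projection of $\tilde q_B$ onto $[0,f]$, which is exactly $\max\{\min\{(b-c-aq_S)/a,f\},0\}$.

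There is no real obstacle; the statement is essentially a direct application of first-order conditions plus projection onto a simple interval, and the concavity of each payoff guarantees the first-order conditions are both necessary and sufficient. The only point to be careful about is verifying that the stated feasible sets match those declared in the definition of the best responses (nonnegativity for $q_S$, and the box $[0,f]$ for $q_B$ coming from the DC power flow analysis in the previous section), so that the projection formulas indeed coincide with the constrained optimizers.
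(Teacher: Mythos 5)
Your proof is correct and follows essentially the same route as the paper: rewrite each payoff as a concave quadratic in the firm's own quantity, apply the first-order condition, and project the stationary point onto the feasible set ($q_S\ge 0$ for firm $S$, $[0,f]$ for firm $B$). The paper's own proof is just a terser version of exactly this argument.
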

\begin{proof}
    We may rewrite the payoff functions for {the} two suppliers in the following, 
    \begin{align*}
        &\pi_{S}(q_S, q_B) = -aq_S^2 + (b-c-aq_B)q_S,\\
        &\pi_{B}(q_S, q_B) = -\frac{a}{2}q_B^2 \!+\! (b\!-\!c\!-\!aq_S)q_B \!+\! q_S(b\!-\!c\!-\!\frac{aq_S}{2}) \!+\! m.
    \end{align*}
    Since $\pi_S(q_S,q_B)$ is concave in $q_S$ and $\pi_B(q_S,q_B)$ is concave in $q_B$, \ly{we can combine the {first-order} condition with constraints on produced quantities and derive the closed-form expressions for {the} best responses.} 
\end{proof}

As a demonstration of the level-$k$ reasoning model, we present level-0, level-1 and level-2 behaviors as examples in our framework and then give the general formula {for} level-$k$ behaviors. Denote the quantity produced by the level-$k$ firm $i$ by $q_{i}^{(k)}$, with $i \in \{S,B\}$ and $k \in \mathbb{N}$ \footnote{We consider $\mathbb{N}$ to include 0.}.

\textit{i) Level-0 behavior:}
The level-0 players choose a feasible solution at random \cite{L0}. For the self-interested firm $S$, the minimal production quantity is 0 while the maximum is {$(b-c)/a$}, since any extra produced unit will result in a price lower than the marginal cost $c$. \ly{Here, we simply let {$q_S^{(0)} = (b-c)/2a$}, which is the mean of all feasible solutions \cite{L0} \footnote{\ly{An alternative setup will be the level-0 behaviors following a uniform distribution on all feasible solutions and the level-1 agent maximizing its payoff over expectation. Since our best response functions are linear, the two setups coincide with each other. }}.} For the benevolent social planner $B$, the minimal and maximal production quantities are 0 and $f$, respectively. Thus, \ly{we let $q_B^{(0)} = f/2$.}

\textit{ii) Level-1 behavior:}
The level-1 players assume the other players are level-0 and best respond accordingly. Specifically, the level-1 firm $S$ will best respond to the level-0 firm $B$. Since the best response functions are linear, we have
$${q_S^{(1)} = \mathcal{B}_S(q_B^{(0)}) = (b-c)/2a - f/4}.$$
Similarly, the level-1 firm $B$ will best respond to the level-0 firm $S$ and hence
$${q_B^{(1)} = \mathcal{B}_B(q_S^{(0)}) = \min\left\{ (b-c)/2a, f \right\}}.$$

\textit{iii) Level-2 behavior:}
{The level-2 players have the belief that the other players are level-1 and best respond accordingly.} Therefore, the level-2 firm $S$ will best respond to the level-1 firm $B$ and the level-2 firm $B$ will best respond to the level-1 firm $S$. We then have {
\begin{align*}
    &q_S^{(2)} = \mathcal{B}_S(q_B^{(1)}) = \max\left\{ (b-c)/4a, (b-c)/2a-f/2 \right\},\\
    &q_B^{(2)} = \mathcal{B}_B(q_S^{(1)}) = \min\left\{ (b-c)/2a+f/4, f \right\}.
\end{align*}}

\textit{iv) Level-$k$ behavior:}
In general, we have that {a} level-$k$ firm $i$ best responds to the level-$(k-1)$ firm $j$ as follows,
\begin{align*}
    &q_i^{(k)} = \mathcal{B}_i(q_j^{(k-1)}),
\end{align*}
where $i,j \in \{S,P\}$ and $i \neq j$.

\section{Level-\textit{k} Social Welfare Performance}
In this section, we first derive the closed-form solution of level-$k$ behaviors and present its properties. Next, we investigate the relationships between the rationality levels of the two suppliers and the resulting social welfare. Finally, we compare the social welfare performances under the level-$k$ reasoning model and the fully rational Nash equilibrium scenario. 

\begin{lemma}\label{lemma: closed-form level k behaviors}
    For any $k \in \mathbb{N}$, the level-$k$ behavior of firm $S$ is given by 
    \begin{align*}
        q_S^{(k)} = 
        \begin{dcases}
            &\!\!\!\!\!\! \max\left\{ \frac{b-c}{2^{\frac{k}{2}+1}a}, \frac{b-c}{2a}-\frac{f}{2} \right\}, \;\text{if $k$ is even}; \\
            &\!\!\!\!\!\! \max\left\{ \frac{b-c}{2^{\frac{k+1}{2}}a} - \frac{f}{2^{\frac{k+1}{2}+1}}, \frac{b-c}{2a}-\frac{f}{2} \right\}, \;\text{if $k$ is odd}.
        \end{dcases}
    \end{align*}
    Similarly, the level-$k$ behavior of firm $B$ is given by {
    \begin{align*}
        q_B^{(k)} = 
        \begin{dcases}
            &\!\!\!\!\!\! \min\left\{ \frac{(2^{\frac{k}{2}}-1)(b-c)}{2^{\frac{k}{2}}a} + \frac{f}{2^{\frac{k}{2}+1}}, f \right\}, \;\text{if $k$ is even}; \\
            &\!\!\!\!\!\! \min\left\{ \frac{(2^{\frac{k+1}{2}}-1)(b-c)}{2^{\frac{k+1}{2}}a}, f \right\}, \;\text{if $k$ is odd}.
        \end{dcases}
    \end{align*}}
\end{lemma}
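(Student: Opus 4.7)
The plan is to prove both closed-form formulas by strong induction on $k$, treating the two parities in parallel. The base cases $k=0$ and $k=1$ are already in hand from the explicit computations of $q_S^{(0)},q_B^{(0)},q_S^{(1)},q_B^{(1)}$ given in Section~\ref{sec: level-k reasoning}, so the content is the inductive step $k-1 \Rightarrow k$.

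Before invoking the induction hypothesis, I would first strip the outer $\max\{\cdot,0\}$ clips from the best-response formulas in Lemma~\ref{lemma: best response functions}. Because $q_B^{(k-1)} \le f \le (b-c)/a$ (the standing capacity assumption combined with the inductive $\min\{\cdot,f\}$ clip), we get $(b-c-a q_B^{(k-1)})/(2a) \ge 0$, so the clip in $\mathcal{B}_S$ is inactive. The claimed formula for $q_S^{(k-1)}$ is bounded above by $(b-c)/(2a)$, so $(b-c)/a - q_S^{(k-1)} \ge (b-c)/(2a) > 0$ and the clip in $\mathcal{B}_B$ is inactive as well. This collapses the recursion to
$$q_S^{(k)} \;=\; \frac{b-c}{2a} - \frac{q_B^{(k-1)}}{2}, \qquad q_B^{(k)} \;=\; \min\!\Bigl\{\frac{b-c}{a} - q_S^{(k-1)},\; f\Bigr\}.$$

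With these cleaner updates, I would substitute the parity-appropriate inductive formula into each slot, relying on the telescoping identity $1-(2^m-1)/2^m = 2^{-m}$ to collapse the nested fractions into the single power of $2$ that appears in the statement. Multiplying a $\min$ by $-1/2$ and adding a constant turns it into a $\max$ of two affine terms (and dually for $q_B^{(k)}$), producing exactly the shape of the target formula; the shifts in the exponent between even and odd $k$ come from the identities $(k-1)/2 + 1 = (k+1)/2$ and $k/2 + 1$.

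The only non-mechanical step — and hence the main obstacle — arises when deriving $q_B^{(k)}$: pushing the $\max$ of two terms from $q_S^{(k-1)}$ through $(b-c)/a - (\cdot)$ and then capping at $f$ produces a \emph{three}-argument minimum, in which the extra term is $(b-c)/(2a) + f/2$, coming from the lower branch of the $\max$. I would dispose of it by the inequality $(b-c)/(2a)+f/2 \ge f \iff f \le (b-c)/a$, which is precisely the standing assumption on the transmission capacity; the stray term is therefore dominated by $f$ in the $\min$ and drops out, recovering the two-argument $\min$ in the statement. After this one inequality check, the remainder of the induction is routine bookkeeping with powers of $2$.
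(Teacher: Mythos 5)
Your proposal is correct and follows essentially the same route as the paper's proof: induction on $k$ using the best-response recursion from Lemma~\ref{lemma: best response functions}, with the standing assumption $f \le (b-c)/a$ invoked both to discard the outer $\max\{\cdot,0\}$ clips and to eliminate the extra middle term $(b-c)/(2a)+f/2$ from the three-argument minimum. The only differences are cosmetic (stripping the clips up front rather than simplifying them at the end, and indexing the step as $k-1 \Rightarrow k$ instead of $k \Rightarrow k+1$).
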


{
\begin{proof}
    We prove this by mathematical induction. First, we verify that the expressions are true when $k = 0$, according to the level-0 behaviors shown in the previous section. Then, we assume $q_S^{(k)}$ and $q_B^{(k)}$ are true for any $k \in \mathbb{N}$ such that $k>0$ and $k$ is even. 
    
    As stated in Section \ref{sec: level-k reasoning}, the level-$(k+1)$ firm $S$ will best respond to the level-$k$ firm $B$ and the level-$(k+1)$ firm $B$ will best respond to the level-$k$ firm $S$. Based on the best responses shown in Lemma \ref{lemma: best response functions}, the level-$(k+1)$ behavior of firm $S$ is written as follows,
    \begin{align*}
        q_S^{(k+1)} \!=\! \max\! \left\{\! \frac{b\!-\!c}{2a} \!-\! \min\!\left\{\! \frac{(2^{\frac{k}{2}}\!-\!1)(b-c)}{2^{\frac{k}{2}+1}a} \!+\! \frac{f}{2^{\frac{k}{2}+2}}, \frac{f}{2} \!\right\}, 0 \!\right\}.
    \end{align*}
    Equivalently, we have
    \begin{align*}
        q_S^{(k+1)} = \max \left\{ \max\left\{ \frac{b-c}{2^{\frac{k}{2}+1}a} - \frac{f}{2^{\frac{k}{2}+2}}, \frac{b-c}{2a}-\frac{f}{2} \right\}, 0 \right\}.
    \end{align*}
    Since $f \le (b-c)/a$, it can be further simplified as
    \begin{align*}
        q_S^{(k+1)} = \max\left\{ \frac{b-c}{2^{\frac{k}{2}+1}a} - \frac{f}{2^{\frac{k}{2}+2}}, \frac{b-c}{2a}-\frac{f}{2} \right\},
    \end{align*}
    which is consistent with the closed-form solution of $q_S^{(k+1)}$. Similarly, we may derive the level-$(k+1)$ behavior of firm $B$ by $q_B^{(k+1)} = \mathcal{B}_B(q_S^{(k)})$ and obtain
    \begin{align*}
        q_B^{(k+1)} \!=\! \max\! \left\{\! \min\!\left\{\! \frac{b\!-\!c}{a}\!-\!\max\!\left\{ \frac{b-c}{2^{\frac{k}{2}+1}a}, \frac{b\!-\!c}{2a}\!-\!\frac{f}{2} \!\right\}, f \!\right\}, 0 \!\right\}\!,
    \end{align*}
    Equivalently, we have
    \begin{align*}
        q_B^{(k+1)} \!=\! \max\! \left\{\! \min\left\{\! \frac{(2^{\frac{k}{2}+1}-1)(b-c)}{2^{\frac{k}{2}+1}a}, \frac{b-c}{2a}+\frac{f}{2}, f \!\right\}, 0 \!\right\}.
    \end{align*}
    Since $f \le (b-c)/a$, the preceding equation becomes
    \begin{align*}
        q_B^{(k+1)} = \min\left\{ \frac{(2^{\frac{k}{2}+1}-1)(b-c)}{2^{\frac{k}{2}+1}a}, f \right\},
    \end{align*}
    which is consistent with the closed-form solution of $q_B^{(k+1)}$. The case where $k$ is odd can be established in a similar way. Therefore, we see that the level-$(k+1)$ expressions are true. By mathematical induction, the proof is complete. 
\end{proof}
}
We then observe the following properties.

\begin{lemma}\label{lemma: monotonicity}
    For any $k \in \mathbb{N}$,
    \begin{itemize}
        \item $q_S^{(k)}$ is (weakly) decreasing in $k$, lower bounded by {$\frac{b-c}{2a}-\frac{f}{2}$} and upper bounded by {$\frac{b-c}{2a}$};
        \item $q_B^{(k)}$ is (weakly) increasing in $k$, lower bounded by $\frac{f}{2}$ and upper bounded by $f$.
    \end{itemize}
\end{lemma}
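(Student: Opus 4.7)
The plan is to avoid redoing the two-parity case analysis from Lemma 2 and instead exploit the monotonicity of the best response maps, combined with a clean induction on $k$. From the closed forms in Lemma 1, both $\mathcal{B}_S(q_B)$ and $\mathcal{B}_B(q_S)$ are weakly decreasing in their arguments (a max of a linearly decreasing affine function and a constant, possibly further capped by $f$). Composing two decreasing maps gives a weakly increasing map, so the two-step iterates $q_S^{(k+2)} = \mathcal{B}_S(\mathcal{B}_B(q_S^{(k)}))$ and $q_B^{(k+2)} = \mathcal{B}_B(\mathcal{B}_S(q_B^{(k)}))$ inherit a monotone-contraction flavor from the base comparison.

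First I would verify the base step directly from the level-0 and level-1 values computed earlier: $q_S^{(1)} = (b-c)/(2a) - f/4 \le (b-c)/(2a) = q_S^{(0)}$, and $q_B^{(1)} = \min\{(b-c)/(2a), f\} \ge f/2 = q_B^{(0)}$, where the second inequality uses the standing assumption $f \le (b-c)/a$. Next, I would run the induction in lockstep on both sequences: assuming $q_S^{(k)} \le q_S^{(k-1)}$ and $q_B^{(k)} \ge q_B^{(k-1)}$, the fact that $\mathcal{B}_B$ is decreasing gives $q_B^{(k+1)} = \mathcal{B}_B(q_S^{(k)}) \ge \mathcal{B}_B(q_S^{(k-1)}) = q_B^{(k)}$, and symmetrically $q_S^{(k+1)} = \mathcal{B}_S(q_B^{(k)}) \le \mathcal{B}_S(q_B^{(k-1)}) = q_S^{(k)}$, closing the induction.

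For the bounds, I would argue as follows. For $q_S^{(k)}$, the upper bound is $\mathcal{B}_S(0) = (b-c)/(2a)$, attained at $k=0$; the lower bound comes from plugging the upper bound $q_B^{(k-1)} \le f$ into $\mathcal{B}_S$ and using $f \le (b-c)/a$ to conclude $q_S^{(k)} \ge (b-c-af)/(2a) = (b-c)/(2a) - f/2 \ge 0$, which is thus the active value of the outer $\max$. For $q_B^{(k)}$, the upper bound $f$ is baked into the feasibility constraint used in $\mathcal{B}_B$, and the lower bound $f/2$ follows immediately from the monotonicity just established since $q_B^{(0)} = f/2$.

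The only real subtlety is making sure the $\max\{\cdot,0\}$ in $\mathcal{B}_S$ and the $\min\{\cdot,f\}$ in $\mathcal{B}_B$ do not break the monotonicity argument in degenerate regimes; this is handled uniformly by the assumption $f \le (b-c)/a$, which ensures that at every step the arguments of the outer $\max$/$\min$ remain inside the feasible interval, so composition preserves the decreasing property cleanly. I expect this to be the main (minor) obstacle to state carefully; once it is in place, everything else is a short induction and two-line bound checks.
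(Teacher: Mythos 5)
Your proof is correct, and it takes a genuinely different route from the paper's. The paper proves monotonicity by brute force from the closed forms of Lemma \ref{lemma: closed-form level k behaviors}: it writes $q_S^{(k)} - q_S^{(k+1)}$ for even $k$ as $\max\{\min\{x,y\},\min\{x-y,0\}\}$ with explicit $x,y$, checks sign by cases, repeats for odd $k$, and reads the bounds off the formulas at $k=0$ and $k\to\infty$. You instead exploit the structural fact that both best-response maps are weakly decreasing (antitone), verify the single base comparison $q_S^{(1)} \le q_S^{(0)}$, $q_B^{(1)} \ge q_B^{(0)}$, and close a joint induction in one line per sequence; the bounds then follow from $q_S^{(k)} = \mathcal{B}_S(q_B^{(k-1)}) \ge \mathcal{B}_S(f)$ and the feasibility cap on $q_B$. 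Your approach buys independence from Lemma \ref{lemma: closed-form level k behaviors} and from the parity case split entirely, and it would generalize to any pair of antitone best responses; the paper's approach is heavier but produces the exact expressions as a byproduct (which it needs elsewhere anyway). One small note: your stated ``only real subtlety'' is not actually a subtlety --- truncating a weakly decreasing function by $\max\{\cdot,0\}$ or $\min\{\cdot,f\}$ with constants preserves weak monotonicity unconditionally, so the assumption $f \le (b-c)/a$ is not needed for the inductive step; it is needed only in the base case (to get $\min\{(b-c)/2a, f\} \ge f/2$) and for the nonnegativity of the lower bound $(b-c)/2a - f/2$. Also, state the trivial $k=0$ instance of the lower bound on $q_S^{(k)}$ separately, since your argument via $\mathcal{B}_S(f)$ only applies for $k \ge 1$.
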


\begin{proof}
    We first assume $k \in \mathbb{N}$ is an even number and write {
    \begin{align*}
        q_S^{(k)} - q_S^{(k+1)} = &\max\left\{ \frac{b-c}{2^{\frac{k}{2}+1}a}, \frac{b-c}{2a}-\frac{f}{2} \right\}\\
        &- \max\left\{ \frac{b-c}{2^{\frac{k}{2}+1}a} - \frac{f}{2^{\frac{k}{2}+2}}, \frac{b-c}{2a}-\frac{f}{2} \right\}.
    \end{align*}}
    Equivalently, it can be rewritten as follows,
    \begin{align*}
        q_S^{(k)} - q_S^{(k+1)} = \max\Big\{ \min\left\{ x, y \right\}, \min\left\{ x-y, 0 \right\} \Big\},
    \end{align*}
    where 
    \begin{align*}
        \begin{dcases}
            x = \frac{f}{2^{\frac{k}{2}+2}},\\
            {y = -\frac{b-c}{2a}+\frac{f}{2} + \frac{b-c}{2^{\frac{k}{2}+1}a}.}
        \end{dcases}
    \end{align*}
    Then, we observe the following facts: if $x \ge y$, $q_S^{(k)} - q_S^{(k+1)} = \max\{ y, 0 \} \ge 0$; otherwise, $q_S^{(k)} - q_S^{(k+1)} = \max\{ x, x-y \} \ge 0$ since $x \ge 0$. Thus, $q_S^{(k)} - q_S^{(k+1)} \ge 0$ always holds when assuming $k$ is even. \ly{When $k$ is odd, $q_S^{(k)} - q_S^{(k+1)} \ge 0$ also holds, which can be established by similar techniques.}
 
    Combining {both} cases, we conclude that $q_S^{(k)}$ is weakly decreasing in $k$. The upper and lower bounds of $q_S^{(k)}$ are then straightforward by letting $k=0$ and $k \to \infty$, respectively. 

    The monotonicity and bounds of $q_B^{(k)}$ can be established similarly. 
\end{proof}

\ly{Intuitively, as the rationality level $k$ increases, the self-interested firm (strategic agent) reduces production to exercise market power while the benevolent social planner expands production to help improve social welfare.}

In a static game, two suppliers may {have} different levels of rationality. We denote the rationality levels of the self-interested firm $S$ and the benevolent social planner $B$ by $k$ and $(k+\Delta)$, respectively, where $k \in \mathbb{N}$ and $\Delta \in \mathbb{Z}_{[-k, \infty)}$ \footnote{Here, the shorthand $\mathbb{Z}_{[-k, \infty)}$ is the set of all integers greater than or equal to $-k$.}. Next, we take the social planner's perspective and fix $k$ to investigate the relationship between its level of rationality and the resulting social welfare. We also refer to this social welfare as the \textit{level-k performance}. For ease of derivation, we introduce the following lemma to equivalently characterize the social welfare performance. 

\begin{lemma}\label{lemma: equivalent social welfare measure}
    The social welfare $W(q_S, q_B)$ is strictly decreasing with {$d(q_S, q_B) = \big|(b-c)/a-q_S-q_B\big|$}.
\end{lemma}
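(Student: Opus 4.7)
The plan is to show that $W$ depends on $(q_S, q_B)$ only through the aggregate quantity $Q = q_S + q_B$, and that it is a concave quadratic in $Q$ whose unique maximizer is exactly $(b-c)/a$, the inner argument of $d$. Once this is established, completing the square will turn $W$ into $-\frac{a}{2} d^2 + \text{const}$, which gives the desired strict monotonicity immediately.

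First, I would substitute $q_D = q_S + q_B$ into the definition $W(q_S, q_B) = u(q_D) - c q_S - c q_B$ and expand the quadratic utility. Setting $Q := q_S + q_B$, this collapses to
\begin{align*}
W(q_S, q_B) = -\tfrac{a}{2} Q^2 + (b-c) Q + m,
\end{align*}
so $W$ depends on $(q_S, q_B)$ only through $Q$.

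Next, I would complete the square around $Q = (b-c)/a$, yielding
\begin{align*}
W(q_S, q_B) = -\tfrac{a}{2}\bigl(Q - (b-c)/a\bigr)^2 + \tfrac{(b-c)^2}{2a} + m = -\tfrac{a}{2}\, d(q_S, q_B)^2 + \tfrac{(b-c)^2}{2a} + m.
\end{align*}
Since $a > 0$, the map $d \mapsto -\frac{a}{2} d^2 + \text{const}$ is strictly decreasing on $[0, \infty)$, and $d(q_S, q_B) \ge 0$ by definition, so $W$ is strictly decreasing in $d$.

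There is no real obstacle here; the only thing to be a little careful about is verifying that the constant term after completing the square is independent of $(q_S, q_B)$ (so that strict monotonicity in $d$ is genuinely global, not just local), which is immediate from the display above. The argument is a one-line observation once $W$ is rewritten as a quadratic in the aggregate $Q$.
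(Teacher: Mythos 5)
Your proof is correct and follows essentially the same route as the paper: both rewrite $W$ as $-\tfrac{a}{2}d^2 + \tfrac{(b-c)^2}{2a} + m$ by completing the square in the aggregate quantity $Q = q_S + q_B$ and conclude strict monotonicity from $a>0$ and $d \ge 0$. Your version simply spells out the intermediate algebra that the paper leaves implicit.
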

\begin{proof}
    With {$d(q_S, q_B) = \big|(b-c)/a-q_S-q_B\big|$}, we can rewrite social welfare as follows, {
    \begin{align*}
        W(q_S,q_B) &= -d^2(q_S, q_B) \cdot a/2 + (b-c)^2/2a + m.
    \end{align*}}
    Hence, $W(q_S, q_B)$ is strictly decreasing with $d(q_S, q_B)$ since $d(q_S, q_B) \ge 0$. 
\end{proof}

{Unlike} the {classical} Cournot competition literature with fully rational agents, {$(b-c)/a-q_S-q_B$} might be negative when one agent is not one level more rational than the other, since no one is best responding to a consistent belief. Note that if $d=0$, the best possible social welfare is achieved, where the market price is equal to the cost per unit. We call it the \textit{optimal social welfare}\footnote{{This is also known as the competitive equilibrium, which can be established by the first welfare theorem \cite{microeconomic}.}}. We denote the corresponding total quantity produced as {$q_S^{SO} + q_B^{SO} = (b-c)/a$} and call it the \textit{optimal social total production}. {Given some belief about firm $S$'s production level ($q_S$), the social planner will push the total quantity ($q_S+q_B$) to the optimal social total production.}  

{Next}, we consider the following question: if the level of the self-interested firm is known, is it always better for the benevolent social planner to have a high level of rationality? The following theorem presents the answer.

\begin{theorem}\label{thm: monotonicity of performance}
    For any fixed $k \in \mathbb{N}$, the level-k performance $W(q_S^{(k)}, q_B^{(k+\Delta)})$ is (weakly) increasing in $\Delta \in \mathbb{Z}_{[-k, 1]}$ and (weakly) decreasing in $\Delta \in \mathbb{Z}_{[1, \infty)}$.
\end{theorem}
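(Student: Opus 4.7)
The plan is to reduce the claim to a one-dimensional monotonicity statement about the total produced quantity $T(\Delta) := q_S^{(k)} + q_B^{(k+\Delta)}$ and its distance from the socially optimal total $T^\star := (b-c)/a$. By Lemma \ref{lemma: equivalent social welfare measure}, welfare is strictly decreasing in $d(\Delta) := |T^\star - T(\Delta)|$, so it suffices to show that $d(\Delta)$ is (weakly) non-increasing on $\mathbb{Z}_{[-k,1]}$ and (weakly) non-decreasing on $\mathbb{Z}_{[1,\infty)}$. Since $q_S^{(k)}$ is fixed, all the action happens through the behavior of $q_B^{(k+\Delta)}$; by Lemma \ref{lemma: monotonicity}, $q_B^{(k+\Delta)}$ is (weakly) increasing in $\Delta$, so $T(\Delta)$ is (weakly) increasing in $\Delta$. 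The entire proof then hinges on locating where $T(\Delta)$ crosses $T^\star$.

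\textbf{Key step: showing $T(1)\le T^\star$.} At $\Delta=1$ firm $B$ best-responds to the actual level-$k$ strategy of firm $S$, so by Lemma \ref{lemma: best response functions},
\begin{equation*}
q_B^{(k+1)} = \mathcal{B}_B(q_S^{(k)}) = \min\!\Bigl\{\,(b-c-aq_S^{(k)})/a,\; f\,\Bigr\},
\end{equation*}
where the outer $\max$ with $0$ is inactive because $q_S^{(k)} \le (b-c)/(2a) < (b-c)/a$ (Lemma \ref{lemma: monotonicity}). Hence $q_B^{(k+1)} \le (b-c-aq_S^{(k)})/a$, which immediately gives $T(1) \le T^\star$, with equality iff the capacity $f$ is not binding.

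\textbf{The $\Delta\le 1$ branch.} Because $T(\Delta)$ is non-decreasing in $\Delta$ and $T(1)\le T^\star$, we have $T(\Delta)\le T^\star$ for all $\Delta\in\mathbb{Z}_{[-k,1]}$, so $d(\Delta) = T^\star - T(\Delta)$ is non-increasing on this range. This half is essentially immediate from Lemma \ref{lemma: monotonicity} once the key step above is in place (and it covers the boundary case $\Delta=-k$ where $q_B^{(0)}=f/2$ is a primitive, not a best response).

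\textbf{The $\Delta\ge 1$ branch (main obstacle).} This is the harder part because we need $T(\Delta)\ge T^\star$, not just $T(\Delta)\ge T(1)$. I split on whether the capacity $f$ binds at $\Delta=1$. \emph{(Case A, non-binding:)} $(b-c-aq_S^{(k)})/a \le f$, so $T(1)=T^\star$. For any $\Delta\ge 1$, $q_S^{(k+\Delta-1)}\le q_S^{(k)}$ (Lemma \ref{lemma: monotonicity}), so firm $B$'s desired best response $(b-c-aq_S^{(k+\Delta-1)})/a$ is at least $(b-c-aq_S^{(k)})/a$. Whether or not the cap binds, one checks that $q_B^{(k+\Delta)}\ge (b-c-aq_S^{(k)})/a$, hence $T(\Delta)\ge T^\star$, and in fact $T(\Delta)=T^\star + \min\{q_S^{(k)}-q_S^{(k+\Delta-1)},\,f-(b-c-aq_S^{(k)})/a\}$, which is non-decreasing in $\Delta$ by Lemma \ref{lemma: monotonicity}. \emph{(Case B, binding:)} $(b-c-aq_S^{(k)})/a > f$. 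Then for every $\Delta\ge 1$, $q_S^{(k+\Delta-1)}\le q_S^{(k)}$ forces $(b-c-aq_S^{(k+\Delta-1)})/a > f$ as well, so $q_B^{(k+\Delta)}=f$ and $T(\Delta)=q_S^{(k)}+f=T(1)\le T^\star$ is \emph{constant} in $\Delta$; thus $d(\Delta)$ is constant, which is trivially non-decreasing.

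Combining the two branches gives the theorem. The main difficulty is the $\Delta\ge 1$ analysis: a naive attempt to argue directly that $d(\Delta)$ is non-decreasing from the monotonicity of $T(\Delta)$ fails, because when $f$ binds at $\Delta=1$ one has $T(1)<T^\star$ and it is not a priori clear that $T(\Delta)$ cannot grow past $T(1)$ toward $T^\star$; the structural observation that in the binding case $f$ continues to bind for all larger $\Delta$ is what resolves this.
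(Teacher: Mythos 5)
Your proof is correct. The increasing branch on $\mathbb{Z}_{[-k,1]}$ follows essentially the paper's route: establish $q_S^{(k)}+q_B^{(k+1)}\le (b-c)/a$ and then combine the monotonicity of $q_B^{(\cdot)}$ (Lemma \ref{lemma: monotonicity}) with Lemma \ref{lemma: equivalent social welfare measure}. Where you genuinely diverge is the decreasing branch on $\mathbb{Z}_{[1,\infty)}$. The paper argues by contradiction: it supposes welfare increases somewhere past $\Delta=1$, squares the resulting distance inequality, and in each of two sign cases contradicts the maximality of $W(q_S^{(k)},q_B^{(k+1)})$. You instead give a direct structural argument, writing $q_B^{(k+\Delta)}=\mathcal{B}_B(q_S^{(k+\Delta-1)})=\min\bigl\{(b-c-aq_S^{(k+\Delta-1)})/a,\,f\bigr\}$ and splitting on whether the capacity binds at $\Delta=1$: in the non-binding case your $T(\Delta)-T^\star=\min\bigl\{q_S^{(k)}-q_S^{(k+\Delta-1)},\,f-(b-c-aq_S^{(k)})/a\bigr\}$ is nonnegative and non-decreasing, and in the binding case the cap binds for every $\Delta\ge 1$ so $T(\Delta)$ is constant. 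Your version is more informative: it identifies exactly when the welfare is flat versus strictly decreasing for $\Delta\ge 1$ (flat precisely when the line constraint binds), which is the structural fact the paper's contradiction argument never exposes and which accounts for the plateaus in its numerical plots. The paper's argument is leaner in its inputs --- it needs only the monotonicity lemma and the optimality of $\Delta=1$, not the explicit best-response formula of Lemma \ref{lemma: best response functions} --- but it delivers only the bare monotonicity claim. Both treatments correctly cover the boundary case $\Delta=-k$, where $q_B^{(0)}=f/2$ is a primitive rather than a best response, since that branch uses only the monotonicity of $q_B^{(j)}$ in $j$.
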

\begin{proof}
    First of all, social welfare $W(q_S^{(k)}, q_B^{(k+1)})$ is maximal among all $W(q_S^{(k)}, q_B^{(k+\Delta)})$ with $\Delta \in \mathbb{N}_{[-k,\infty)}$, since by definition the level-$(k+1)$ firm $B$ is a social welfare maximizer and best responds to the level-$k$ firm $S$. Due to Lemma \ref{lemma: equivalent social welfare measure}, the corresponding total quantity produced should be no larger than {$(b-c)/a$}, i.e., {$q_S^{(k)}+q_B^{(k+1)} \le (b-c)/a$}. Since $q_B^{(k)}$ is increasing in $k$ by Lemma \ref{lemma: monotonicity}, we have {$d(q_S^{(k)},q_B^{(k+\Delta)})=(b-c)/a-(q_S^{(k)}+q_B^{(k+\Delta)})$} for any $\Delta \in \mathbb{Z}_{[-k,0]}$. Then it can be seen that
    \begin{align*}
        d(q_S^{(k)},q_B^{(k+\Delta+1)}) \le d(q_S^{(k)},q_B^{(k+\Delta)}), \quad\forall \Delta \in \mathbb{Z}_{[-k,0]}.
    \end{align*}
    By Lemma \ref{lemma: equivalent social welfare measure}, social welfare increases in $\Delta \in \mathbb{Z}_{[-k, 1]}$. 

    For the case of $\Delta \in \mathbb{Z}_{[1,\infty)}$, we prove by contradiction. Assume there exists a $\Delta \in \mathbb{Z}_{[1,\infty)}$ such that $W(q_S^{(k)}, q_B^{(k+\Delta+1)}) > W(q_S^{(k)}, q_B^{(k+\Delta)})$. \ly{Then, by Lemma \ref{lemma: equivalent social welfare measure}, we have $d(q_S^{(k)}, q_B^{(k+\Delta+1)}) < d(q_S^{(k)}, q_B^{(k+\Delta)})$, which implies the following by squaring both sides of the inequality,
    \begin{align*}
        (q_B^{(k+\Delta+1)} \!-\! q_B^{(k+\Delta)}) \!\!\left(\! 2q_S^{(k)} \!+\! q_B^{(k+\Delta+1)} \!+\! q_B^{(k+\Delta)} \!-\! 2\frac{b\!-\!c}{a} \!\right)\! < 0.
    \end{align*}
    By the monotonicity of $q_B^{(k)}$, we apply $q_B^{(k+\Delta+1)} > q_B^{(k+\Delta)}$ to the preceding inequality and have
    }
    
    \begin{align}\label{eq: temp1}
        \left( q_S^{(k)}+q_B^{(k+\Delta+1)} \right) + \left( q_S^{(k)}+q_B^{(k+\Delta)} \right) < {2\frac{b-c}{a}},
    \end{align}
    which indicates $q_S^{(k)}+q_B^{(k+\Delta)} < {(b-c)/a}$. Then, two cases may occur as follows,
    
    i) if $q_S^{(k)}+q_B^{(k+\Delta+1)} < {(b-c)/a}$: we observe that
    \begin{align*}
        d(q_S^{(k)}, q_B^{(k+\Delta+1)}) < d(q_S^{(k)}, q_B^{(k+\Delta)}) \le d(q_S^{(k)}, q_B^{(k+1)}),
    \end{align*}
    since $q_B^{(k+\Delta+1)} > q_B^{(k+\Delta)}$ and $q_B^{(k)}$ is increasing. By Lemma \ref{lemma: equivalent social welfare measure}, we deduce that $W(q_S^{(k)}, q_B^{(k+\Delta+1)}) > W(q_S^{(k)}, q_B^{(k+1)})$. However, it contradicts the fact that $W(q_S^{(k)}, q_B^{(k+1)})$ is the maximum among $W(q_S^{(k)}, q_B^{(k+\Delta)})$ for any $\Delta \in \mathbb{Z}_{[-k,\infty)}$.

    ii) if $q_S^{(k)}+q_B^{(k+\Delta+1)} > {(b-c)/a}$: then the inequality \eqref{eq: temp1} shows
    \begin{align*}
        0<\left( q_S^{(k)}+q_B^{(k+\Delta+1)} \right) \!-\! {\frac{b-c}{a}}  < {\frac{b-c}{a}} \!-\! \left( q_S^{(k)}+q_B^{(k+\Delta)} \right).
    \end{align*}
    Equivalently, we have
    \begin{align*}
        d(q_S^{(k)}, q_B^{(k+\Delta+1)}) < d(q_S^{(k)}, q_B^{(k+\Delta)}) \le d(q_S^{(k)}, q_B^{(k+1)}),
    \end{align*}
    since $q_S^{(k)}+q_B^{(k+\Delta)} < {(b-c)/a}$ and $q_B^{(k)}$ is increasing. Similarly, it contradicts $W(q_S^{(k)}, q_B^{(k+1)})$ being maximal and so $d(q_S^{(k)}, q_B^{(k+1)})$ being minimal. Therefore, we conclude that $W(q_S^{(k)}, q_B^{(k+\Delta)})$ is weakly decreasing in $\Delta \in \mathbb{Z}_{[1,\infty)}$.
\end{proof}

Given that the self-interested firm is of level-$k$, the preceding result shows that social welfare will decrease as the rationality level of the benevolent social planner moves away from $k+1$, {regardless of whether} it becomes more or less rational. While a first impression may suggest that the more rational a player is (higher level $k$), the better its payoff will be, this is not true here. Instead, the best payoff is achieved with exactly one level of rationality higher than the opponent.  

\begin{corollary}
    For any fixed $k \in \mathbb{N}$, if $\Delta<1$, the total produced quantity is no more than the optimal social  total production. If $\Delta>1$, the total produced quantity is either the same as that of $\Delta=1$ or higher than the optimal social  total production.
\end{corollary}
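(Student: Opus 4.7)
The plan is to read this corollary as two separate statements about the total production $q_S^{(k)}+q_B^{(k+\Delta)}$ relative to the optimal social total production $(b-c)/a$, and to reuse two facts already in hand: the monotonicity of $q_B^{(k)}$ in $k$ from Lemma \ref{lemma: monotonicity}, and the maximality of $W(q_S^{(k)}, q_B^{(k+1)})$ that drives the proof of Theorem \ref{thm: monotonicity of performance}.

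For the case $\Delta < 1$, i.e.\ $\Delta \in \mathbb{Z}_{[-k,0]}$, I would simply extract an identity that is implicit in the proof of Theorem \ref{thm: monotonicity of performance}: since $q_B^{(k)}$ is (weakly) increasing, $q_B^{(k+\Delta)} \le q_B^{(k+1)}$, and combined with $q_S^{(k)}+q_B^{(k+1)} \le (b-c)/a$ (which follows because the level-$(k{+}1)$ social planner best-responds to firm $S$ to push toward the optimal social total production but can never overshoot it), we immediately get $q_S^{(k)}+q_B^{(k+\Delta)} \le (b-c)/a$. This case is essentially a corollary of Theorem \ref{thm: monotonicity of performance}'s first half, so the writeup should be short.

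For the case $\Delta > 1$, I would dichotomize on whether $q_B^{(k+\Delta)} = q_B^{(k+1)}$. By Lemma \ref{lemma: monotonicity}, these are the only possibilities: either the social planner's production has saturated at the same value as at level $k{+}1$, giving the same total, or it has strictly increased. In the latter case I would argue by contradiction: if $q_S^{(k)} + q_B^{(k+\Delta)} < (b-c)/a$, then since $q_B^{(k+\Delta)} > q_B^{(k+1)}$ and $q_S^{(k)} + q_B^{(k+1)} \le (b-c)/a$, both distances satisfy $d = (b-c)/a - q_S^{(k)} - q_B^{(\cdot)}$ in their unsigned form, giving $d(q_S^{(k)}, q_B^{(k+\Delta)}) < d(q_S^{(k)}, q_B^{(k+1)})$. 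By Lemma \ref{lemma: equivalent social welfare measure}, this yields $W(q_S^{(k)}, q_B^{(k+\Delta)}) > W(q_S^{(k)}, q_B^{(k+1)})$, contradicting the maximality established in Theorem \ref{thm: monotonicity of performance}.

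The main obstacle, in my view, is not a deep estimate but a careful case separation: because the $\max$/$\min$ truncations in Lemma \ref{lemma: closed-form level k behaviors} can cause $q_B^{(k+\Delta)}$ either to stall at the same value as $q_B^{(k+1)}$ (in particular when the capacity $f$ binds) or to keep increasing toward $f$, I must be explicit that ``same as $\Delta=1$'' covers the stalled subcase and ``$\ge (b-c)/a$'' covers the strictly-increasing subcase. Once this bookkeeping is set up cleanly, both parts of the statement follow from the contradiction argument above together with the monotonicity bounds already proved.
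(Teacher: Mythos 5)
Your proposal is correct and follows essentially the same route as the paper: the $\Delta<1$ case via monotonicity of $q_B^{(k)}$ together with $q_S^{(k)}+q_B^{(k+1)}\le (b-c)/a$, and the $\Delta>1$ case by contradiction through Lemma \ref{lemma: equivalent social welfare measure} and the maximality at $\Delta=1$ from Theorem \ref{thm: monotonicity of performance}. Your explicit dichotomy on whether $q_B^{(k+\Delta)}=q_B^{(k+1)}$ is only a cosmetic repackaging of the paper's observation that $q_S^{(k)}+q_B^{(k+\Delta)}\ge q_S^{(k)}+q_B^{(k+1)}$ always holds.
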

\begin{proof}
    Since by definition {the} level-$(k+1)$ firm $B$ best responds to the level-$k$ firm $S$ and maximizes social welfare, the corresponding total produced quantity is either equal to the {optimal social total production} or smaller due to flow capacity {constraints}. For $\Delta<1$, we have 
    $$q_S^{(k)}+q_B^{(k+\Delta)} \le {(b-c)/a} = q_S^{SO}+q_B^{SO},$$
    based on the monotonicity of $q_B^{(k)}$ as shown in Lemma \ref{lemma: monotonicity}. 
    
    Next, we prove the case of $\Delta>1$ by contradiction. Since $q_B^{(k)}$ is increasing in $k$, we first see that $q_S^{(k)}+q_B^{(k+\Delta)} \ge q_S^{(k)}+q_B^{(k+1)}$ always holds. Thus, we assume there exists a $\Delta>1$ such that
    $$q_S^{(k)}+q_B^{(k+1)} < q_S^{(k)}+q_B^{(k+\Delta)} \le q_S^{SO}+q_B^{SO}.$$
    By Lemma \ref{lemma: equivalent social welfare measure}, we then have $d(q_S^{(k)},q_B^{(k+\Delta)}) < d(q_S^{(k)},q_B^{(k+1)})$ and \ly{thus} $W(q_S^{(k)},q_B^{(k+\Delta)}) > W(q_S^{(k)},q_B^{(k+1)})$. However, {this} contradicts Theorem \ref{thm: monotonicity of performance}, which shows that $W(q_S^{(k)}, q_B^{(k+\Delta)})$ is decreasing in $\Delta \in \mathbb{Z}_{[1, \infty)}$.
\end{proof}
When the benevolent social planner is less rational than the self-interested firm, then compared to the outcome with $\Delta=1$, the consumer faces a higher market price {and} achieves a {lower} utility, but the cost of production will be reduced. On the other hand, if the benevolent social planner is more rational than the self-interested firm, then the consumer benefits from the social planner and a higher utility may be achieved. Meanwhile, the drawback is that the social planner's production cost will be increased. Hence, we observe that the planner being less rational could be beneficial to the supply side and being more rational could be beneficial to the consumer side. 

{To} achieve good social welfare performance, the benevolent social planner needs to make a tradeoff between the consumer's utility and production cost. We will further discuss different strategies for the benevolent social planner to secure or improve social welfare in the next section.

\section{Comparison of Bounded Rational and Fully Rational Outcomes}
Next, we compare the outcomes achieved by bounded rational agents and fully rational agents (i.e., Nash equilibrium). We refer to the social welfare at the Nash equilibrium  in the fully rational setting as \textit{equilibrium performance} and denote it by $W^{NE}$. For convenience, we define {the} \textit{price of rationality (PoR)} as follows, 
\begin{align*}
    PoR(k, \Delta) = \frac{W^{NE}}{W(q_S^{(k)}, q_B^{(k+\Delta)})}.
\end{align*}
{{If $PoR$ is greater than 1}, the level-$k$ performance is worse than the equilibrium performance, and thus the fully rational Nash equilibrium analysis results in an overestimated social welfare; {if $PoR$ is less than 1}, the equilibrium performance is worse, meaning that the social welfare is underestimated by the NE solution concept when the actual outcome is generated by level-$k$ players; if $PoR$ {equals 1}, the level-\textit{k} performance is modeled accurately by {the} Nash outcome, though their strategy profiles may still be different. We first characterize the Nash equilibrium and the corresponding social welfare in the following lemma.}

\begin{lemma}\label{lemma: NE strategy profile}
    The Nash equilibrium strategy profile is
    $$(q_S^{NE}, q_B^{NE}) = {\Big( (b-c)/2a-f/2, f \Big),}$$
    with the equilibrium performance $W^{NE} = W(q_S^{NE}, q_B^{NE})$.
\end{lemma}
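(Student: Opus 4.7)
The plan is to verify the claimed profile by checking the mutual best-response fixed-point condition using Lemma~\ref{lemma: best response functions}, and then to argue uniqueness through a short case analysis on which bound constraints bind.

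First, I would substitute $q_B^{NE}=f$ into $\mathcal{B}_S$. The standing assumption $f\le(b-c)/a$ gives $(b-c-af)/(2a)\ge 0$, so the max in $\mathcal{B}_S(f)$ collapses to $(b-c)/(2a)-f/2$, matching $q_S^{NE}$. Plugging this $q_S^{NE}$ into $\mathcal{B}_B$ then yields $\max\{\min\{(b-c)/(2a)+f/2,\,f\},\,0\}$; the same inequality $f\le(b-c)/a$ forces $(b-c)/(2a)+f/2\ge f$, so the inner min returns $f$ and the outer max is redundant. This closes the fixed-point verification.

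Second, I would establish uniqueness by exhausting the piecewise regions of the two best-response maps. A fully interior solution would satisfy $q_S=(b-c-aq_B)/(2a)$ and $q_B=(b-c-aq_S)/a$ simultaneously; eliminating one variable forces $q_S=0$ and $q_B=(b-c)/a$, which is feasible only on the boundary $f=(b-c)/a$ and there coincides with the claimed profile. The candidate $q_B=0$ would require $q_S\ge(b-c)/a$ from $\mathcal{B}_B$, contradicting $q_S=\mathcal{B}_S(0)=(b-c)/(2a)$. The candidate $q_S=0$ collapses to the same boundary situation as the interior case. The only remaining consistent configuration is $q_B=f$ paired with an interior $q_S$, which reproduces the claimed profile. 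Once the strategy profile is pinned down, the equilibrium performance follows immediately by direct substitution into $W(q_S,q_B)=u(q_S+q_B)-cq_S-cq_B$.

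I do not anticipate a significant obstacle here: the best responses are piecewise linear and, under $f\le(b-c)/a$, only the cap $q_B\le f$ can bind nontrivially, so the case analysis is short and mechanical. The main care needed is to invoke the assumption $f\le(b-c)/a$ at every step so that each $\max$ and $\min$ simplifies correctly; beyond this, the argument is pure algebraic verification.
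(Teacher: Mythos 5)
Your proposal is correct and follows essentially the same route as the paper, which simply invokes the fixed-point characterization $q_S^*=\mathcal{B}_S(q_B^*)$, $q_B^*=\mathcal{B}_B(q_S^*)$ with the best responses from Lemma~\ref{lemma: best response functions}. Your additional case analysis establishing uniqueness is a sound bonus that the paper's one-line argument omits, but it does not change the underlying approach.
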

To establish the lemma, we use the definition of Nash equilibrium, where $q_S^* = \mathcal{B}_S(q_B^*)$ and $q_B^* = \mathcal{B}_B(q_S^*)$. 

We see that the level-$k$ behaviors of both firms approach the Nash equilibrium as $k$ becomes large, i.e., $\lim_{k \to \infty} q_i^{(k)} = q_i^{NE}$ with $i\in \{S,P\}$. The reason behind this is that the players get closer to being fully rational as their level of rationality becomes higher. Thus, an alternative way to interpret Nash equilibrium is the level-$\infty$ behavior.

\begin{theorem}
    For any {$k \in \mathbb{N}$ and $\Delta \in \mathbb{Z}_{[-k,0)}$}, the price of rationality is {greater than or equal to 1}, i.e.,
    $$PoR(k, k+\Delta) \ge {1}, \quad \text{for } \forall \Delta \in \mathbb{Z}_{[-k,0)}.$$
    Equivalently, the level-$k$ performance is no better than the equilibrium performance. 
\end{theorem}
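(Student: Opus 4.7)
The plan is to translate the claim $PoR(k,k+\Delta)\ge 1$ into a statement about total produced quantity, using Lemma~\ref{lemma: equivalent social welfare measure}, and then show that if the planner is at least one level less rational than the self-interested firm, the total production sits below (or equals) the Nash total production while still not overshooting the efficient level $(b-c)/a$. First, I would record that by Lemma~\ref{lemma: NE strategy profile} the Nash total is $q_S^{NE}+q_B^{NE}=(b-c)/(2a)+f/2$, so $d(q_S^{NE},q_B^{NE})=(b-c)/(2a)-f/2\ge 0$ (using $f\le (b-c)/a$). The theorem is then equivalent to $d(q_S^{(k)},q_B^{(k+\Delta)})\ge (b-c)/(2a)-f/2$ for every $\Delta\in\mathbb{Z}_{[-k,-1]}$. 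Note that the range $\mathbb{Z}_{[-k,0)}$ is non-empty only when $k\ge 1$, so I can freely use $k-1\ge 0$.

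The key step is a clean algebraic identity coming from the linearity of $\mathcal{B}_S$ in the interior. Since $q_B^{(k-1)}\le f\le (b-c)/a$ (by Lemma~\ref{lemma: monotonicity}), Lemma~\ref{lemma: best response functions} gives $q_S^{(k)}=\mathcal{B}_S(q_B^{(k-1)})=(b-c)/(2a)-q_B^{(k-1)}/2$, which rearranges to
\begin{align*}
q_S^{(k)}-q_S^{NE}=\tfrac{1}{2}\bigl(f-q_B^{(k-1)}\bigr)\ge 0.
\end{align*}
Because $\Delta\le -1$ we have $k+\Delta\le k-1$, and the monotonicity of $q_B^{(\cdot)}$ in Lemma~\ref{lemma: monotonicity} yields $q_B^{(k+\Delta)}\le q_B^{(k-1)}\le f$. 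Hence
\begin{align*}
q_B^{NE}-q_B^{(k+\Delta)}=f-q_B^{(k+\Delta)}\ge f-q_B^{(k-1)}\ge \tfrac{1}{2}\bigl(f-q_B^{(k-1)}\bigr)=q_S^{(k)}-q_S^{NE}.
\end{align*}
Rearranging gives $q_S^{(k)}+q_B^{(k+\Delta)}\le q_S^{NE}+q_B^{NE}\le (b-c)/a$, so $d(q_S^{(k)},q_B^{(k+\Delta)})=(b-c)/a-q_S^{(k)}-q_B^{(k+\Delta)}\ge (b-c)/(2a)-f/2=d(q_S^{NE},q_B^{NE})$. Applying Lemma~\ref{lemma: equivalent social welfare measure} delivers $W(q_S^{(k)},q_B^{(k+\Delta)})\le W^{NE}$, i.e., $PoR(k,k+\Delta)\ge 1$.

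The main obstacle is spotting the identity $q_S^{(k)}-q_S^{NE}=(f-q_B^{(k-1)})/2$: the outer $\max$'s in Lemma~\ref{lemma: closed-form level k behaviors} make a direct case analysis on even/odd $k$ tedious, but because $q_B^{(k-1)}$ is automatically feasible for $\mathcal{B}_S$, one avoids all case splits and reduces the proof to comparing $q_B^{NE}-q_B^{(k+\Delta)}$ with half of $f-q_B^{(k-1)}$, which is immediate from the weak monotonicity already established for $q_B^{(\cdot)}$. Everything else is routine algebra plus invocations of Lemmas~\ref{lemma: best response functions}--\ref{lemma: equivalent social welfare measure} and the NE characterization.
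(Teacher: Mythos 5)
Your proof is correct, and it takes a genuinely different route from the paper's. The paper first isolates the case $\Delta=-1$, expands $q_S^{(k)}+q_B^{(k-1)}$ via the closed forms of Lemma~\ref{lemma: closed-form level k behaviors} into a $\max$/$\min$ expression (with a separate parity argument for odd $k$), bounds it by $\frac{x+y}{2}=\frac{b-c}{2a}+\frac{f}{2}$, and then invokes the monotonicity of the level-$k$ performance in $\Delta$ (Theorem~\ref{thm: monotonicity of performance}) to push the conclusion down to all $\Delta<0$. You instead observe that $q_B^{(k-1)}\le f\le (b-c)/a$ keeps $\mathcal{B}_S$ in its linear regime, so $q_S^{(k)}-q_S^{NE}=\frac{1}{2}\bigl(f-q_B^{(k-1)}\bigr)$ exactly, and then the weak monotonicity of $q_B^{(\cdot)}$ from Lemma~\ref{lemma: monotonicity} gives $q_B^{NE}-q_B^{(k+\Delta)}\ge f-q_B^{(k-1)}\ge q_S^{(k)}-q_S^{NE}$ for every $\Delta\le-1$ in one stroke. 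This buys you three things: no parity case split, no reliance on the explicit closed forms of Lemma~\ref{lemma: closed-form level k behaviors}, and no appeal to Theorem~\ref{thm: monotonicity of performance} --- the only inputs are Lemmas~\ref{lemma: best response functions}, \ref{lemma: monotonicity}, \ref{lemma: equivalent social welfare measure} and the NE characterization. The paper's approach, by contrast, reuses machinery it has already built (the closed forms and the $\Delta$-monotonicity theorem), so it is shorter to state given that context but less self-contained. All the supporting inequalities you use check out: the sign condition needed for $d(q_S^{(k)},q_B^{(k+\Delta)})=(b-c)/a-q_S^{(k)}-q_B^{(k+\Delta)}$ follows from your bound $q_S^{(k)}+q_B^{(k+\Delta)}\le q_S^{NE}+q_B^{NE}\le (b-c)/a$, and your remark that $\mathbb{Z}_{[-k,0)}$ is nonempty only for $k\ge 1$ correctly guarantees that $q_B^{(k-1)}$ is defined.
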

\begin{proof}
    We first let $\Delta=-1$ and prove the statement. For any $k \in \mathbb{N}$ that is even, we have 
    \begin{align*}
        q_S^{(k)} + q_B^{(k-1)} = \max\left\{ \min\left\{ x, y \right\}, \min\left\{ z, \frac{x+y}{2} \right\} \right\},
    \end{align*}
    where 
    \begin{align*}
        \begin{dcases}
            x = \frac{(2^{\frac{k}{2}+1}-1)(b-c)}{2^{\frac{k}{2}+1}a},\\
            y = \frac{b-c}{2^{\frac{k}{2}+1}a}+f,\\
            z = \frac{b-c}{2a}+\frac{(2^{\frac{k}{2}}-1)(b-c)}{2^{\frac{k}{2}}a} - \frac{f}{2}.
        \end{dcases}
    \end{align*}
    Since $\min\{ x,y \} \le (x+y)/2$ and $\min\{ z,(x+y)/2 \} \le (x+y)/2$, we have the following observation:
    $$q_S^{(k)} + q_B^{(k-1)} \le \frac{x+y}{2} = {\frac{b-c}{2a}+\frac{f}{2}}.$$
    For comparison, in the fully rational setting, the total produced quantity at Nash equilibrium is
    $$q_S^{NE} + q_B^{NE} = {\frac{b-c}{2a}+\frac{f}{2}} \ge q_S^{(k)} + q_B^{(k-1)}.$$
    With the assumption of {$f \le (b-c)/a$}, we conclude that
    $$d(q_S^{(k)}, q_B^{(k-1)}) \ge d(q_S^{NE}, q_B^{NE}),$$
    which implies $W(q_S^{(k)}, q_B^{(k-1)}) \le W^{NE}$ by Lemma \ref{lemma: equivalent social welfare measure}. \ly{For any $k \in \mathbb{N}$ that is odd, we can also show $W(q_S^{(k)}, q_B^{(k-1)}) \le W^{NE}$ by using similar techniques.} Based on Theorem \ref{thm: monotonicity of performance}, we conclude that for any $\Delta<0$, there holds
    \begin{align*}
        W(q_S^{(k)}, q_B^{(k+\Delta)}) \le W(q_S^{(k)}, q_B^{(k-1)}) \le W^{NE}.
    \end{align*}
\end{proof}

This result implies that the social welfare performance is no better than the equilibrium performance, i.e., {$PoR\geq 1$}, if the benevolent social planner has a lower level of rationality than the self-interested firm. 
On the other hand, if the benevolent firm has an equal or higher level of rationality, the following theorem shows that the price of rationality can be {less than 1} under some circumstances.

\begin{theorem}\label{thm: level-k NE comparison}
    For any $k,\Delta \in \mathbb{N}$ with $k+\Delta > 0$, the price of rationality is strictly less than 1, i.e.,
    $$PoR(k, k+\Delta) < 1,$$
    if and only if the following conditions hold,
    \begin{enumerate}
        \item when $k$ is even: 
        \begin{align*}
            \frac{f}{(b-c)/a}\in
            \begin{dcases}
                (1-\frac{1}{2^{\frac{k}{2}}}, 1), \quad\text{if } \Delta = 0,1;\\
                (1-\frac{1}{2^{\frac{k}{2}}}, 1-\frac{1}{3\cdot2^{\frac{k}{2}}}), \quad\text{if } \Delta \ge 2.
            \end{dcases}
        \end{align*}
        \item when $k$ is odd: 
            \begin{align*}
                \frac{f}{(b-c)/a} \!\in\!
                \begin{dcases}
                    (1 \!-\! \frac{1}{2^{\frac{k+1}{2}}-1}, 1 \!-\! \frac{1}{2^{\frac{k+1}{2}}+1}), \quad\text{if } \Delta = 0;\\
                    (1 \!-\! \frac{1}{2^{\frac{k+1}{2}}-1}, 1), \quad\text{if } \Delta = 1,2;\\
                    (1 \!-\! \frac{1}{2^{\frac{k+1}{2}}-1}, 1 \!-\! \frac{1}{3\cdot2^{\frac{k+1}{2}}\!-\!1}), \;\text{if } \Delta \ge 3.
                \end{dcases}
            \end{align*}
    \end{enumerate}
\end{theorem}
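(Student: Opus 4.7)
My plan is to translate the claim $PoR(k, k+\Delta) < 1$ into a distance inequality via Lemma 4, evaluate both sides in closed form using Lemmas 2 and 5, and then carry out a case analysis driven by the parity of $k$ and by which of the $\max/\min$ constraints in the level-$k$ formulas is active.

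First, by Lemma 4, $PoR(k, k+\Delta) < 1$ is equivalent to $d(q_S^{(k)}, q_B^{(k+\Delta)}) < d(q_S^{NE}, q_B^{NE})$. Using Lemma 5 together with $f \le (b-c)/a$, I obtain $d(q_S^{NE}, q_B^{NE}) = (b-c)/(2a) - f/2$, which is nonnegative and equals zero exactly when $f = (b-c)/a$. This immediately forces the strict upper bound $f/((b-c)/a) < 1$ that appears in every stated interval, since otherwise no strict improvement over $W^{NE}$ is possible.

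Next, I would observe that the thresholds in the theorem correspond exactly to the switching points of the $\max/\min$ operators in Lemma 2. For even $k$, the floor $(b-c)/(2a) - f/2$ in $q_S^{(k)}$ is active precisely when $f \le (b-c)/a \cdot (1 - 1/2^{k/2})$; above this threshold, $q_S^{(k)} = (b-c)/(2^{k/2+1}a)$ becomes constant in $f$. Analogous switching thresholds arise for $q_B^{(k+\Delta)}$, with the flow cap $q_B^{(k+\Delta)} = f$ activating for large $\Delta$. Within each resulting regime, $q_S^{(k)} + q_B^{(k+\Delta)}$ is an explicit affine function of $f$, so the sign of $(b-c)/a - q_S^{(k)} - q_B^{(k+\Delta)}$ (needed to resolve the absolute value in $d$) can be determined directly, and the inequality $d(q_S^{(k)}, q_B^{(k+\Delta)}) < (b-c)/(2a) - f/2$ becomes a linear inequality in $f$.

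Then I would systematically handle the cases. For $\Delta = 1$ with $k$ even, Theorem 1 already gives $q_S^{(k)} + q_B^{(k+1)} \le (b-c)/a$, so the distance inequality reduces to $q_S^{(k)} > (b-c)/(2a) - f/2$, which holds strictly iff the first branch wins in the $\max$ for $q_S^{(k)}$, producing the lower threshold $1 - 1/2^{k/2}$; together with $f < (b-c)/a$ this recovers the interval for $\Delta = 0, 1$. For $\Delta \ge 2$, the cap $q_B^{(k+\Delta)} = f$ typically activates, causing the total to overshoot $(b-c)/a$; the distance then becomes $q_S^{(k)} + f - (b-c)/a$, and requiring this to stay below $(b-c)/(2a) - f/2$ yields the tighter upper threshold $1 - 1/(3 \cdot 2^{k/2})$. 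The odd-$k$ cases follow by parallel computations, where the extra regime for $\Delta = 0$ (with the $\min$ in $q_B^{(k)}$ still binding) produces the split into $\Delta = 0$ versus $\Delta = 1, 2$.

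The main obstacle I anticipate is the bookkeeping of the absolute value: $q_S^{(k)} + q_B^{(k+\Delta)}$ may fall on either side of $(b-c)/a$ depending on the regime, so for each case I must separately verify the sign before inverting the inequality. Ensuring that the boundary values of $f$ at each threshold match the \emph{open} endpoints of the claimed intervals (rather than producing a $\le$ versus $<$ mismatch) is the other delicate point, and it is precisely where the strictness of $PoR < 1$ gets used.
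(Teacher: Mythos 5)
Your proposal follows essentially the same route as the paper's Appendix A proof: reduce $PoR<1$ via Lemma \ref{lemma: equivalent social welfare measure} to comparing $d(q_S^{(k)}, q_B^{(k+\Delta)})$ against $d(q_S^{NE}, q_B^{NE}) = (b-c)/2a - f/2$, express the level-$k$ totals as piecewise-affine functions of $f$ whose breakpoints are the switching points of the $\max/\min$ operators in Lemma \ref{lemma: closed-form level k behaviors}, and obtain the lower endpoints from those switching points and the upper endpoint for $\Delta\ge2$ from the condition that the overshoot past $(b-c)/a$ equals the equilibrium undershoot. The thresholds you derive for the even-$k$ cases match the paper's, and deferring the odd-$k$ bookkeeping to parallel computations mirrors what the paper itself does.
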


Therefore, if the rationality level of the benevolent social planner is equal {to} or higher than the self-interested firm, the achieved social welfare may be better than the equilibrium welfare, depending on the flow capacity limit. The proof is relegated to Appendix \ref{appendix: supplement}.

\section{Social Planner's Strategies}
In this section, we continue the study by assuming the self-interested firm acts under the level-$k$ reasoning model and has the belief that its opponent (benevolent social planner) also follows the level-$k$ reasoning. Under this assumption, we explore and study the strategies of the benevolent social planner. Specifically, we propose answers to the following questions: what is the best strategy to play if knowing or not knowing the rationality of the self-interested firm? Does it still make sense to commit to a level-$k$ thinking strategy? The answer varies with the amount of available information regarding the self-interested firm. Therefore, we propose three strategies under different scenarios. 

\subsection{Optimal strategy}
The best possible scenario is that the benevolent social planner has complete information about its opponent. That is, the self-interested firm {$S$'s} rationality level $k$ is known. By Theorem \ref{thm: monotonicity of performance}, the \textit{optimal strategy} for the benevolent social planner is to be exactly one level more rational than the other player, i.e., $q_B^* = q_B^{(k+1)}$. Note that even though the social planner does not necessarily need to follow the level-$k$ reasoning, its optimal strategy is still to behave as level-$(k+1)$, since it aims to maximize the social welfare and level-$(k+1)$ behavior best responds to level-$k$ behavior. 

\subsection{Expectation maximizing strategy}
In general, it may be {difficult} for the benevolent social planner to have {complete} information and know the exact rationality level of the self-interested firm. In this case, the optimal strategy cannot be achieved. {Here, following the classical Bayesian game setup, we suppose firm $S$'s level of rationality is a random variable and its probability distribution is known to the benevolent social planner.} We denote $\mathbb{P}(K=k)$ as the probability that firm $S$ is of level-$k$ and  $\sum_{k\in\mathbb{N}}\mathbb{P}(K=k) = 1$. Then, we define the \textit{stochastic strategy} to be the quantity produced by firm $B$ that maximizes the expected social welfare, i.e.,
$$q_B^{SS} \in \underset{0 \le q_B \le f}{\arg\max} \;\;\mathbb{E} \left[ W(q_S^{(K)}, q_B) \right].$$
The explicit solution is provided in the following theorem. 

\begin{theorem}
     The stochastic strategy is solved as 
    $${q_B^{SS} = \min\left\{ (b-c)/a-\mathbb{E}[q_S^{(K)}], f \right\}.}$$
\end{theorem}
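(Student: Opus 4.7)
The plan is to treat the expected social welfare as a concave quadratic in $q_B$ and then extract the maximizer via a first-order condition, with careful attention to the box constraint $0 \le q_B \le f$.

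First, I would substitute the quadratic utility into $W(q_S,q_B)$ to rewrite social welfare purely as a function of the total production, namely $W(q_S,q_B) = -\tfrac{a}{2}(q_S+q_B)^2 + (b-c)(q_S+q_B) + m$. Taking expectation over the random level $K$ and using linearity gives
\begin{align*}
    \mathbb{E}\left[W(q_S^{(K)},q_B)\right] = -\frac{a}{2}\,\mathbb{E}\!\left[(q_S^{(K)})^2\right] - a\,\mathbb{E}[q_S^{(K)}]\,q_B - \frac{a}{2}q_B^2 + (b-c)\,\mathbb{E}[q_S^{(K)}] + (b-c)q_B + m.
\end{align*}
Only the three terms depending on $q_B$ matter for optimization, and together they form a strictly concave quadratic in $q_B$ (since $a>0$).

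Next, I would apply the first-order condition on the unconstrained problem: differentiating with respect to $q_B$ and setting the derivative to zero yields the candidate $q_B^\star = (b-c)/a - \mathbb{E}[q_S^{(K)}]$. Because the objective is concave and the feasible set is the interval $[0,f]$, the constrained maximizer is obtained by projecting $q_B^\star$ onto $[0,f]$.

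The key remaining point is to justify why no outer $\max\{\cdot,0\}$ is needed, i.e., that $q_B^\star \ge 0$. This follows directly from Lemma \ref{lemma: monotonicity}, which gives $q_S^{(k)} \le (b-c)/(2a)$ for every $k \in \mathbb{N}$; hence $\mathbb{E}[q_S^{(K)}] \le (b-c)/(2a)$ and therefore $q_B^\star \ge (b-c)/(2a) > 0$. Consequently, only the upper capacity $f$ can bind, and the projection reduces to $q_B^{SS} = \min\{(b-c)/a - \mathbb{E}[q_S^{(K)}],\, f\}$, as claimed. I expect the only subtle step to be this non-negativity argument; the rest is routine quadratic optimization made clean by the fact that the cross term $\mathbb{E}[(q_S^{(K)})^2]$ is a constant with respect to the decision variable $q_B$.
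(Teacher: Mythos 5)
Your proof is correct and follows essentially the same route as the paper: reduce the expected welfare to a concave quadratic in $q_B$, apply the first-order condition, and handle the constraint $0 \le q_B \le f$. Your explicit check that the unconstrained maximizer is nonnegative (via the bound $q_S^{(k)} \le (b-c)/(2a)$ from Lemma \ref{lemma: monotonicity}) is a welcome bit of extra care that the paper's proof leaves implicit.
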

\begin{proof}
    First, we write the optimization problem as follows,
    \begin{align*}
        \max_{0 \le q_B \le f} \mathbb{E} \left[ W(q_S^{(K)}, q_B) \right].
    \end{align*}
    {According to Lemma \ref{lemma: monotonicity}, $q_S^{(k)}$ is upper bounded by {$(b-c)/2a$} and lower bounded by $(b-c)/2a-f/2$ for any $k \in \mathbb{N}$. Thus, $\mathbb{E}[q_S^{(K)}] < (b-c)/a$ and $\mathbb{E}[(q_S^{(K)})^2]$ is finite. Therefore, the optimization problem is equivalent to}
    \begin{align*}
        \max_{0 \le q_B \le f} q_B\left( \frac{b-c}{a}-\mathbb{E}[q_S^{(K)}] - \frac{1}{2}q_B \right),
    \end{align*}
    By the {first-order} condition and the constraint on $q_B$, the optimal solution is then given by 
    \begin{align*}
        q_B^{SS} = \min\left\{ (b-c)/a-\mathbb{E}[q_S^{(K)}], f \right\}.
    \end{align*}
\end{proof}

\begin{remark}
    \ly{It is worth noting that $q_B^{SS}$ may not equal $\mathbb{E}[q_B^{(K+1)}]$ due to the network constraints.}
\end{remark}

To measure how well the stochastic strategy performs, we introduce the \textit{value of complete information (VCI)} \cite{MIT_gt} as 
\begin{align*}
    VCI^{SS} = W(q_S^{(k)}, q_B^*) - W(q_S^{(k)}, q_B^{SS}).
\end{align*}
By definition, $VCI^{SS}$ is non-negative. {We will demonstrate the value of $VCI^{SS}$ by numerical studies in Section \ref{sec: numerical}.}

\subsection{Robust maximin strategy}
Next we consider a scenario where the social planner only knows an uncertainty set around its opponent's rationality level. We define the \textit{robust strategy} to be the quantity produced by firm $B$ that maximizes the minimal social welfare among all possible rationality levels of the self-interested firm, i.e.,
\begin{align}\label{eq: maximin problem}
    q_B^{RS} \in \underset{0 \le q_B \le f}{\arg\max} \; \min_{k\in\mathbb{N}} W(q_S^{(k)}, q_B). 
\end{align}
Surprisingly, we find that this robust strategy is to act exactly as level-2 behavior, as formally presented below. 
\begin{theorem}
    The robust strategy is {given by} 
    \begin{align*}
        q_B^{RS} = \min\left\{ \frac{b-c}{2a}+\frac{f}{4}, f \right\},
    \end{align*}
    which corresponds to its level-2 behavior. 
\end{theorem}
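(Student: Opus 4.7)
The plan is to first recast the maximin problem \eqref{eq: maximin problem} as a one-dimensional minimax of absolute-value distances by invoking Lemma \ref{lemma: equivalent social welfare measure}. Since $W(q_S, q_B)$ is strictly decreasing in $d(q_S, q_B) = |(b-c)/a - q_S - q_B|$, the robust strategy equivalently solves
\begin{align*}
    q_B^{RS} \in \underset{0 \le q_B \le f}{\arg\min} \; \sup_{k \in \mathbb{N}} \left| \frac{b-c}{a} - q_S^{(k)} - q_B \right|.
\end{align*}

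Next, I would reduce the inner supremum to an evaluation at only two points. By Lemma \ref{lemma: monotonicity}, $q_S^{(k)}$ is weakly decreasing in $k$, attains its maximum $\tfrac{b-c}{2a}$ at $k = 0$, and attains (or approaches) its infimum $\tfrac{b-c}{2a} - \tfrac{f}{2}$ as $k \to \infty$. Since $d(\cdot, q_B)$ is a convex function of $q_S$ (an absolute value), its supremum over any subset of the interval $\bigl[\tfrac{b-c}{2a}-\tfrac{f}{2},\; \tfrac{b-c}{2a}\bigr]$ is attained at the endpoints, giving
\begin{align*}
    \sup_{k \in \mathbb{N}} d(q_S^{(k)}, q_B) = \max\!\left\{ \left| \tfrac{b-c}{2a} - q_B \right|,\; \left| \tfrac{b-c}{2a} + \tfrac{f}{2} - q_B \right| \right\}.
\end{align*}

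The outer problem is then a textbook minimax of two absolute values subject to the box constraint $q_B \in [0, f]$. The unconstrained optimum is the point equidistant from $\tfrac{b-c}{2a}$ and $\tfrac{b-c}{2a} + \tfrac{f}{2}$, namely $\tilde{q} = \tfrac{b-c}{2a} + \tfrac{f}{4}$, at which both arguments equal $f/4$. A short case analysis shows that the objective is strictly decreasing on $[0, \tilde{q}]$ and strictly increasing on $[\tilde{q}, \infty)$, so the constrained optimum is $\tilde{q}$ when $\tilde{q} \le f$ and $q_B = f$ otherwise (the lower bound $q_B \ge 0$ is never active since $\tilde{q} > 0$). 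This yields $q_B^{RS} = \min\{\tfrac{b-c}{2a} + \tfrac{f}{4},\; f\}$, and comparison with the expression for $q_B^{(2)}$ derived in Section \ref{sec: level-k reasoning} identifies it as the level-2 behavior.

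The main subtlety is justifying the reduction of the inner supremum to the two extreme values of $q_S^{(k)}$, particularly when $f = (b-c)/a$, in which case the infimum of $q_S^{(k)}$ is only approached in the limit and not achieved by any finite $k$; convexity of $|\cdot|$ together with continuity makes the sup equal the endpoint bound in either case, so the argument proceeds uniformly. Everything else reduces to routine optimization of a piecewise-linear univariate function.
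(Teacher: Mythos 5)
Your proof is correct and follows essentially the same route as the paper's: both reduce the inner worst case over $k$ to the two extreme values $\tfrac{b-c}{2a}$ and $\tfrac{b-c}{2a}-\tfrac{f}{2}$ of $q_S^{(k)}$ (you via convexity of $d(\cdot,q_B)$ after invoking Lemma \ref{lemma: equivalent social welfare measure}, the paper via concavity of the quadratic $W$ in $q_S$), and both then locate the outer optimum at the point equidistant from the two extremes, capped at $f$. Your explicit handling of the case where the infimum of $q_S^{(k)}$ is only attained in the limit is a welcome touch of care that the paper glosses over, but it does not change the argument.
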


\begin{proof}
    To address the maximin problem, we first solve the following minimization problem,
    \begin{align}\label{eq: minimization problem}
        k^* \in \underset{k \in \mathbb{N}}{\arg\min} \; W(q_S^{(k)}, q_B),
    \end{align}
    with $0 \le q_B \le f$. Since $W(q_S^{(k)}, q_B)$ is a quadratic function of $q_S^{(k)}$ for any given $q_B$, the preceding problem is then equivalent to
    \begin{align*}
        k^* \in \underset{k \in \mathbb{N}}{\arg\min} \; -\left( q_S^{(k)} - \left(\frac{b-c}{a}-q_B\right) \right)^2. 
    \end{align*}
    By Lemma \ref{lemma: monotonicity}, we have $q_S^{(k)}$ being monotonic and tightly bounded by $\frac{b-c}{2a}-\frac{f}{2}$ and $\frac{b-c}{2a}$. {Since the objective function is a concave function on $q_S^{(k)}$ parameterized by $q_B$, the optimal solution is the farthest point away from $\frac{b-c}{a}-q_B$.} Therefore, the optimal solution of the minimization problem $k^*$ satisfies
    \begin{align*}
        q_S^{(k^*)} = 
        \begin{dcases}
            \frac{b-c}{2a}-\frac{f}{2}, \quad\text{if } q_B < \frac{b-c}{2a}+\frac{f}{4};\\
            \frac{b-c}{2a}, \quad\text{if } q_B > \frac{b-c}{2a}+\frac{f}{4};\\
            \frac{b-c}{2a}-\frac{f}{2} \text{ or } \frac{b-c}{2a}, \quad\text{if } q_B = \frac{b-c}{2a}+\frac{f}{4}.
        \end{dcases}
    \end{align*}
    The corresponding objective value of problem \eqref{eq: minimization problem} is then $W(q_S^{(k^*)}, q_B)$. Afterwards, we aim to maximize the minimum $W(q_S^{(k^*)}, q_B)$ subject to $0 \le q_B \le f$. Equivalently, 
    \begin{align*}
        \max_{0 \le q_B \le f} \; -\frac{1}{2}q_B\left( q_B-2\left(\frac{b-c}{a}-q_S^{(k^*)}\right) \right).
    \end{align*}
    Note that $q_S^{(k^*)}$ is a piecewise function in $q_B$. Thus, we can decompose the preceding problem into two subproblems. 
    
    i) Subproblem 1: substitute $q_S^{(k^*)} = \frac{b-c}{2a}-\frac{f}{2}$, we have
    \begin{align*}
        \max_{0 \le q_B \le \min\{\frac{b-c}{2a}+\frac{f}{4},f\}} \; -\frac{1}{2}q_B\left( q_B-\left(\frac{b-c}{a}+f\right) \right).
    \end{align*}
    The optimal solution is then $q_B^{RS} = \min\{\frac{b-c}{2a}+\frac{f}{4},f\}$.

    ii) Subproblem 2: substitute $q_S^{(k^*)} = \frac{b-c}{2a}$, we have
    \begin{align*}
        \max_{\frac{b-c}{2a}+\frac{f}{4} \le q_B \le f} \; -\frac{1}{2}q_B\left( q_B-\frac{b-c}{a} \right).
    \end{align*}
    If $\frac{b-c}{2a}+\frac{f}{4} \le f$, the optimal solution is then $q_B^{RS} = \frac{b-c}{2a}+\frac{f}{4} = \min\{\frac{b-c}{2a}+\frac{f}{4},f\}$; otherwise, there is no feasible solution. 

    Therefore, by combining the solutions of two subproblems, we conclude that the optimal solution of the maximin problem \eqref{eq: maximin problem} is $q_B^{RS} = \min\{\frac{b-c}{2a}+\frac{f}{4},f\}$, which is the same as the level-2 behavior of the benevolent social planner as presented in Lemma \ref{lemma: closed-form level k behaviors}. 
\end{proof}

Assuming the self-interested firm is level-$k$ for some $k \in \mathbb{N}$, the robust strategy maximizes the minimal social welfare, or say, optimizes the worst-case scenario. Essentially, it suggests that the benevolent social planner should have a relatively low rationality level if it does not know how rational its opponent is. It is worth noting that although the robust strategy for firm $B$ is the same as its level-2 behavior, the worst-case scenario is not when firm $S$ is level-1. Instead, it indicates that when firm $B$ takes the level-2 behavior, the capability of firm $S$ hurting social welfare is the least among all strategies firm $B$ can take.

In this case, we have
\begin{align*}
    VCI^{RS} = W(q_S^{(k)}, q_B^*) - W(q_S^{(k)}, q_B^{RS}).
\end{align*}
Similar to the value of complete information, we also introduce the \textit{expected value of incomplete information (EVII)} as the maximum amount of value the benevolent social planner is willing to pay for the probability distribution of $K$ prior to making a decision, i.e., 
\begin{align*}
    EVII^{RS} = \mathbb{E} \Big[W(q_S^{(K)}, q_B^{SS})\Big] - \mathbb{E} \Big[W(q_S^{(K)}, q_B^{RS})\Big].
\end{align*}
By definitions, both $VCI^{RS}$ and $EVII^{RS}$ are non-negative. {We will demonstrate the values of $VCI^{RS}$ and $EVII^{RS}$ by numerical studies in Section \ref{sec: numerical}.}

\section{Utility Design}\label{sec: utility design}
{Up until} this point, we assume the benevolent social planner maximizes its true payoff function, namely social welfare $W$. While it is a reasonable setup, we hypothesize that the social planner might achieve better social welfare by carefully designing its utility function. This is possible in a level-$k$ setup where agents may have conflicting beliefs about each other's rationality level, however it is not possible in NE setting, due to the self-consistent beliefs at the equilibrium. In this section, we explore how the social planner might improve social welfare by either cooperating with or fighting the self-interested firm. 

Recall that the suppliers' true payoff functions are profit $\pi_S(q_S, q_B)$ and social welfare $\pi_B(q_S, q_B)=W(q_S,q_B)$, respectively. Instead of maximizing the true payoff, we now consider the social planner maximizing a designed \textit{utility function} $U(q_S,q_B)$, i.e., $\max_{0 \le q_B \le f} U(q_S,q_B)$. While $U$ can take arbitrary forms, we consider a scenario where the social planner either cooperates with or fights the self-interested firm by maximizing a linear combination of its true payoff and its opponent's payoff, i.e.,
\begin{align*}
    U(q_S,q_B) = \pi_B(q_S,q_B) + \gamma \pi_S(q_S,q_B).
\end{align*}
Here, $\gamma \in \mathbb{R}$ is a design parameter, referred to as the \textit{cooperation level}. It indicates how much the social planner cooperates with or fights the self-interested firm. If $\gamma=0$, the social planner maximizes its true payoff, which has been discussed in previous sections. If $\gamma \to \infty$, the social planner fully cooperates with the self-interested firm, aligning its utility function perfectly with firm $S$'s payoff. On the other hand, if $\gamma \to -\infty$, the social planner aims to  minimize firm $S$'s payoff. 

We define the \textit{best response} $\tilde{\mathcal{B}}_B(q_S)$ as
$$\tilde{\mathcal{B}}_B(q_S) \in \underset{0 \le q_B \le f}{\arg\max} \;U(q_S,q_B).$$
The following lemma gives the best response for the social planner.

\begin{lemma}\label{lemma: best response function utility design}
    The best response of the benevolent social planner with the utility function $U(q_S,q_B)$ can be expressed compactly as follows,
    \begin{align*}
        \tilde{\mathcal{B}}_B(q_S) = \max \Big\{ \min\big\{ (b-c)/a-(1+\gamma)q_S, f \big\}, 0 \Big\}.
    \end{align*}
\end{lemma}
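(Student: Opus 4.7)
The plan is to mimic the argument used in Lemma~\ref{lemma: best response functions}: write $U$ as a quadratic function of $q_B$ with $q_S$ treated as a fixed parameter, verify concavity, apply the first-order condition to get the unconstrained maximizer, and then project onto the feasible interval $[0,f]$.

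First I would expand the two payoffs. Using the inverse demand $p = b - a(q_S + q_B)$ and $u(q_D) = -\tfrac{a}{2}q_D^2 + bq_D + m$ with $q_D = q_S + q_B$, a short computation gives
\begin{align*}
    \pi_B(q_S,q_B) &= -\tfrac{a}{2}(q_S+q_B)^2 + (b-c)(q_S+q_B) + m,\\
    \pi_S(q_S,q_B) &= (b-c)q_S - a q_S^2 - a q_S q_B.
\end{align*}
Grouping by powers of $q_B$, the designed utility becomes
\begin{align*}
    U(q_S,q_B) = -\tfrac{a}{2}q_B^2 + \bigl( (b-c) - a q_S - \gamma a q_S \bigr) q_B + C(q_S),
\end{align*}
where $C(q_S)$ collects all terms independent of $q_B$.

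Next I would observe that $\partial^2 U/\partial q_B^2 = -a < 0$, so $U(q_S,\cdot)$ is strictly concave, and the first-order condition $\partial U/\partial q_B = 0$ yields the unconstrained maximizer
\begin{align*}
    q_B^\star = \frac{b-c}{a} - (1+\gamma) q_S.
\end{align*}
Because the objective is concave, the constrained maximizer over $[0,f]$ is simply the projection of $q_B^\star$ onto this interval, which gives exactly
\begin{align*}
    \tilde{\mathcal{B}}_B(q_S) = \max\bigl\{ \min\{ (b-c)/a - (1+\gamma)q_S,\, f \},\, 0 \bigr\}.
\end{align*}

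There is no real obstacle here; the only subtlety is that $\gamma$ is allowed to be any real number (including negative values, corresponding to fighting), so I would briefly note that concavity of $U$ in $q_B$ holds for every $\gamma$ because the coefficient on $q_B^2$ is $-a/2$ and does not depend on $\gamma$. Consequently the projection formula remains valid over the entire range of cooperation levels considered in Section~\ref{sec: utility design}.
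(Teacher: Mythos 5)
Your proposal is correct and follows essentially the same route as the paper: rewrite $U$ as a concave quadratic in $q_B$ with linear coefficient $(b-c)-a(1+\gamma)q_S$, apply the first-order condition, and project the unconstrained maximizer onto $[0,f]$. The added remark that concavity is independent of $\gamma$ is a sensible (if implicit in the paper) clarification.
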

\begin{proof}
    We start by rewriting the utility function $U(q_S, q_B)$,
    \begin{align*}
        U(q_S, q_B) = &-\frac{a}{2}q_B^2 + \big((b-c)-a(1+\gamma)q_S\big)q_B\\
        &+ aq_S\left(\frac{(1+\gamma)(b-c)}{a}-(\frac{1}{2}+\gamma)q_S\right) + m.
    \end{align*}
    The best response of the benevolent social planner is solved by 
    $\tilde{\mathcal{B}}_B(q_S) \in \arg\max_{0 \le q_B \le f} \;U(q_S,q_B)$. Since $U(q_S, q_B)$ is concave in $q_B$, we combine the first-order condition with constraints on produced quantity and then the closed-form expression is derived.
\end{proof}
The self-interested firm, on the other hand, remains a profit maximizer and acts strategically according to the best response $\mathcal{B}_S(q_B)$ as presented in Lemma \ref{lemma: best response functions}. Under this framework, we denote the produced quantity of the level-$k$ firm $i$ by $\tilde{q}_i^{(k)}$, with $i \in \{S, B\}$ and $k \in \mathbb{N}$. Note that the level-0 agents still choose a feasible solution at random, i.e., $\tilde{q}_S^{(0)} = (b-c)/2a$ and $\tilde{q}_B^{(0)} = f/2$. Depending on the cooperation level, the solution form for the level-$k$ behaviors of the two suppliers may vary. First, we consider $\gamma < -1$, indicating that the primary goal of the social planner is to fight the self-interested firm. 

\begin{lemma}\label{lemma: closed-form level k behaviors utility design, gamma<-1}
    With the utility function $U$ and $\gamma<-1$, for any $k \in \mathbb{N}$, the level-$k$ behavior of firm $S$ is given by
    \begin{align*}
        \tilde{q}_S^{(k)} = \begin{dcases}
            (b-c)/2a,&\text{if $k=0$}; \\
            (b-c)/2a-f/4,&\text{if $k=1$}; \\
            (b-c)/2a-f/2,&\text{if $k \ge 2$}. 
        \end{dcases}
    \end{align*}
    Similarly, the level-$k$ behavior of firm $B$ is given by
    \begin{align*}
        \tilde{q}_B^{(k)} = \begin{dcases}
            f/2,&\text{if $k=0$}; \\
            f,&\text{if $k \ge 1$}.
        \end{dcases}
    \end{align*}
\end{lemma}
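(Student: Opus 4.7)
The plan is to exploit the fact that when $\gamma < -1$, the coefficient $-(1+\gamma)$ in the social planner's best response from Lemma \ref{lemma: best response function utility design} is strictly positive, which forces $\tilde{\mathcal{B}}_B(q_S)$ to saturate at the capacity $f$ for every non-negative $q_S$. Once that saturation is established, the level-$k$ recursion for firm $B$ collapses to a constant, and firm $S$'s behavior stabilizes after only one iteration.

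First I would argue that $\tilde{\mathcal{B}}_B(q_S) = f$ for every $q_S \ge 0$ under the assumption $\gamma < -1$. Using Lemma \ref{lemma: best response function utility design}, the inner expression is $(b-c)/a - (1+\gamma)q_S$. Since $-(1+\gamma) > 0$ and $q_S \ge 0$, this quantity is at least $(b-c)/a$, which by the standing assumption $f \le (b-c)/a$ is at least $f$. Hence the $\min$ is attained at $f$, and because $f > 0$ the outer $\max$ is inert. This immediately gives $\tilde{q}_B^{(k)} = \tilde{\mathcal{B}}_B(\tilde{q}_S^{(k-1)}) = f$ for every $k \ge 1$, regardless of what $\tilde{q}_S^{(k-1)}$ is.

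Next I would turn to firm $S$, whose best response $\mathcal{B}_S$ is unchanged from Lemma \ref{lemma: best response functions}. For $k = 1$, I evaluate $\tilde{q}_S^{(1)} = \mathcal{B}_S(\tilde{q}_B^{(0)}) = \mathcal{B}_S(f/2) = \max\{(b-c)/(2a) - f/4,\, 0\}$, and the assumption $f \le (b-c)/a$ guarantees $(b-c)/(2a) - f/4 \ge (b-c)/(4a) > 0$, so the $\max$ reduces to $(b-c)/(2a) - f/4$. For $k \ge 2$, using the previous step, $\tilde{q}_S^{(k)} = \mathcal{B}_S(\tilde{q}_B^{(k-1)}) = \mathcal{B}_S(f) = \max\{(b-c)/(2a) - f/2,\, 0\}$, and again $f \le (b-c)/a$ ensures the expression inside the $\max$ is non-negative, yielding $(b-c)/(2a) - f/2$. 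The level-$0$ values come directly from the level-$0$ specification $\tilde{q}_S^{(0)} = (b-c)/(2a)$ and $\tilde{q}_B^{(0)} = f/2$ stated immediately before the lemma.

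There is essentially no obstacle here; the only thing to be careful about is verifying that neither the upper capacity $f$ on firm $B$ nor the non-negativity constraint on firm $S$ is ever the binding one in an unexpected way, and both checks reduce to a single use of the hypothesis $f \le (b-c)/a$. Since the recursion stabilizes after a single update on each side, no mathematical induction is needed, which makes the proof substantially shorter than that of Lemma \ref{lemma: closed-form level k behaviors}.
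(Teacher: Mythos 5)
Your proposal is correct and follows essentially the same route as the paper: establish that $\gamma<-1$ together with $f\le (b-c)/a$ forces $\tilde{\mathcal{B}}_B(q_S)=f$ for all $q_S\ge 0$, and then read off the firm-$S$ values from $\mathcal{B}_S$ at $f/2$ and $f$. You simply spell out the routine verifications that the paper leaves as ``the closed-form solutions follow naturally.''
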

\begin{proof}
    Since $\gamma<-1$ and $f \le (b-c)/a$, the best response of the social planner in Lemma \ref{lemma: best response function utility design} can be simplified as $\tilde{\mathcal{B}}_B(q_S) = f$. Then, the closed-form solutions follow naturally.
\end{proof}

Next, we consider $-1 \le \gamma \le 1$, where the social planner preserves the main goal as maximizing its true payoff (social welfare).

\begin{lemma}\label{lemma: closed-form level k behaviors utility design, -1<gamma<1}
    With the utility function $U$ and $-1 \le \gamma \le 1$, for any $k \in \mathbb{N}$ that is even, the level-$k$ behaviors are given by\footnote{{Note that we consider $0^0=1$.}}
    \begin{align*}
        \begin{dcases}
            &\tilde{q}_S^{(k)} = \max\left\{ \frac{(1+\gamma)^{\frac{k}{2}}}{2^{\frac{k}{2}+1}} \frac{b-c}{a}, \frac{b-c}{2a}-\frac{f}{2} \right\},\\
            &\tilde{q}_B^{(k)} = \min\left\{ \left( 1-\frac{(1+\gamma)^{\frac{k}{2}}}{2^{\frac{k}{2}}} \right) \frac{b-c}{a} + \frac{(1+\gamma)^{\frac{k}{2}}}{2^{\frac{k}{2}+1}}f, f \right\}.
        \end{dcases}
    \end{align*}
    For any $k \in \mathbb{N}$ that is odd, the level-$k$ behaviors are given by
    \begin{align*}
        \begin{dcases}
            &\tilde{q}_S^{(k)} = \max\left\{ \frac{(1+\gamma)^{\frac{k-1}{2}}}{2^{\frac{k+1}{2}}} \left( \frac{b-c}{a}-\frac{f}{2} \right), \frac{b-c}{2a}-\frac{f}{2} \right\},\\
            &\tilde{q}_B^{(k)} = \min\left\{ \left( 1-\frac{(1+\gamma)^{\frac{k+1}{2}}}{2^{\frac{k+1}{2}}} \right) \frac{b-c}{a}, f \right\}.
        \end{dcases}
    \end{align*}
\end{lemma}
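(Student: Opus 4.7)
The plan is to mirror the induction argument used for Lemma \ref{lemma: closed-form level k behaviors}, now driven by the two best responses $\mathcal{B}_S$ from Lemma \ref{lemma: best response functions} and $\tilde{\mathcal{B}}_B$ from Lemma \ref{lemma: best response function utility design}. Throughout I will write $r := (1+\gamma)/2$, which lies in $[0,1]$ precisely because $-1 \le \gamma \le 1$; the even-$k$ prefactor $(1+\gamma)^{k/2}/2^{k/2}$ then collapses to $r^{k/2}$, a non-negative quantity that is monotone non-increasing in $k$. The assumption $f \le (b-c)/a$ is what will let me rule out the lower saturation at $0$ in both best-response maps.

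The base case $k=0$ is immediate: plugging $k=0$ into the even formulas recovers $\tilde{q}_S^{(0)} = \max\{(b-c)/2a,(b-c)/2a-f/2\} = (b-c)/2a$ and $\tilde{q}_B^{(0)} = \min\{f/2,f\} = f/2$, matching the level-0 definitions. For the even-to-odd step, I assume the even formulas hold for some even $k$ and compute $\tilde{q}_S^{(k+1)} = \mathcal{B}_S(\tilde{q}_B^{(k)})$ and $\tilde{q}_B^{(k+1)} = \tilde{\mathcal{B}}_B(\tilde{q}_S^{(k)})$. Each best response involves a $\min$ and a $\max$, and the inductive hypothesis also involves one; I will split into sub-cases by which branch is active in the hypothesis, simplify algebraically (the key identity is $(b-c)/2a - \tilde{q}_B^{(k)}/2 = r^{k/2}/2 \cdot \left((b-c)/a - f/2\right)$ in the unsaturated case), and verify in each sub-case that the result equals the odd formula at index $k+1$. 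The odd-to-even step is completely analogous, using the identity $(b-c)/a - (1+\gamma)\tilde{q}_S^{(k)} = \left(1 - r^{(k+1)/2}\right)(b-c)/a$ in the unsaturated case.

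The main obstacle, and the reason the case analysis is not entirely automatic, is showing that the saturation conditions on firm $S$'s side and firm $B$'s side are mutually consistent across iterations. Concretely, I need to verify two equivalences: that $\tilde{q}_B^{(k)}$ hits the cap $f$ exactly when $r^{k/2}\bigl((b-c)/a - f/2\bigr) \le (b-c)/a - f$, and that under this condition the subsequent $\tilde{q}_S^{(k+1)}$ selects the floor branch $(b-c)/2a - f/2$ of its max. The inequality $\gamma \le 1$ (so $r \le 1$) is what propagates these cap/floor conditions forward: whenever a cap activates at one step, the next step's cap is also forced on because $r^{(k+1)/2} = r \cdot r^{k/2} \le r^{k/2}$.

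Finally, I will note that the lower saturation at $0$ in $\tilde{\mathcal{B}}_B$ never binds: with $r \ge 0$, the expression $1 - r^{k/2}$ is non-negative, so the interior candidate $\left(1 - r^{(k+1)/2}\right)(b-c)/a$ is non-negative, and combined with $f \le (b-c)/a$ we get $\tilde{q}_B^{(k+1)} \ge 0$ automatically. A symmetric remark applies to $\mathcal{B}_S$: the candidate interior value is always at least $(b-c)/2a - f/2 \ge 0$ because $f \le (b-c)/a$. Once these sub-cases are checked, the induction closes and the closed-form expressions are established for all $k \in \mathbb{N}$.
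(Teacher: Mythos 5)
Your proposal is correct and follows essentially the same route as the paper: induction on $k$ with base case $k=0$, applying $\mathcal{B}_S$ and $\tilde{\mathcal{B}}_B$ to the inductive hypothesis and using $f \le (b-c)/a$ together with $-1 \le \gamma \le 1$ to discharge the outer saturations, the only difference being that the paper pushes the nested $\min$/$\max$ expressions through wholesale rather than splitting into sub-cases by active branch. One cosmetic slip to fix when writing it out: in the odd-to-even step the unsaturated identity should read $(b-c)/a - (1+\gamma)\tilde{q}_S^{(k)} = \bigl(1 - r^{(k+1)/2}\bigr)(b-c)/a + r^{(k+1)/2}f/2$, which is needed to match the $f$-term present in the even-index formula for $\tilde{q}_B$.
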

\begin{proof}
    We prove this by mathematical induction. First, we verify that the closed-form solutions satisfy $\tilde{q}_S^{(0)} = (b-c)/2a$ and $\tilde{q}_B^{(0)} = f/2$ by letting $k = 0$. Then, we assume the expressions of level-$k$ behavior are true for any $k \in \mathbb{N}$ such that $k>0$ and $k$ is an even number. The level-$(k+1)$ firm $S$ will best respond to the level-$k$ firm $B$, based on the best response function $\mathcal{B}_S(q_B)$ as shown in Lemma \ref{lemma: best response functions}, i.e., 
    \begin{align*}
        \tilde{q}_S^{(k+1)} = \max \left\{ (b-c-a\tilde{q}_B^{(k)})/2a, 0 \right\},
    \end{align*}
    which is equivalent to
    \begin{align*}
        \tilde{q}_S^{(k+1)} \!=\! \max \!\left\{\! \max\!\left\{\! \frac{(1+\gamma)^{\frac{k}{2}}}{2^{\frac{k}{2}+1}} \!\left(\! \frac{b-c}{a}\!-\!\frac{f}{2} \!\right)\!, \frac{b-c}{2a}\!-\!\frac{f}{2} \!\right\}, 0 \!\right\}.
    \end{align*}
    Since $f \le (b-c)/a$, it follows that
    \begin{align*}
        \tilde{q}_S^{(k+1)} = \max \left\{ \frac{(1+\gamma)^{\frac{k}{2}}}{2^{\frac{k}{2}+1}} \left( \frac{b-c}{a}-\frac{f}{2} \right), \frac{b-c}{2a}-\frac{f}{2}\right\},
    \end{align*}
    which is consistent with the closed-form expression of $\tilde{q}_S^{(k+1)}$. Similarly, the level-$(k+1)$ benevolent social planner will best respond to the level-$k$ self-interested firm $S$, based on the best response $\tilde{\mathcal{B}}_B(q_S)$ shown in Lemma \ref{lemma: best response function utility design}, i.e.,
    \begin{align*}
        \tilde{q}_B^{(k+1)} = \max \left\{ \min\left\{ (b-c)/a-(1+\gamma)\tilde{q}_S^{(k)}, f \right\}, 0 \right\}.
    \end{align*}
    By plugging in $\tilde{q}_S^{(k)}$, the preceding equation is expanded as
    \begin{align*}
        \tilde{q}_B^{(k+1)} \!=\! \max\! \left\{\! \min\!\left\{\!\! \left(\! 1 \!-\! \frac{(1\!+\!\gamma)^{\frac{k}{2}+1}}{2^{\frac{k}{2}+1}} \!\right) \!\frac{b\!-\!c}{a}, \frac{b\!-\!c}{2a} \!+\! \frac{f}{2}\!,
        f \!\right\}\!, 0 \!\right\}\!,
    \end{align*}
    since $\gamma \ge -1$. As we assume $f \le (b-c)/a$, it is equivalent to
    \begin{align*}
        \tilde{q}_B^{(k+1)} = \max \left\{ \min\left\{ \left( 1-\frac{(1+\gamma)^{\frac{k}{2}+1}}{2^{\frac{k}{2}+1}} \right) \frac{b-c}{a}, f \right\}, 0 \right\},
    \end{align*}
    Furthermore, since $\gamma \le 1$, we have
    \begin{align*}
        \tilde{q}_B^{(k+1)} = \min\left\{ \left( 1-\frac{(1+\gamma)^{\frac{k}{2}+1}}{2^{\frac{k}{2}+1}} \right) \frac{b-c}{a}, f \right\}.
    \end{align*}
    This is also consistent with the closed-form solution of $\tilde{q}_B^{(k+1)}$. The case of $k$ being odd can be established by similar techniques. Therefore, we see that the level-$(k+1)$ expressions are true. The proof is then complete. 
\end{proof}

Last, we consider the case when the social planner cooperates with the self-interested firm $S$ and prioritizes maximizing $S$'s payoff, i.e., $\gamma>1$.

\begin{lemma}\label{lemma: closed-form level k behaviors utility design, gamma>1}
    With the utility function $U$ and $\gamma>1$, for any $k \in \mathbb{N}$ that is even, the level-$k$ behaviors are given by
    \begin{align*}
    \begin{dcases}
        &\tilde{q}_S^{(k)} = \frac{b-c}{2a},\\
        &\tilde{q}_B^{(k)} = \max\left\{ \left( 1-\frac{(1+\gamma)^{\frac{k}{2}}}{2^{\frac{k}{2}}} \right) \frac{b-c}{a} + \frac{(1+\gamma)^{\frac{k}{2}}}{2^{\frac{k}{2}+1}}f, 0 \right\}.
    \end{dcases}
    \end{align*}
    For any $k \in \mathbb{N}$ that is odd, the level-$k$ behaviors are given by
    \begin{align*}
    \begin{dcases}
        &\tilde{q}_S^{(k)} = \min\left\{ \frac{(1+\gamma)^{\frac{k-1}{2}}}{2^{\frac{k+1}{2}}} \left( \frac{b-c}{a}-\frac{f}{2} \right), \frac{b-c}{2a} \right\},\\
        &\tilde{q}_B^{(k)} = 0.
    \end{dcases}
    \end{align*}
\end{lemma}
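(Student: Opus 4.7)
The plan is to apply mathematical induction on $k$, following the same pattern used in Lemma \ref{lemma: closed-form level k behaviors utility design, -1<gamma<1}. The base case $k=0$ is verified by substituting into the formulas: the even-$k$ expressions give $\tilde q_S^{(0)} = (b-c)/2a$ and $\tilde q_B^{(0)} = f/2$, which coincide with the stipulated level-0 behaviors (the max against $0$ in $\tilde q_B^{(0)}$ is non-binding since $f/2 \ge 0$). The inductive step combines $\mathcal{B}_S$ from Lemma \ref{lemma: best response functions} (applied to $\tilde q_B^{(k)}$) with $\tilde{\mathcal{B}}_B$ from Lemma \ref{lemma: best response function utility design} (applied to $\tilde q_S^{(k)}$), split into the two parity transitions.

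For the even-to-odd transition, the inductive hypothesis gives $\tilde q_S^{(k)} = (b-c)/2a$, so the argument of $\tilde{\mathcal{B}}_B$ evaluates to $(1-\gamma)(b-c)/(2a) < 0$ because $\gamma > 1$, and the outer $\max$ forces $\tilde q_B^{(k+1)} = 0$ as claimed. For $\tilde q_S^{(k+1)} = \mathcal{B}_S(\tilde q_B^{(k)}) = \max\{(b-c)/(2a) - \tilde q_B^{(k)}/2, 0\}$, I split on whether the max in the formula for $\tilde q_B^{(k)}$ is attained by the interior expression or by $0$: the former yields, after direct algebra, $\tilde q_S^{(k+1)} = \frac{1}{2}((1+\gamma)/2)^{k/2}((b-c)/a - f/2)$, and the latter yields $\tilde q_S^{(k+1)} = (b-c)/(2a)$. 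I would then show that the two regimes are characterized by the same threshold $((1+\gamma)/2)^{k/2} \ge \big((b-c)/a\big)\big/\big((b-c)/a - f/2\big)$, so the resulting piecewise form is exactly the $\min$ appearing in the target formula for odd $k+1$.

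For the odd-to-even transition, $\tilde q_B^{(k)} = 0$ so $\tilde q_S^{(k+1)} = \mathcal{B}_S(0) = (b-c)/(2a)$ is immediate. To verify $\tilde q_B^{(k+1)} = \tilde{\mathcal{B}}_B(\tilde q_S^{(k)})$, I substitute the inductive formula for $\tilde q_S^{(k)}$ and set $\beta = ((1+\gamma)/2)^{(k+1)/2}$; the inner argument simplifies to $(1-\beta)(b-c)/a + \beta f/2$, and the remaining work is to show the inner $\min\{\cdot, f\}$ is non-binding, i.e., $((b-c)/a - f)/((b-c)/a - f/2) \le \beta$, which holds because $\beta > 1$ under $\gamma > 1$ and the left side is at most $1$ under $f \le (b-c)/a$. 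Handling Case B of $\tilde q_S^{(k)}$ (where the $\min$ takes $(b-c)/2a$) likewise produces $\tilde q_B^{(k+1)} = 0$, and one checks that this occurs exactly when the formula's outer $\max$ is attained by $0$.

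The main obstacle I anticipate is the systematic threshold analysis triggered by the nested $\max/\min$ clippings: on both sides of each transition one must demonstrate that the $\min\{\cdot, f\}$ in $\tilde{\mathcal{B}}_B$ is never binding and that the switching threshold between the two candidate branches in the $\tilde q_S^{(k)}$ expression agrees with the switching threshold for $\tilde q_B^{(k)}$ (otherwise the pieces of the piecewise formulas would fail to align). Once these threshold identifications are verified using $\gamma > 1$ and $f \le (b-c)/a$, the induction closes essentially mechanically.
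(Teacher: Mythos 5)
Your proposal is correct and follows essentially the same route as the paper, which simply notes that this lemma "can be established similarly to" the induction proving the $-1\le\gamma\le 1$ case; your induction on $k$ with the two parity transitions, the observation that $\gamma>1$ forces $\tilde{\mathcal{B}}_B(\,(b-c)/2a\,)=0$, and the matching-threshold analysis $\varphi \lessgtr \big((b-c)/a\big)\big/\big((b-c)/a-f/2\big)$ is exactly the intended argument, worked out in more detail than the paper provides. The threshold identifications you flag do check out under $\gamma>1$ and $f\le(b-c)/a$, so the induction closes as you describe.
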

\begin{proof}
    It can be established similarly to Lemma \ref{lemma: closed-form level k behaviors utility design, -1<gamma<1}.
\end{proof}

The lemma shows the following implications. First, if $k$ is odd and the social planner weighs firm $S$'s profit higher in its utility function, then it will not generate any electricity and the self-interested firm becomes the only seller on the market. This is also known as market monopoly, which is the best outcome for the self-interested player. Second, when $\gamma$ approaches infinity, i.e., the planner weighs the profit of the self-interested firm heavily, and $k \neq 0$, then regardless of $k$, the self-interested firm will be the only power supplier.

In the rest of this section, we will assume two suppliers have the same amount of cognitive ability and computational resources so that they have the same level of rationality $k \in \mathbb{N}$. {This indicates that they both wrongfully assume the other player is level-$(k-1)$ and best respond to that belief.} To maximize social welfare $W$, the benevolent social planner may design the cooperation level $\gamma$ in the following manner,
\begin{align}\label{eq: max problem utility design}
    \gamma^* \in \underset{\gamma \in \mathbb{R}}{\arg\max} \;W(\tilde{q}_S^{(k)},\tilde{q}_B^{(k)}), \quad\forall k \in \mathbb{N},
\end{align}
where $\tilde{q}_S^{(k)}$ and $\tilde{q}_B^{(k)}$ are the level-$k$ behaviors of the two suppliers using the utility function $U$, as shown in Lemmas \ref{lemma: closed-form level k behaviors utility design, gamma<-1} - \ref{lemma: closed-form level k behaviors utility design, gamma>1}. According to Lemma \ref{lemma: equivalent social welfare measure}, solving the maximization problem in \eqref{eq: max problem utility design} is equivalent as
\begin{align*}
    \min_{\gamma \in \mathbb{R}} \; d(\tilde{q}_S^{(k)},\tilde{q}_B^{(k)}) = \big| (b-c)/a - (\tilde{q}_S^{(k)}+\tilde{q}_B^{(k)}) \big|.
\end{align*}
Therefore, we aim to find $\gamma$ such that $(\tilde{q}_S^{(k)}+\tilde{q}_B^{(k)})$ is as close to the optimal social total production $(b-c)/a$ as possible. Such an optimal parameter is shown in the following theorem.

\begin{theorem}\label{thm: optimal cooperation level}
    An optimal cooperation level is provided as follows,
    \begin{enumerate}[label={(\roman*)}]
        \item For any positive and even number $k$:
            \begin{align*}
                \gamma^* = 2\left( \frac{(b-c)/a-f}{(b-c)/a-f/2} \right)^{\frac{2}{k}}-1.
            \end{align*}
        \item For any positive and odd number $k$:
            \begin{align*}
                \gamma^* = \max\left\{ -\frac{f/2}{(b-c)/a}, 2\left( \frac{(b-c)/a-f}{(b-c)/a} \right)^{\frac{2}{k+1}}-1 \right\}.
            \end{align*}
    \end{enumerate}
\end{theorem}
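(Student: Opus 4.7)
The plan is to reduce the maximization to a distance-minimization problem via Lemma~\ref{lemma: equivalent social welfare measure}, then perform a case analysis on the parity of $k$ and on which of the $\min/\max$ caps in Lemmas~\ref{lemma: closed-form level k behaviors utility design, gamma<-1}--\ref{lemma: closed-form level k behaviors utility design, gamma>1} are active. Writing $L = (b-c)/a$ for brevity, the objective becomes finding $\gamma$ that makes $\tilde{q}_S^{(k)}+\tilde{q}_B^{(k)}$ as close to the optimal social total production $L$ as possible. I would first dispatch the trivially-dominated regimes: for $\gamma<-1$ the social planner is pinned at $\tilde{q}_B^{(k)}=f$ by Lemma~\ref{lemma: closed-form level k behaviors utility design, gamma<-1}, giving the Nash-like sum $L/2+f/2$ with fixed distance $(L-f)/2$; for $\gamma>1$ Lemma~\ref{lemma: closed-form level k behaviors utility design, gamma>1} forces $\tilde{q}_S^{(k)}=L/2$ and the resulting sum is easily seen to be no closer to $L$ than the best achievable in $[-1,1]$. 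This leaves the main work in the regime $-1\le\gamma\le 1$ of Lemma~\ref{lemma: closed-form level k behaviors utility design, -1<gamma<1}.

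For even $k$, I would write the uncapped sum as
\begin{align*}
\tilde{q}_S^{(k)}+\tilde{q}_B^{(k)} \;=\; L \;-\; \frac{(1+\gamma)^{k/2}}{2^{k/2+1}}\bigl(L-f\bigr),
\end{align*}
and show by direct algebra that cap~S activates iff $(1+\gamma)^{k/2}/2^{k/2}\le(L-f)/L$ while cap~B activates iff $(1+\gamma)^{k/2}/2^{k/2}\le(L-f)/(L-f/2)$; since $L-f/2<L$, cap~B activates first as $\gamma$ decreases. In the cap-B-only region the sum becomes $\frac{(1+\gamma)^{k/2}}{2^{k/2+1}}L+f$, which is strictly below $L$ for all admissible $\gamma$, and in the uncapped region the sum is also strictly below $L$ for $\gamma>-1$. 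A monotonicity check then shows the sum is maximized exactly at the transition boundary $(1+\gamma)^{k/2}/2^{k/2}=(L-f)/(L-f/2)$, which solves to $\gamma^*=2\bigl(\frac{L-f}{L-f/2}\bigr)^{2/k}-1$, matching the claim.

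For odd $k$, the uncapped-region sum is
\begin{align*}
\tilde{q}_S^{(k)}+\tilde{q}_B^{(k)} \;=\; L \;+\; \frac{(1+\gamma)^{(k-1)/2}}{2^{(k+1)/2}}\bigl(-\gamma L-f/2\bigr),
\end{align*}
so setting this equal to $L$ yields the unconstrained critical point $\gamma=-\frac{f/2}{L}$. I would then check whether this point lies in the interior of the no-cap region, which boils down to the inequality $-\frac{f/2}{L}>2\bigl(\frac{L-f}{L}\bigr)^{2/(k+1)}-1$ (cap~B is the first to activate as $\gamma$ decreases; cap~S is dominated in the odd case by the same comparison as before). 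If it does, $\gamma^*=-\frac{f/2}{L}$ drives the distance to zero; otherwise the best achievable $\gamma$ in the no-cap region is the cap-B boundary $\gamma=2\bigl(\frac{L-f}{L}\bigr)^{2/(k+1)}-1$, and a short computation in the cap-B region confirms the sum is constant there, so moving further into cap~B yields no improvement. Taking the maximum of these two candidates gives the claimed formula.

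The main obstacle I anticipate is the bookkeeping around the $\min/\max$ caps: verifying which cap activates first, checking that the candidate $\gamma^*$ lies in the intended subregion, and ruling out other local maxima that could appear at the boundary of the cap-B region (especially for odd $k$, where the minimization inside the maximum reflects a genuine dichotomy between the unconstrained optimum and its projection onto the feasible cap region). The rest reduces to routine algebraic manipulation of the closed-form expressions from Lemmas~\ref{lemma: closed-form level k behaviors utility design, gamma<-1}--\ref{lemma: closed-form level k behaviors utility design, gamma>1}.
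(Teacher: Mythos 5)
Your overall strategy---reducing the problem via Lemma~\ref{lemma: equivalent social welfare measure} to pushing $\tilde q_S^{(k)}+\tilde q_B^{(k)}$ as close to $L=(b-c)/a$ as possible, dispatching $|\gamma|>1$, and then tracking which of the $\min/\max$ caps bind on $[-1,1]$---is the same as the paper's, and your even-$k$ analysis is sound: there the cap-B threshold $(L-f)/(L-f/2)$ always exceeds the cap-S threshold $(L-f)/L$, the uncapped sum is decreasing in $\gamma$ while the cap-B-only sum is increasing, so the maximum sits at the transition, exactly as in the paper's proof.

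The gap is in the odd case, where you assert that cap~B is always the first to activate as $\gamma$ decreases. That is false in general: cap~S binds for $\gamma \le T_{zy} = 2\bigl(\frac{L-f}{L-f/2}\bigr)^{2/(k-1)}-1$ while cap~B binds for $\gamma \le T_{xy} = 2\bigl(\frac{L-f}{L}\bigr)^{2/(k+1)}-1$, and the ordering of these thresholds depends on $k$ and $f$; e.g.\ for $k=5$ and $f=L/2$ one gets $T_{zy}\approx 0.63 > T_{xy}\approx 0.59$, so cap~S binds first. The paper splits Appendix~\ref{appendix: optimal cooperation level} into the two orderings (Lemmas~\ref{lemma: A6} and~\ref{lemma: A7}) for exactly this reason. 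When $T_{zy}>T_{xy}$, your candidate $-\frac{f/2}{L}$ never lies in the uncapped region, the maximum value of the sum is the constant $z=\frac{L+f}{2}$ attained for all $\gamma\le T_{xy}$ (reached only after traversing the cap-S-only branch, which you dismiss), and one must separately prove $-\frac{f/2}{L}\le T_{xy}$ in this regime---as the paper does via a chain of inequalities in Lemma~\ref{lemma: A7}---to recover the unified $\max\{\cdot,\cdot\}$ formula; your procedure happens to output the right number here but without justification. Two smaller slips: in the cap-B-only region the sum is $\frac{(1+\gamma)^{(k-1)/2}}{2^{(k+1)/2}}(L-f/2)+f$, which is increasing in $\gamma$ rather than constant (it is constant only once both caps bind), so ruling out improvement there needs the monotonicity argument, not constancy; and for $\gamma<-1$ the sum is $L/2+3f/4$ when $k=1$, not $L/2+f/2$, which is why the paper treats $k=1$ separately in Lemma~\ref{lemma: A0}.
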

\begin{proof}
    We first assume $k \in \mathbb{N}^+$ is an even number and denote $\varphi = ( (1+\gamma)/2)^{k/2}$. If $\gamma < -1$, by Lemma \ref{lemma: closed-form level k behaviors utility design, gamma<-1} and $f \le (b-c)/a$, there holds
    \begin{align*}
        \tilde{q}_S^{(k)} + \tilde{q}_B^{(k)} = (b-c)/2a + f/2 \le (b-c)/a.
    \end{align*}
    If $\gamma > 1$, by Lemma \ref{lemma: closed-form level k behaviors utility design, gamma>1},
    \begin{align*}
        \tilde{q}_S^{(k)} + \tilde{q}_B^{(k)} &= \max\left\{ \frac{3(b-c)}{2a} - \varphi \left( \frac{b-c}{a}-\frac{f}{2} \right), \frac{b-c}{2a} \right\} \\
        &\le \frac{b-c}{2a}+\frac{f}{2} \le \frac{b-c}{a}.
    \end{align*}
    If $-1 \le \gamma \le 1$, by Lemma \ref{lemma: closed-form level k behaviors utility design, -1<gamma<1}, we have
    \begin{align}\label{eq: temp2}
        \tilde{q}_S^{(k)} + \tilde{q}_B^{(k)} = \max\Big\{ \min\left\{ x, y \right\}, \min\left\{ x-y+z, z \right\} \Big\},
    \end{align}
    where $x = (1-\frac{\varphi}{2})\frac{b-c}{a} + \frac{\varphi}{2}f$, $y = \frac{\varphi}{2}\frac{b-c}{a} + f$ and $z = \frac{b-c}{2a}+\frac{f}{2}$. Since $f \le (b-c)/a$, we can see that $x \le (1-\frac{\varphi}{2})\frac{b-c}{a} + \frac{\varphi}{2}\frac{b-c}{a} = \frac{b-c}{a}$. Thus, there holds $\tilde{q}_S^{(k)} + \tilde{q}_B^{(k)} \le (b-c)/a$ for any $\gamma \in \mathbb{R}$ and thus
    $$d(\tilde{q}_S^{(k)}, \tilde{q}_B^{(k)}) = (b-c)/a-(\tilde{q}_S^{(k)} + \tilde{q}_B^{(k)}).$$
    According to Lemma \ref{lemma: equivalent social welfare measure}, the maximization problem \eqref{eq: max problem utility design} is equivalent to 
    \begin{align*}
        \gamma^* \in \underset{\gamma \in \mathbb{R}}{\arg\max} \;\;\tilde{q}_S^{(k)} + \tilde{q}_B^{(k)}, \quad\forall k \in \mathbb{N}. 
    \end{align*}
    For any $\gamma < -1$ or $\gamma > 1$, we have $$\tilde{q}_S^{(k)} + \tilde{q}_B^{(k)} \le (b-c)/2a+f/2 = z.$$ 
    For $\gamma \in [-1,1]$, we discuss two cases separately. First, consider the case where $x \le y$, or equivalently, $\frac{(b-c)/a-f}{(b-c)/a-f/2} \le \varphi \le 1$. Eq. \eqref{eq: temp2} can be simplified as
    \begin{align*}
        \tilde{q}_S^{(k)} + \tilde{q}_B^{(k)} \!=\! \max\{x, x-y+z\} \!=\! \frac{b-c}{a} - \frac{1}{2}\left(\!\frac{b-c}{a}-f\!\right)\varphi,
    \end{align*}
    which is linear in $\varphi$. Therefore, in this case, the maximizer of $\tilde{q}_S^{(k)} + \tilde{q}_B^{(k)}$ satisfies
    \begin{align}\label{eq: global maximizer}
        \varphi^* = \left(\frac{1+\gamma^*}{2}\right)^{\frac{k}{2}} = \frac{(b-c)/a-f}{(b-c)/a-f/2},
    \end{align}
    with the maximal objective value as
    \begin{align}\label{eq: global maximal objective}
        (\tilde{q}_S^{(k)} + \tilde{q}_B^{(k)})^* = \frac{((b-c)/a)^2 + f(b-c)/a - f^2}{2(b-c)/a - f}.
    \end{align}
    Second, consider the case where $x \ge y$, or equivalently, $0 \le \varphi \le \frac{(b-c)/a-f}{(b-c)/a-f/2}$. Eq. \eqref{eq: temp2} can be simplified as
    \begin{align*}
        \tilde{q}_S^{(k)} + \tilde{q}_B^{(k)} = \max\{y, z\} = \max\left\{ f+\frac{\varphi}{2}\frac{b-c}{a}, \frac{b-c}{2a}+\frac{f}{2} \right\},
    \end{align*}
    which is linear in $\varphi$. Then, we obtain the same maximizer and maximal objective value as in the first case. Furthermore, we observe $(\tilde{q}_S^{(k)} + \tilde{q}_B^{(k)})^* \ge z$. Hence, the global maximizer and the maximal objective value are given by Eq. \eqref{eq: global maximizer} and Eq. \eqref{eq: global maximal objective}, respectively. 
    
    When $k$ is an odd number, the proof has more cases, and we leave it in Appendix \ref{appendix: optimal cooperation level}. 
\end{proof}

\begin{remark}
    Note that there may exist multiple optimal cooperation levels. For example, if $k$ is odd and the flow limit $f$ satisfies 
    \begin{align}\label{eq: condition of multiple optimum}
        \left( \frac{(b-c)/a-f}{(b-c)/a-f/2} \right)^{\frac{2}{k-1}} \ge \left( \frac{(b-c)/a-f}{(b-c)/a} \right)^{\frac{2}{k+1}},
    \end{align}
    then any $\gamma \in \mathbb{R}$ such that $\gamma \le -1$ is also optimal. The condition \eqref{eq: condition of multiple optimum} essentially means that $f$ is not very large. The intuition is as follows. When the transmission line capacity is limited, the social planner only has limited market power. No matter how hard the planner fights 
    the self-interested firm, i.e., for arbitrarily small $\gamma<-1$, the resulting social welfare is the same. 
\end{remark}

For the optimal cooperation level proposed in Theorem \ref{thm: optimal cooperation level}, we find it upper and lower bounded as shown next. 
\begin{corollary}\label{cor: bound of optimal cooperation level}
    For any $k \in \mathbb{N}^+$, the optimal cooperation level proposed in Theorem \ref{thm: optimal cooperation level} satisfies $\gamma^* \in [-1,1]$. 
\end{corollary}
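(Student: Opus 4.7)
My plan is to prove the corollary by direct substitution of the closed-form expressions from Theorem on the optimal cooperation level, followed by straightforward bounding of the base that is raised to a positive rational power. The standing assumption from Section \ref{sec: model} that $0 < f \le (b-c)/a$ will be the key input, since it pins down the range of every ratio that appears inside the exponent.

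I would first introduce the shorthand $r = (b-c)/a$ so that $f \in (0, r]$, and then treat the even and odd cases separately. For even $k \in \mathbb{N}^+$, I observe that the ratio $(r-f)/(r-f/2)$ lies in $[0, 1)$: the numerator is in $[0, r)$ while the denominator is in $[r/2, r)$, and the inequality is strict on the right because $f > 0$. Raising this ratio to the positive power $2/k$ preserves membership in $[0, 1)$, so multiplying by $2$ and subtracting $1$ yields $\gamma^* \in [-1, 1) \subset [-1, 1]$, as claimed.

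For odd $k \in \mathbb{N}^+$, the value $\gamma^*$ is a maximum of two candidates, each of which I will bound individually. The first candidate $-f/(2r)$ lies in $[-1/2, 0)$ because $f \in (0, r]$, so it automatically sits inside $[-1, 1]$. For the second candidate $2((r-f)/r)^{2/(k+1)} - 1$, the same reasoning as in the even case shows $(r-f)/r \in [0, 1)$, and raising this to the positive power $2/(k+1)$ keeps the result in $[0, 1)$, giving a value in $[-1, 1)$. Since both candidates are in $[-1, 1)$, their maximum also lies in $[-1, 1) \subset [-1, 1]$, completing the argument.

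I do not expect any substantive obstacle here: the corollary is essentially a monotonicity/continuity check on the formulas produced in Theorem on the optimal cooperation level, and the only mildly delicate point is making sure the strict inequality $f > 0$ is used to exclude the boundary value $\gamma^* = 1$ when $f \to 0$. All other bounds follow from $f \le r$ and the positivity of the exponents $2/k$ and $2/(k+1)$.
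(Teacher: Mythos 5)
Your argument is correct and is essentially the paper's proof, which consists of the single remark that the bound follows from $f \le (b-c)/a$; you have simply spelled out the routine verification for the even and odd cases. One cosmetic point: the containment $\frac{r-f}{r-f/2} \in [0,1)$ follows from the pointwise comparison $r-f \le r-f/2$ (strict when $f>0$) rather than from the separate interval memberships of the numerator and denominator that you cite, but that is clearly the comparison you are implicitly using, so nothing substantive is missing.
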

\begin{proof}
    The proof is established by the condition $f \le (b-c)/a$.
\end{proof}

To achieve the maximal level-$k$ performance, the theorem and corollary indicate that the social planner should primarily maximize its true payoff (social welfare), and meanwhile cooperate with or fight its opponent. Next, we investigate how well this level-$k$ performance is with cooperation or fight, by evaluating the price of rationality.

First, in Nash equilibrium, it is not hard to see that the best possible social welfare is achieved when $\gamma = 0$. This means that having the social planner cooperate with or fight the self-interested firm does not help improve the social welfare performance at equilibrium. Thus, the Nash equilibrium strategy profile and the corresponding social welfare are unchanged, as shown in Lemma \ref{lemma: NE strategy profile}. However, by carefully designing a utility function, the level-$k$ performance is improved and, in fact, becomes better than the equilibrium performance, as shown in the following theorem.

\begin{theorem}
    With the optimal cooperation level $\gamma^*$ as stated in Theorem \ref{thm: optimal cooperation level}, the price of rationality is always less than or equal to 1, i.e., 
    \begin{align*}
        PoR(k,k) \le 1, \quad\forall k \in \mathbb{N}.
    \end{align*}
    Equivalently, the level-$k$ performance is no worse than the equilibrium performance. 
\end{theorem}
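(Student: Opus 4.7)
The plan is to invoke Lemma \ref{lemma: equivalent social welfare measure} to convert the social welfare comparison into a comparison of the $d$-distances. Concretely, $PoR(k,k) \le 1$ is equivalent to $d(\tilde{q}_S^{(k)}, \tilde{q}_B^{(k)}) \le d(q_S^{NE}, q_B^{NE})$. From Lemma \ref{lemma: NE strategy profile}, $q_S^{NE}+q_B^{NE}=(b-c)/(2a)+f/2$, which is at most $(b-c)/a$ by the standing assumption $f\le (b-c)/a$, so $d(q_S^{NE}, q_B^{NE}) = \tfrac{1}{2}\bigl((b-c)/a - f\bigr)$. The entire proof thus reduces to showing that, at $\gamma^*$, the total production $\tilde{q}_S^{(k)}+\tilde{q}_B^{(k)}$ sits closer to the socially optimal total $(b-c)/a$ than the Nash total does.

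The case $k=0$ is trivial: the level-$0$ behaviors are fixed independently of $\gamma$ and satisfy $\tilde{q}_S^{(0)}+\tilde{q}_B^{(0)} = (b-c)/(2a)+f/2 = q_S^{NE}+q_B^{NE}$, giving equality.

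For positive and even $k$, writing $\alpha:=(b-c)/a$, the proof of Theorem \ref{thm: optimal cooperation level} has already shown that (i) $\tilde{q}_S^{(k)}+\tilde{q}_B^{(k)}\le \alpha$ for every $\gamma\in\mathbb{R}$, and (ii) at $\gamma^*$ the maximum total production equals $(\alpha^2+f\alpha-f^2)/(2\alpha-f)$. Substituting yields
\begin{align*}
d(\tilde{q}_S^{(k)}, \tilde{q}_B^{(k)}) \;=\; \alpha - \frac{\alpha^2+f\alpha-f^2}{2\alpha-f} \;=\; \frac{(\alpha-f)^2}{2\alpha-f},
\end{align*}
and the inequality $(\alpha-f)^2/(2\alpha-f)\le (\alpha-f)/2$ clears denominators to $f\ge 0$, which is immediate.

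For positive and odd $k$, the argument is parallel but requires more bookkeeping because $\gamma^* = \max\bigl\{-f/(2\alpha),\,2((\alpha-f)/\alpha)^{2/(k+1)}-1\bigr\}$ and the case split deferred to Appendix \ref{appendix: optimal cooperation level} in Theorem \ref{thm: optimal cooperation level} determines which branch of the closed-form in Lemma \ref{lemma: closed-form level k behaviors utility design, -1<gamma<1} is active. In each subcase one substitutes $\gamma^*$ into $\tilde{q}_S^{(k)}+\tilde{q}_B^{(k)}$, expresses the resulting total in closed form, and verifies that it exceeds $(b-c)/(2a)+f/2$; equivalently, the $d$-distance collapses to a rational expression in $\alpha$ and $f$ that the assumption $0\le f\le \alpha$ bounds above by $(\alpha-f)/2$. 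This odd-$k$ bookkeeping, rather than any conceptual difficulty, is the main obstacle, and as in Theorem \ref{thm: optimal cooperation level} it is natural to relegate the detailed algebra to the appendix.
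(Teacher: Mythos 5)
Your reduction via Lemma \ref{lemma: equivalent social welfare measure} is the right first move, and your even-$k$ computation is correct: with $\alpha=(b-c)/a$, Eq.~\eqref{eq: global maximal objective} gives $d(\tilde q_S^{(k)},\tilde q_B^{(k)})=(\alpha-f)^2/(2\alpha-f)\le(\alpha-f)/2=d(q_S^{NE},q_B^{NE})$, which indeed reduces to $f\ge0$. But this is a genuinely different, and much heavier, route than the paper's. The paper never evaluates anything at $\gamma^*$: since $\gamma^*$ is optimal by construction, it suffices to exhibit a single benchmark $\gamma_0$ that already matches the equilibrium welfare, and $\gamma_0=-1$ works because it makes $\tilde{\mathcal B}_B$ identically equal to $f$ (Lemma \ref{lemma: best response function utility design}), so $\tilde q_S^{(k)}+\tilde q_B^{(k)}=(b-c)/2a+f/2=q_S^{NE}+q_B^{NE}$ and hence $W(\tilde q_S^{(k)},\tilde q_B^{(k)})=W^{NE}$; optimality of $\gamma^*$ then yields $PoR(k,k)\le1$ in one line, with no parity split and no dependence on which branch of Theorem \ref{thm: optimal cooperation level} is active. (As a side note, the paper's benchmark identity actually misses $k=1$, where $\gamma=-1$ gives total $(b-c)/2a+3f/4$; your direct route handles $k=1$ cleanly via Lemma \ref{lemma: A0}, which is a small advantage of your approach.)

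The genuine gap is the odd-$k$ case, which you assert but do not execute, and it is not the routine bookkeeping you describe. When $\gamma^*=-f/(2\alpha)$ is the active branch, the total is exactly $\alpha$ and $d=0$, fine. But when $\gamma^*=2((\alpha-f)/\alpha)^{2/(k+1)}-1$ is active, showing the total is at least $(b-c)/2a+f/2$ reduces (with $\beta=f/\alpha$) to $(1-\beta)^{2/(k+1)}\le1-\beta/2$, and this is \emph{false} for general $\beta\in[0,1]$ and large odd $k$ (e.g., $\beta=1/2$, $k=99$). It holds only because of the subcase structure in Appendix \ref{appendix: optimal cooperation level}: when $T_{zy}\le T_{xy}$ the needed inequality is exactly a rearrangement of that condition, and when $T_{zy}\ge T_{xy}$ the total at $\gamma^*$ equals the Nash total outright. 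So the claim that ``the assumption $0\le f\le\alpha$ bounds'' the distance is not correct as stated; you must import the case analysis of Lemmas \ref{lemma: A6} and \ref{lemma: A7}. Either carry that out explicitly or adopt the paper's benchmark-at-$\gamma=-1$ argument, which sidesteps the entire issue.
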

\begin{proof}
    Since the $\gamma^*$ shown in Theorem \ref{thm: optimal cooperation level} is optimal, the social welfare it achieves is no worse than another cooperation level $\gamma_0 \in \mathbb{R}$. Therefore, it suffices to show that the level-$k$ performance is no worse than the equilibrium performance when $\gamma = \gamma_0$. To this end, we pick $\gamma_0 = -1$. By Lemma \ref{lemma: closed-form level k behaviors utility design, -1<gamma<1}, we obtain 
    \begin{align*}
        \tilde{q}_S^{(k)} + \tilde{q}_B^{(k)} = \frac{b-c}{2a} + \frac{f}{2} = q_S^{NE}+q_B^{NE},
    \end{align*}
    for any $k \in \mathbb{N}$. According to Lemma \ref{lemma: equivalent social welfare measure}, it implies $W(\tilde{q}_S^{(k)}, \tilde{q}_B^{(k)}) = W(q_S^{NE}, q_B^{NE})$. Therefore, by applying the optimal cooperation level $\gamma^*$ shown in Theorem \ref{thm: optimal cooperation level}, the achieved social welfare is at least as large as the equilibrium social welfare. 
\end{proof}

{Essentially, the theorem indicates that one firm should cooperate with or fight its opponent to achieve a higher payoff, given that they have the same amount of cognitive ability and computational resources (i.e., same level of rationality).} This partially explains the real-world phenomenon that the social planner usually cooperates with self-interested firms instead of simply maximizing social welfare \cite{news1, news2}. In fact, by designing a utility function, the social welfare can be further improved.

\section{Numerical Studies}\label{sec: numerical}

In this section, we use numerical experiments to demonstrate the level-$k$ performance. Specifically, we aim to first showcase the comparison between the level-$k$ performance and the Nash equilibrium performance, and then demonstrate the value of information on the social planner's strategic actions and outcomes. Throughout this section, {we set the parameters as $a=b=1$, $c=0.25$ and $m=0$.} 

Suppose that the self-interested firm has a rationality level $k=1$ and the benevolent social planner $B$ is of level-$(k+\Delta)$, Fig. \ref{fig: PoR-production level} shows firm $B$'s production level. When the flow capacity is small, the quantity produced by the social planner is exactly the flow capacity, no matter what rationality level it has, except being of level-0 ($\Delta=-1$). On the other hand, when the flow capacity is relatively large, the social planner tends to produce more as it becomes more and more rational. However, doing so may eventually {reduce} social welfare, since the opponent is fixed at level $k=1$. Fig. \ref{fig: PoR-PoR} demonstrates the price of rationality, as a comparison of level-$k$ and equilibrium performances. Recall by definition that the level-$k$ performance is better (worse) than the equilibrium performance if $PoR$ is {less than 1 (greater than 1)}. We see that when the social planner is one level less rational than its opponent, i.e., $\Delta=-1$, the social welfare is {reduced} as the self-interested firm controls the market and maximizes its own profit. When $\Delta=0$, two suppliers are equally rational and the level-$k$ performance may be better or worse than the equilibrium performance, depending on the flow capacity limit. If the social planner is exactly one {level} more rational than the self-interested firm, then the social welfare is maximal in all cases. Moreover, by comparing $\Delta=2$ and $\Delta=\infty$, we observe that being fully rational ($\Delta=\infty$) even {reduces} social welfare more than being a bit too rational ($\Delta=2$). In a nutshell, social welfare may be {reduced} both when the social planner's rationality level is too low and too high.

\begin{figure}[ht]
    \subfigure[Firm $B$'s production level]{
        \begin{minipage}[t]{0.49\linewidth}
        \centering
        \includegraphics[width=\linewidth]{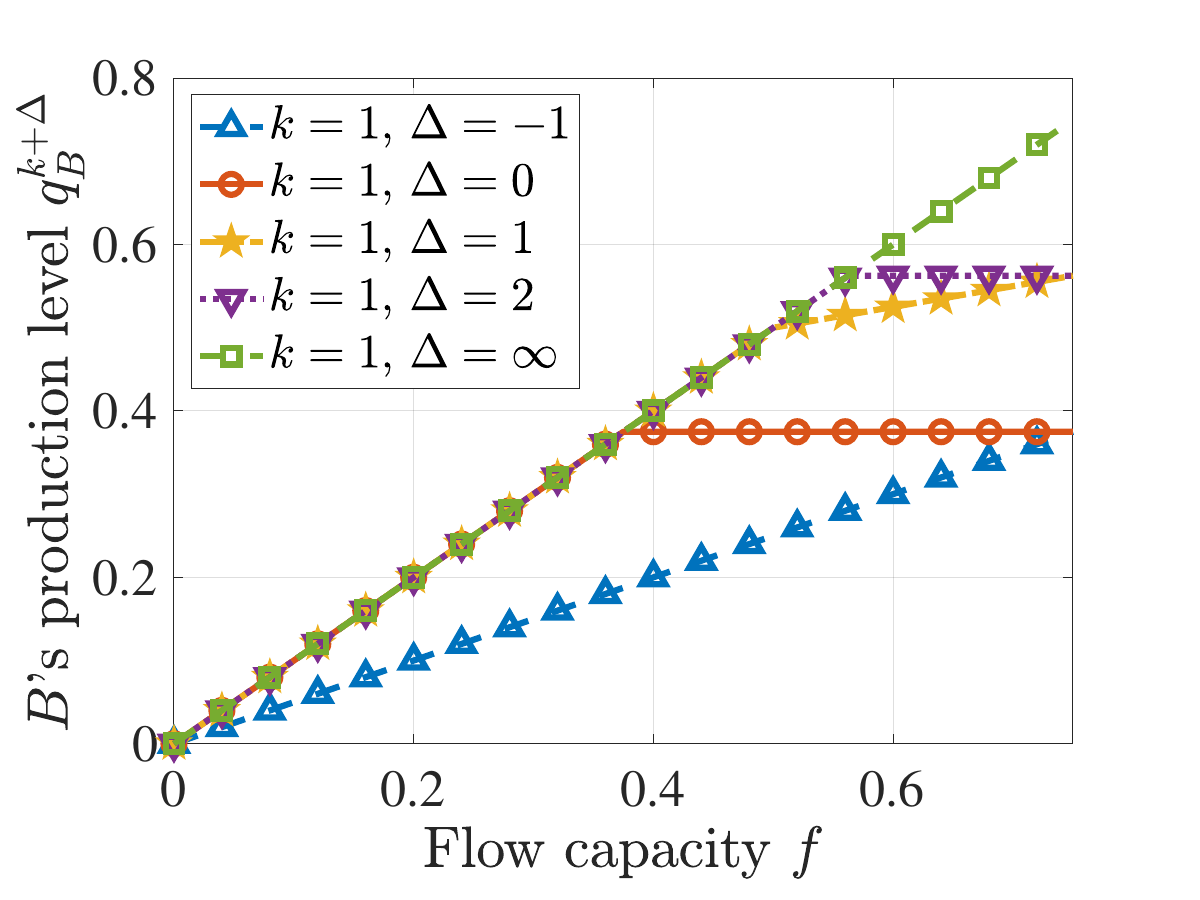}
        \label{fig: PoR-production level}
    \end{minipage}
    }%
    \subfigure[Price of rationality]{
    \begin{minipage}[t]{0.49\linewidth}
        \centering
        \includegraphics[width=\linewidth]{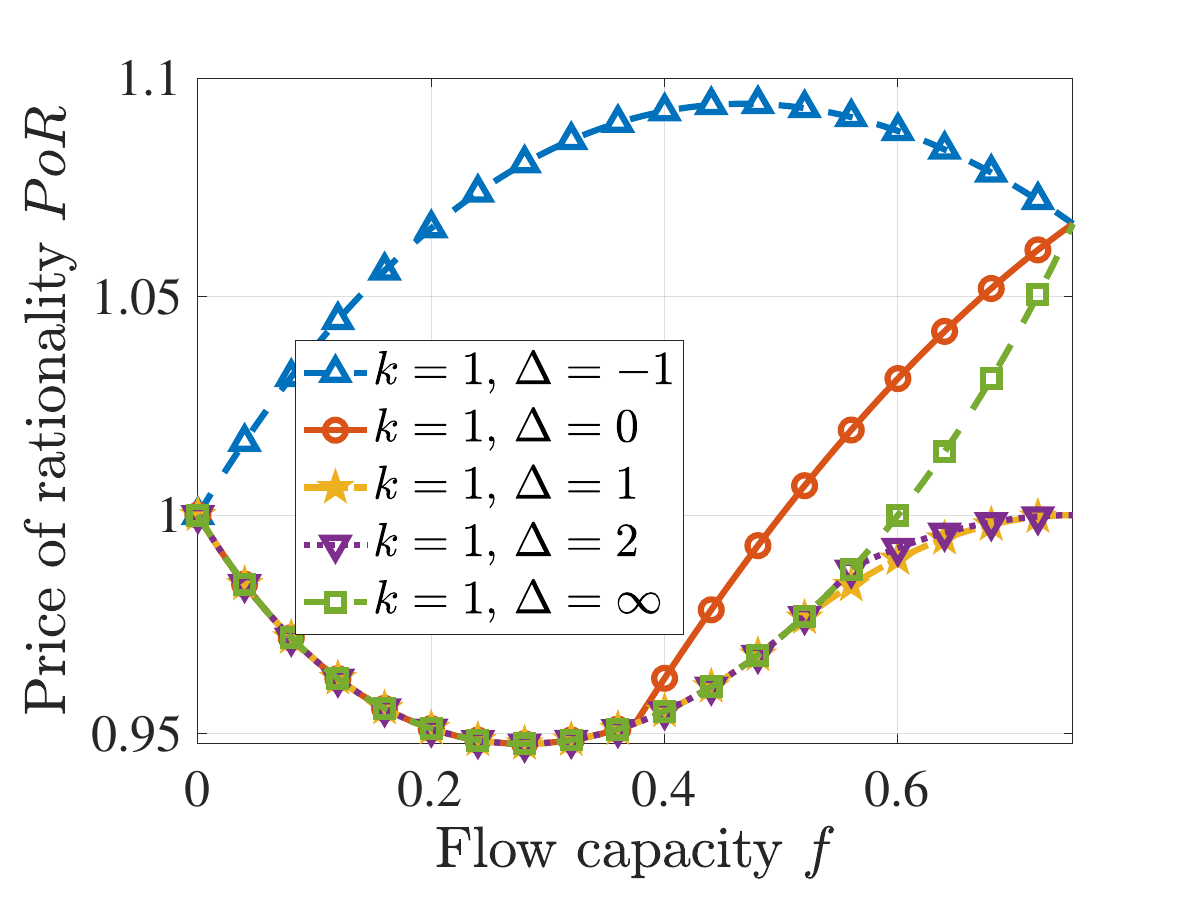}
        \label{fig: PoR-PoR}
    \end{minipage}
    }
    \caption{The comparison of level-\textit{k} performance and equilibrium performance.}
    \label{fig: PoR}
\end{figure}

Next, we fix the flow limit $f$ and investigate how the social planner's rationality level affects social welfare. By Theorem \ref{thm: monotonicity of performance}, we learn that if the self-interested firm is level-$k$, then social welfare may be reduced when the social planner is more ($\Delta>1$) or less rational ($\Delta<1$) than being level-$(k+1)$. However, it is still unknown if it is better to have a more  or less rational social planner. We first assume $k=4$, the result is shown in Fig. \ref{fig: k4}. It can be seen that the maximal social welfare is achieved both when $\Delta=0$ and $\Delta=1$, implying that being a bit less rational may also achieve the best payoff. However, if $\Delta$ further deviates from 1, it is better for the social planner to be more rational than less rational. By comparing the results with different flow limits $f$, we observe that if the flow limit is small, the social planner will have a wider range of rationality levels to produce the same. When the flow limit is larger, the social planner's market control power gets stronger, so it should be more sophisticated. Then, we assume the self-interested firm to be more rational, i.e., $k=8$, and the results are depicted in Fig. \ref{fig: k8}. By comparing it with the case of $k=4$, we observe that if the other seller in the market is more rational, then the social planner can be less sophisticated. This is because there is a wider range of rationality levels that results in the maximum social welfare.

\begin{figure}[ht]
    \subfigure[$k=4$]{
        \begin{minipage}[t]{0.5\linewidth}
        \centering
        \includegraphics[width=\linewidth]{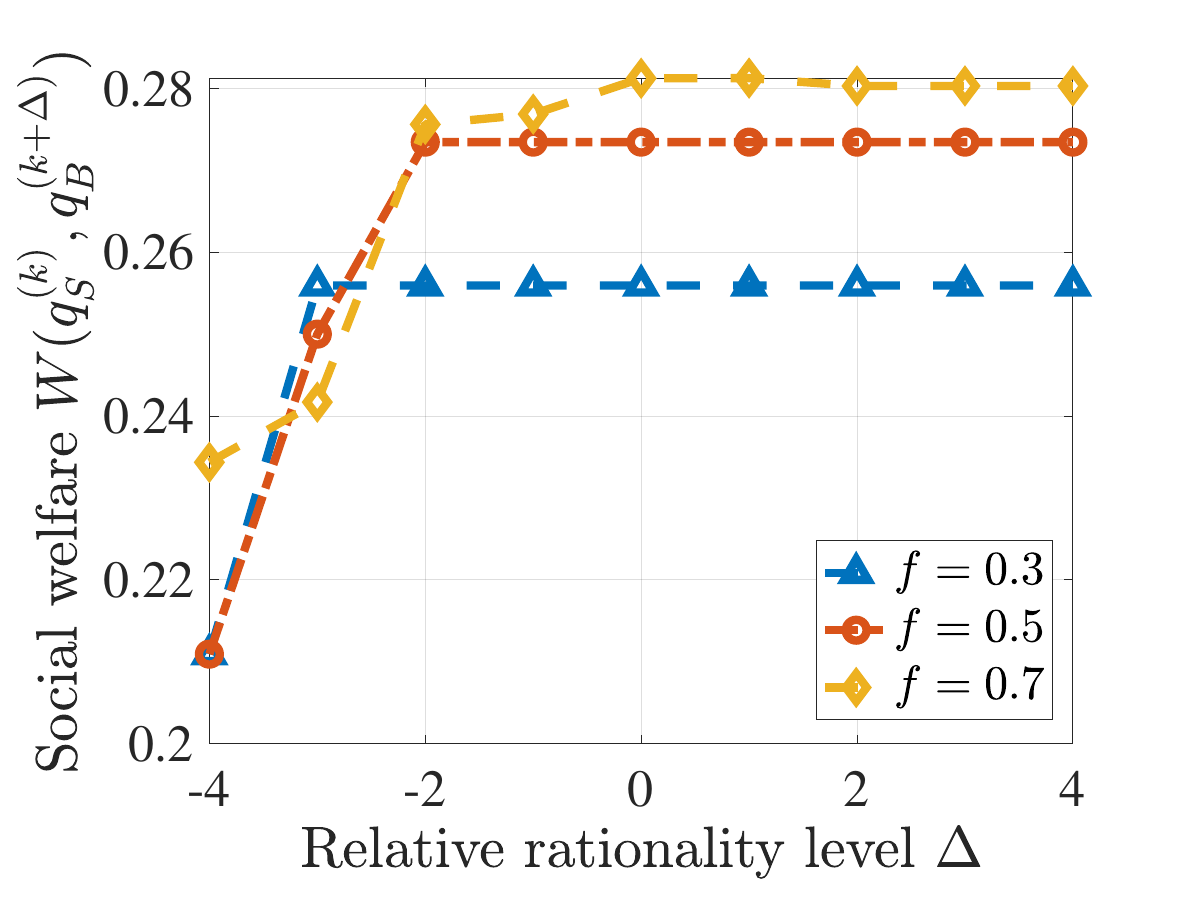}
        \label{fig: k4}
    \end{minipage}
    }%
    \subfigure[$k=8$]{
    \begin{minipage}[t]{0.5\linewidth}
        \centering
        \includegraphics[width=\linewidth]{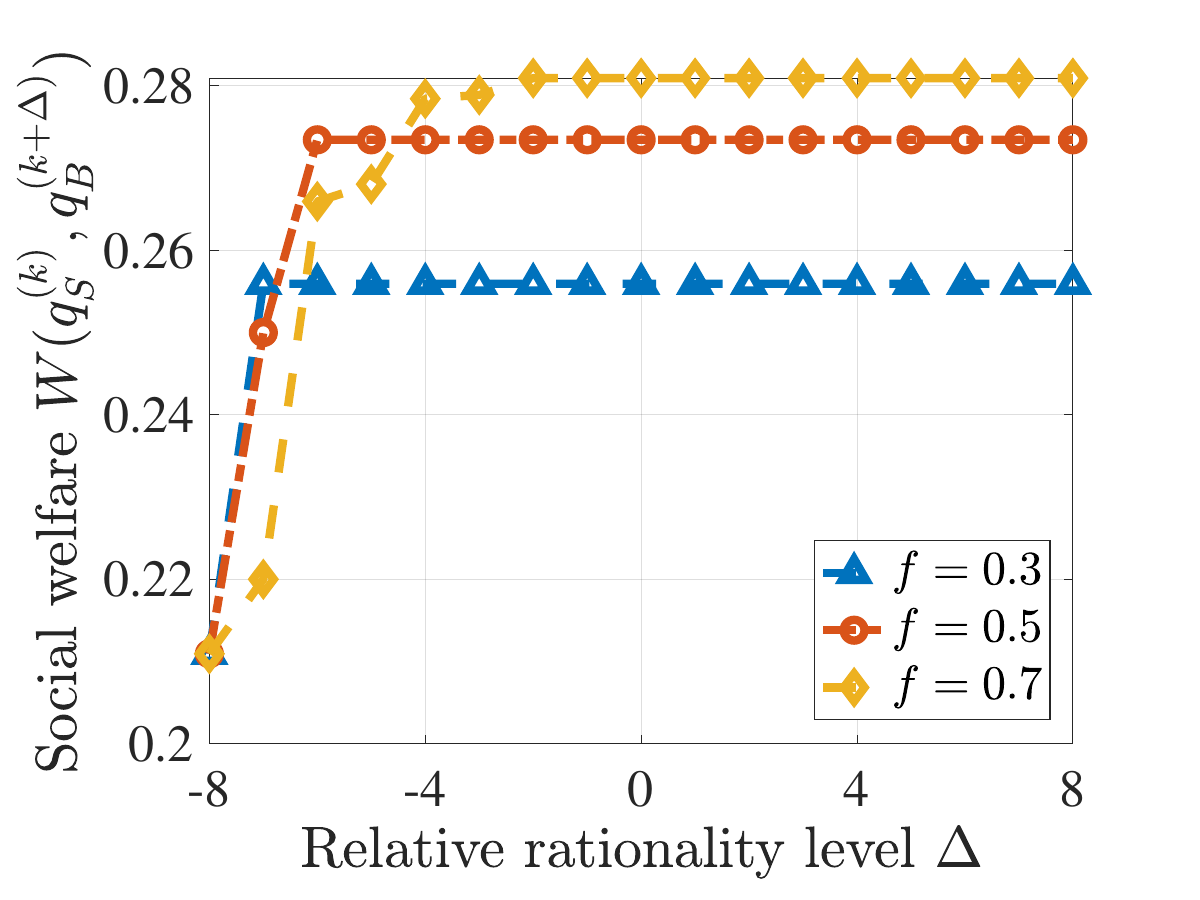}
        \label{fig: k8}
    \end{minipage}
    }
    \caption{The effect of social planner's rationality level on social welfare performance.}
    \label{fig: social welfare vs delta}
\end{figure}

Next, we investigate a scenario in which the rationality levels of two firms sum up to a constant and study how the social welfare varies with the relative rationality level $\Delta$. One can regard cognitive abilities and computational powers (i.e., rationality levels) as a total amount of available resources, and the goal is to find out the best way of allocating these resources to two firms such that social welfare is maximized. To this end, we assume two firms' rationality levels sum up to 7, i.e., $k+(k+\Delta) = 7$. Here, recall that $k$ and $k+\Delta$ are the rationality levels of the self-interested firm and the social planner, respectively. We let $k$ vary from 1 to 6 and thus $\Delta$ varies from -5 to 5. Fig. \ref{fig: summation} demonstrates the social welfare as a function of the relative rationality level $\Delta$. We interpret the results from two aspects. First, we fix the flow limit $f$ and compare the social welfare with different $\Delta$. Specifically, when $f = 0.7$, social welfare is maximized and equal to the optimal social welfare if the social planner is one level more rational than the self-interested firm (i.e., $k=3$ and $\Delta = 1$). This is because, when the flow limit is large, the social planner has sufficient power to impact the market by generating more electricity. When $f = 0.3$ or $0.5$, social welfare is weakly increasing in $\Delta$, indicating that social welfare will benefit more if the self-interested firm is much less rational than the social planner when the flow limit is small. The reason is that the social planner does not have sufficient capabilities to impact the market anymore. Therefore, for the social planner, having a less rational opponent (i.e., $k=1$ and $\Delta=5$) benefits social welfare more than holding a correct belief about its opponent (i.e., $k=3$ and $\Delta=1$). Second, we fix the relative rationality level $\Delta$ and compare the social welfare with different flow limits. If the social planner is much less rational than the self-interested firm (e.g., $k=6$ and $\Delta = -5$), social welfare is minimal regardless of the flow limit, as the social planner only has limited cognitive ability and computational resources to make a strategic move. If $k=3$ and $\Delta = 1$, we observe that social welfare increases with the flow limit since the social planner has more capabilities to impact the market. If $k=1$ and $\Delta=5$, social welfare is reduced when the flow limit increases from 0.5 to 0.7. This is due to the overproduction conducted by the social planner, since it not only has a wrong belief about its opponent but also has sufficient market power to over-generate electricity.

\begin{figure}[ht]
        \begin{minipage}[b]{0.48\linewidth}
        \centering
        \includegraphics[width=\textwidth]{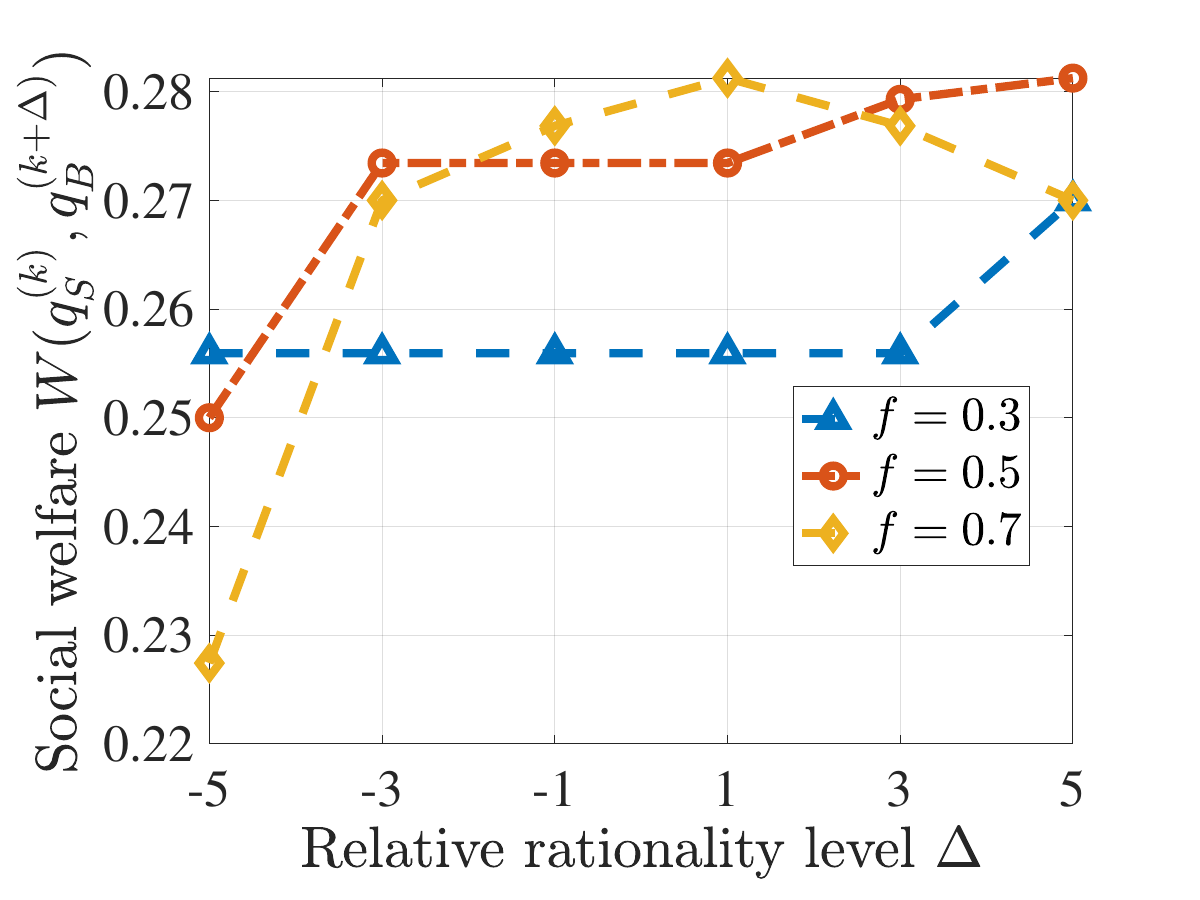}
        \caption{Social welfare comparison when assuming the summation of two firms' rationality levels remains the same.}
        \label{fig: summation}
        \end{minipage}
        \hspace{0.1cm}
        \begin{minipage}[b]{0.48\linewidth}
        \centering
        \includegraphics[width=\textwidth]{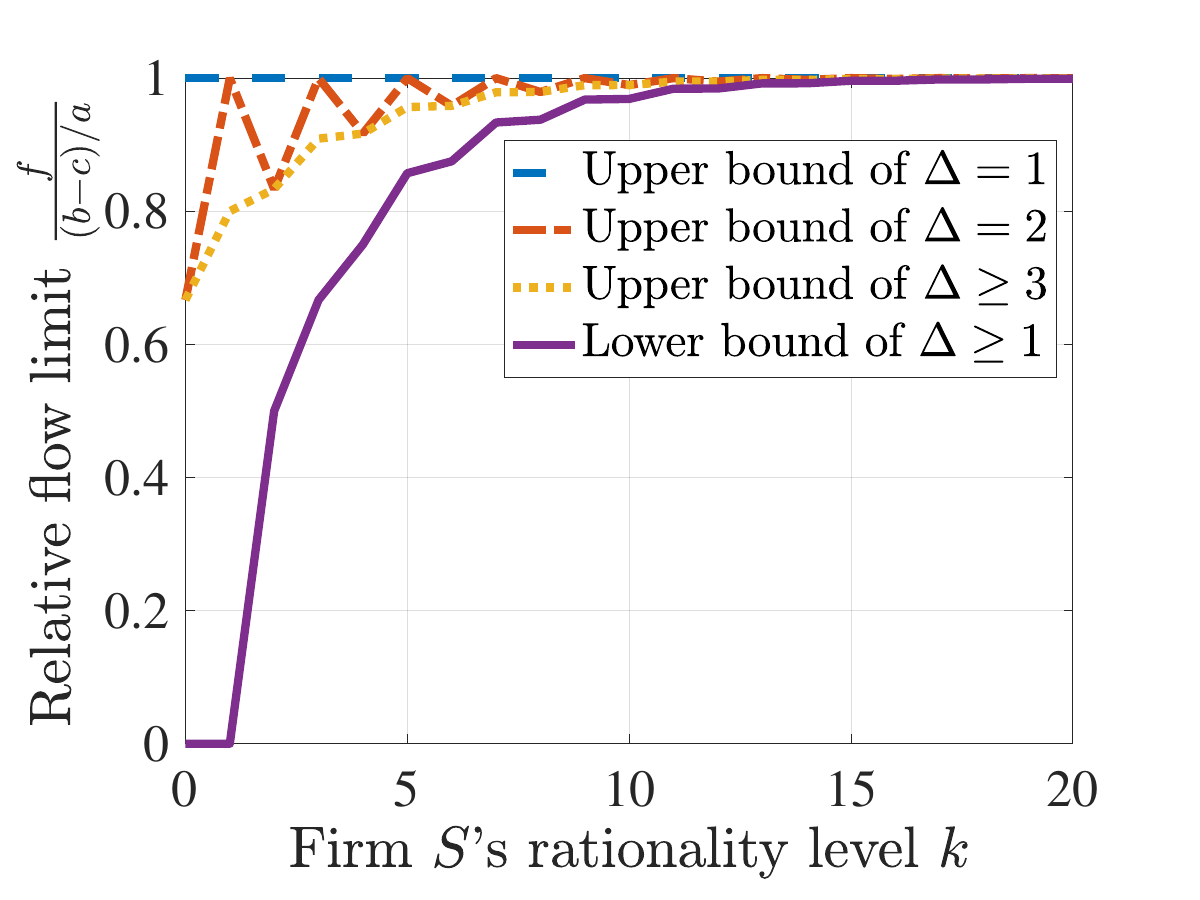}
        \caption{The region such that $PoR$ is strictly less than 1, i.e., the level-$k$ performance is better than equilibrium performance.}
        \label{fig: iff region}
        \end{minipage}
        \end{figure}

Afterward, we fix the social planner's relative rationality level $\Delta$ and numerically demonstrate the feasible region such that the price of rationality is less than 1, or say, the level-$k$ performance is better than the equilibrium performance. The results are based on Theorem \ref{thm: level-k NE comparison} and shown in Fig. \ref{fig: iff region}. The dashed, dashdotted and dotted lines depict the upper bounds of the feasible regions for different selections of $\Delta$, while the solid line draws the lower bounds for any $\Delta \ge 1$. For each $\Delta$, in the region lying between the upper and lower bounds, the social welfare in level-$k$ reasoning outperforms the Nash equilibrium, i.e., $PoR(k, k+\Delta) < 1$. Then, we have the following observations. First, we see that the region shrinks as $\Delta$ increases from 1 to 3. This is because the social planner holds the correct belief and best responds to the self-interested firm when $\Delta = 1$, thus the resulting social welfare is the best among other relative rationality level $\Delta$. However, the region converges to the case of $\Delta = 3$ if $\Delta$ further increases, indicating that there exists a non-empty region such that the level-$k$ performance is better than the equilibrium performance, as long as the social planner is more rational then the self-interested firm. Second, the edges of each region show a fluctuating characteristic, depending on the parity of $k$. This is due to the parity of two suppliers' level-$k$ behaviors. Third, we observe that level-$k$ performance is better generally when the flow limit is large, since the social planner has more market power to act strategically against its opponent.

Next, we demonstrate the performances of different strategies of the social planner when having access to different information. Based on the behavioral study \cite{CH_Model}, we suppose the distribution of level-$k$ follows a Poisson distribution with mean $\tau=1.5$, as depicted in Fig. \ref{fig: k distribution}. 
For the realizations of $K=1,2$ or 3, Fig. \ref{fig: value of information} demonstrates the value of information for different strategies. First, we observe that the stochastic and robust strategies have similar performances. This is because the level distribution concentrates on $k$ being 0, 1 or 2, so the stochastic strategy is close to the level-1, level-2 and level-3 behaviors of the social planner. For the same reason, it also results in the expected value of incomplete information being very small. Second, the value of complete information of the robust strategy may be smaller or greater than that of the stochastic strategy. Third, when the flow capacity is small, the value of information under all scenarios is zero, since all three strategies result in the same production level.

\begin{figure}[ht]
    \subfigure[Level distribution]{
        \begin{minipage}[t]{0.49\linewidth}
        \centering
        \includegraphics[width=0.97\linewidth]{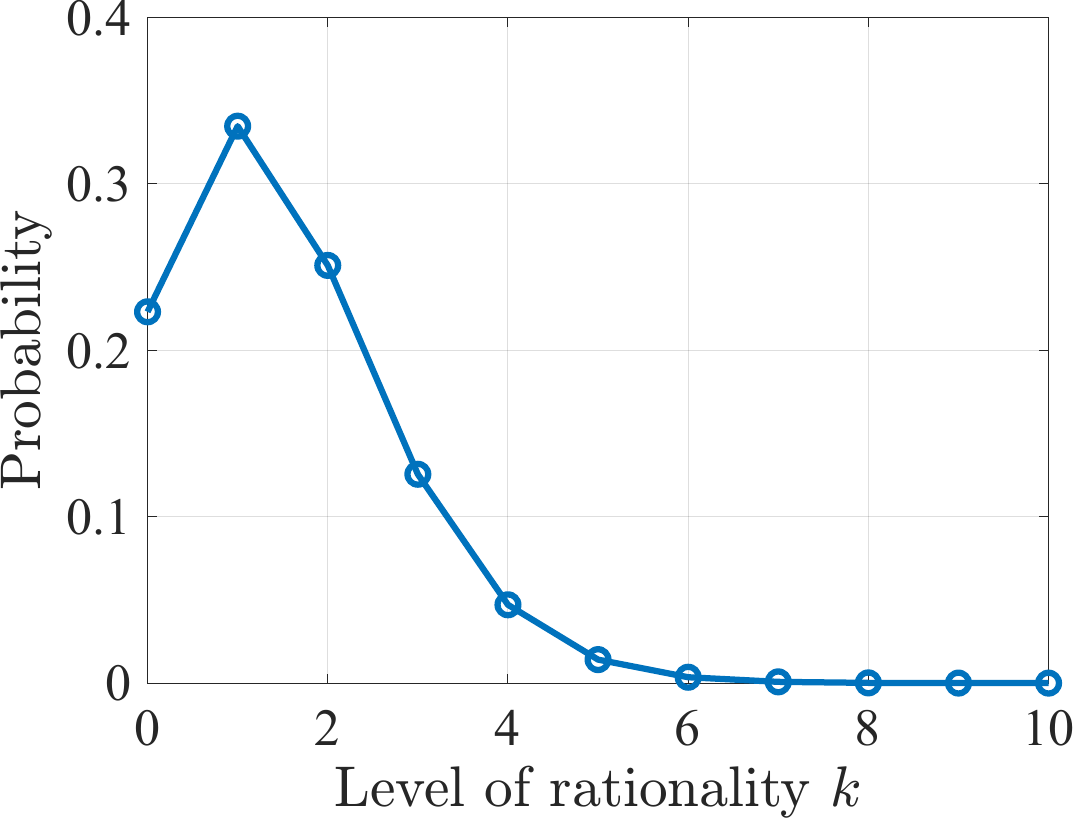}
        \label{fig: k distribution}
    \end{minipage}
    }%
    \subfigure[Value of information]{
    \begin{minipage}[t]{0.49\linewidth}
        \centering
        \includegraphics[width=0.93\linewidth]{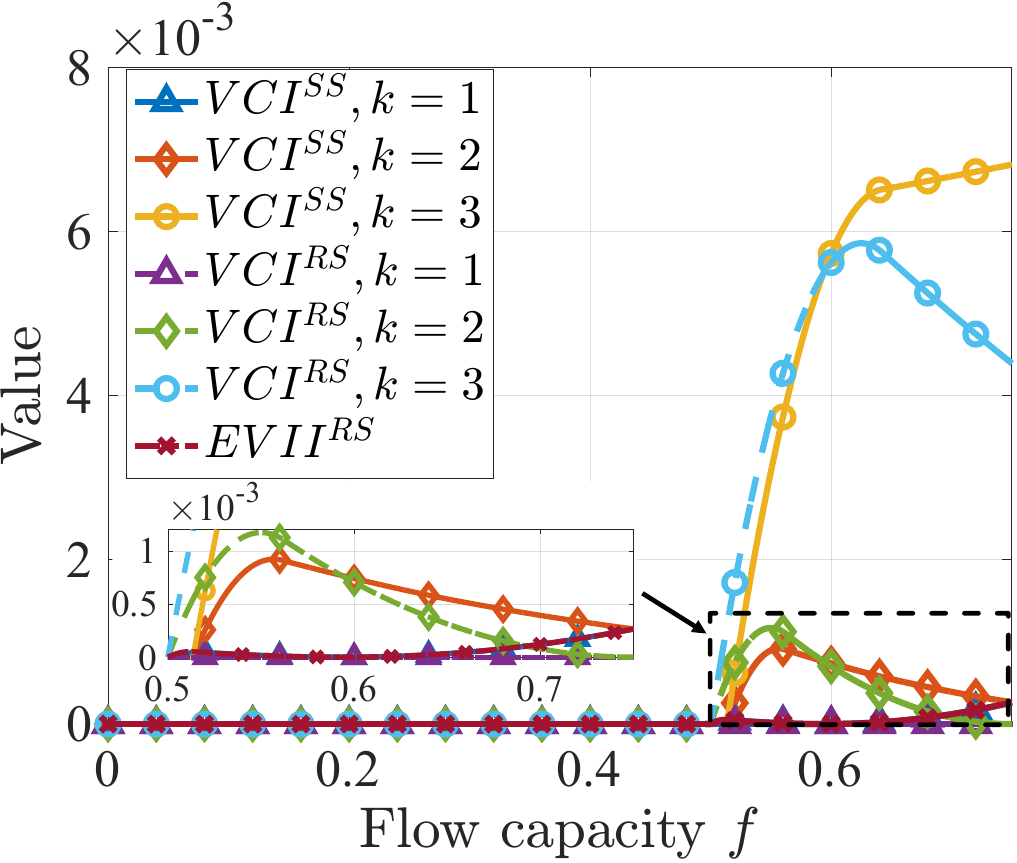}
        \label{fig: value of information}
    \end{minipage}
    }
    \caption{The comparison of social planner's strategies, when having access to different information.}
    \label{fig: strategy comparison}
\end{figure}
Finally, we consider the benevolent social planner to cooperate or fight the self-interested firm and showcase the optimal cooperation level proposed in Theorem \ref{thm: optimal cooperation level} in Fig. \ref{fig: optimal cooperation level}. First, we observe that the optimal cooperation level decreases in flow capacity regardless of $k$ being odd or even. We also refer to the flow capacity as potential market power of the planner. This implies that the benevolent social planner should cooperate more with the self-interested firm when it has less market power. On the contrary, if the social planner has more market power, it should start fighting the self-interested firm $S$ by adding the negative of firm $S$'s payoff to its utility function. Second, as both agents become more rational altogether, the social planner tends to cooperate more with its opponent, assuming the flow limit $f$ preserves the same. Moreover, we demonstrate the price of rationality by using the optimal cooperation level proposed in Theorem \ref{thm: optimal cooperation level}, as shown in Fig. \ref{fig: utility design-PoR}. It can be seen that the level-$k$ performance is always better than the equilibrium performance. However, the level-$k$ performance highly depends on the parity of the two suppliers' rationality level $k$. Specifically, when $k$ is even, the price of rationality and thus the level-$k$ performance with the optimal cooperation level $\gamma^*$ is invariant with the rationality level of two firms. In contrast, when $k$ is odd, the social welfare achieved by $\gamma^*$ in level-$k$ model is decreasing in $k$, as shown in Fig. \ref{fig: utility design-odd case}. This implies that, as both firms become more rational, the maximum welfare that social planner is capable of achieving will decrease. It can be interpreted as follows. As the self-interested firm gets smarter and has more computational resources, it becomes harder for the social planner to design its utility function to counter the self-interested firm. Therefore, the maximal achievable social welfare is reduced. As for the reason behind such parity characteristics, we leave it to future studies. 
\begin{figure}[ht]
    \subfigure[When $k$ is odd]{
        \begin{minipage}[t]{0.49\linewidth}
        \centering
        \includegraphics[width=\linewidth]{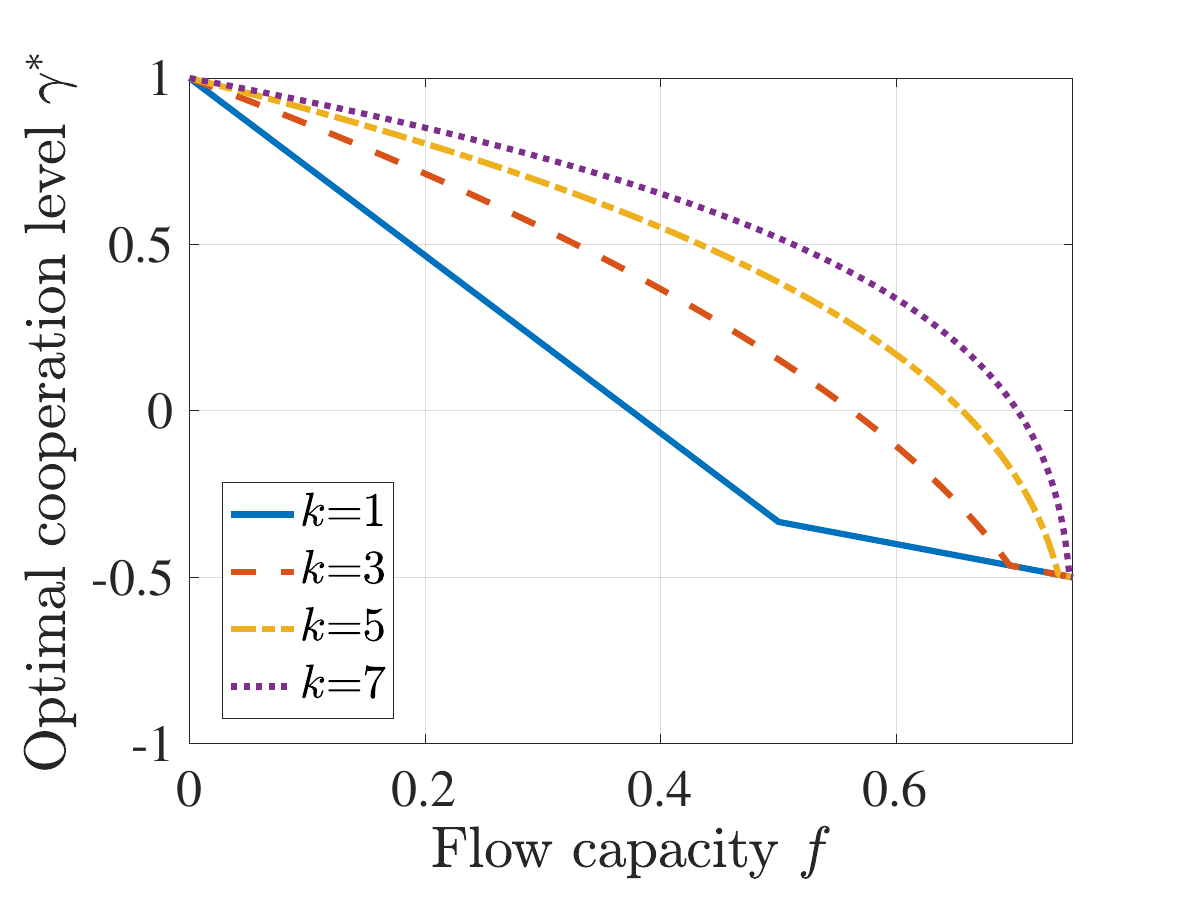}
        \label{fig: gamma-odd}
    \end{minipage}
    }%
    \subfigure[When $k$ is even]{
    \begin{minipage}[t]{0.49\linewidth}
        \centering
        \includegraphics[width=\linewidth]{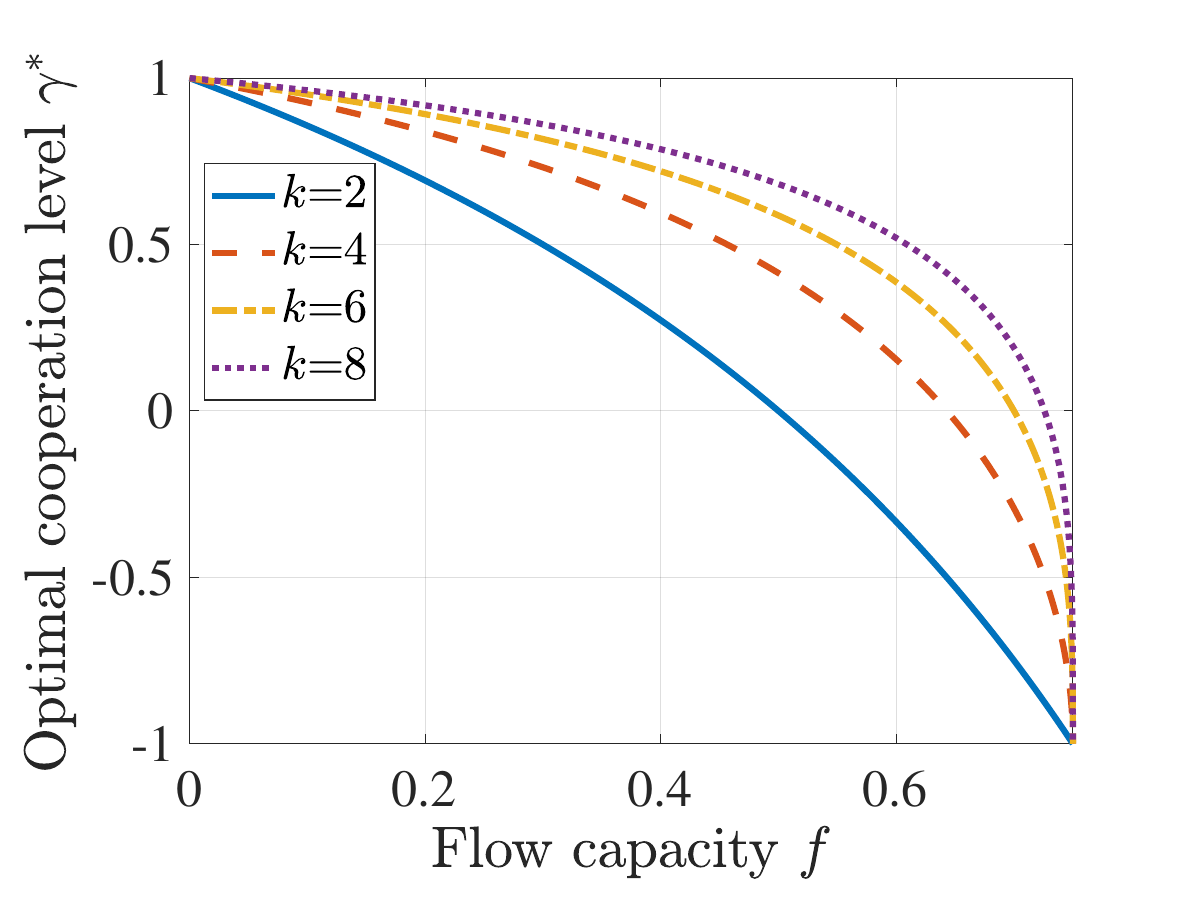}
        \label{fig: gamma-even}
    \end{minipage}
    }
    \caption{The optimal cooperation level for different rationality levels.}
    \label{fig: optimal cooperation level}
\end{figure}

\begin{figure}[ht]
    \subfigure[$PoR$ for any $k$]{
        \begin{minipage}[t]{0.49\linewidth}
        \centering
        \includegraphics[width=\linewidth]{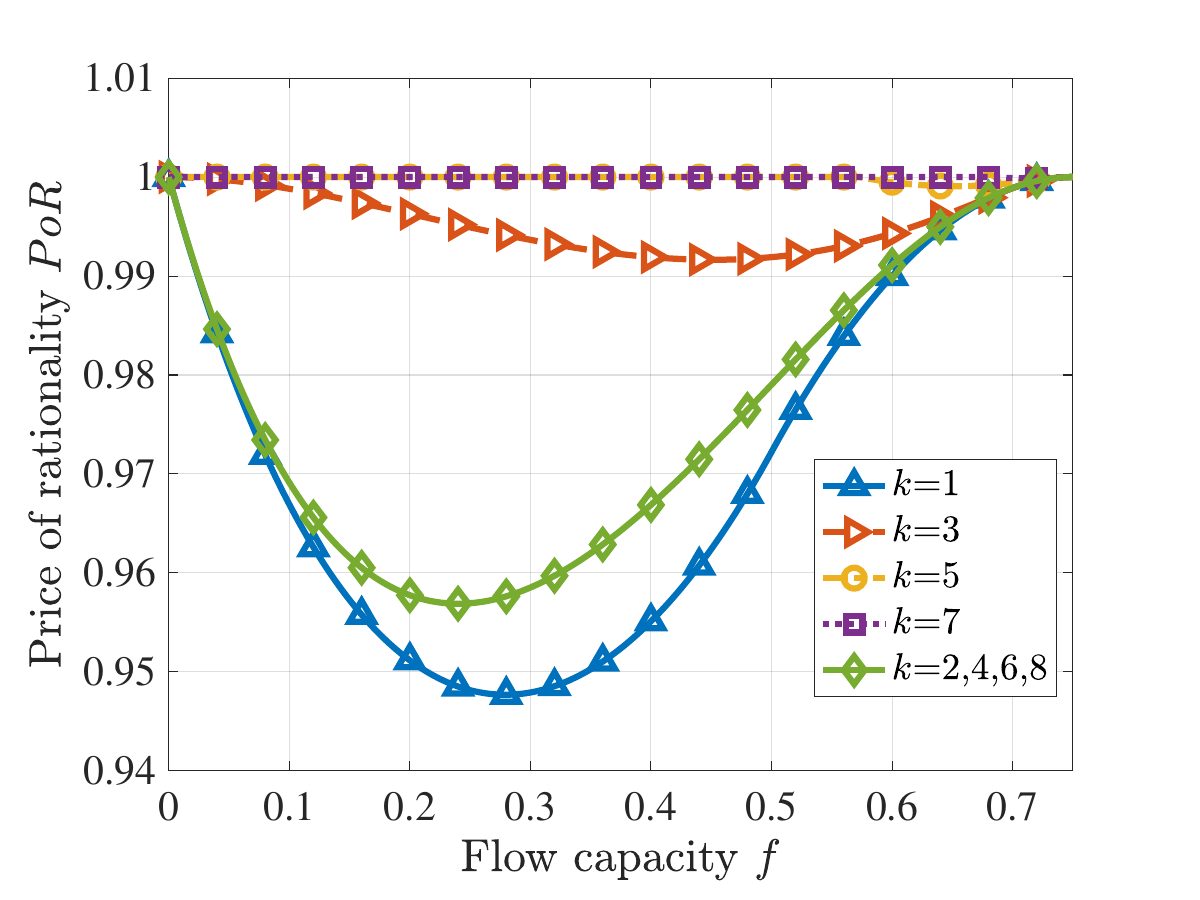}
        \label{fig: utility design-PoR}
    \end{minipage}
    }%
    \subfigure[Level-$k$ performance for odd $k$]{
    \begin{minipage}[t]{0.49\linewidth}
        \centering
        \includegraphics[width=\linewidth]{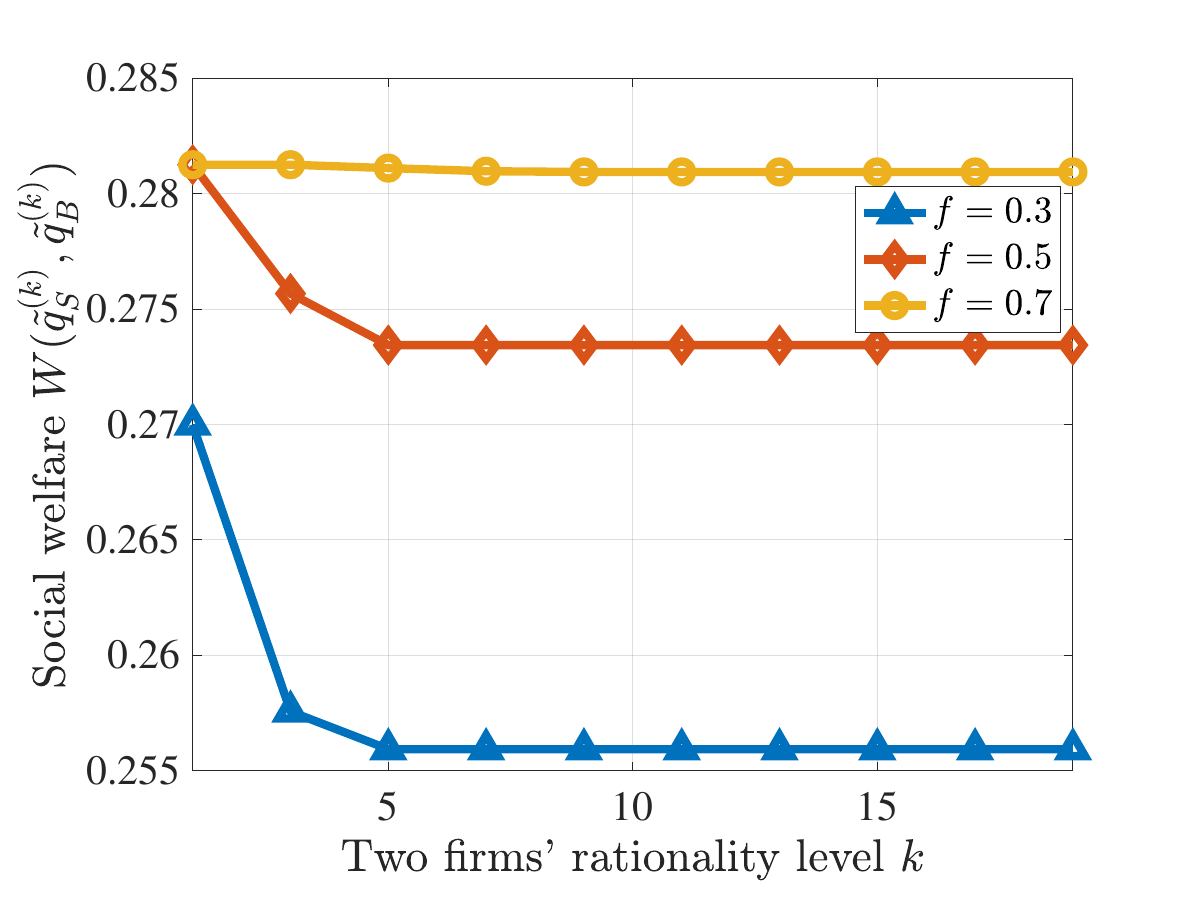}
        \label{fig: utility design-odd case}
    \end{minipage}
    }
    \caption{The price of rationality and level-$k$ performance for different rationality levels, when the benevolent social planner uses utility function $U$ and optimal cooperation level $\gamma^*$ proposed in Theorem \ref{thm: optimal cooperation level}.}
\end{figure}

\section{Conclusion}\label{sec: conclusion}
In this paper, we challenge the full rationality assumption in the Nash equilibrium solution concept and study the effect of bounded rationality in electricity markets using Cournot competition and level-$k$ reasoning. We find that both the strategy of the bounded rational social planner and system outcome may deviate {significantly} from {those derived by} the Nash equilibrium. {Our results reveal that the rationality level of the social planner being either too high or too low may both reduce social welfare performance.} We further propose different strategies for the social planner, when having access to different information. More interestingly, we find that social welfare is better off if the social planner cooperates with or fights the self-interested firm. Finally, numerical studies demonstrate the performance gap between bounded rational and fully rational setups, highlight the value of information when the social planner makes decisions and showcase the optimal cooperation level between the social planner and self-interested firm. {For future research, it would be of interest to generalize the producer's cost functions to see how more complex cost structures might influence the strategies and outcomes in the market.}






\section*{References}
\bibliographystyle{IEEEtran}
\bibliography{IEEEabrv,refs}

\appendices

\section{Proof of Theorem \ref{thm: level-k NE comparison}}\label{appendix: supplement}

For ease of notation, we let $f = \beta(b-c)/a$ with $\beta \in [0,1]$ since $0 \le f \le (b-c)/a$. We first assume $k>0$ is even and $\Delta=0$, then
    \begin{align}\label{eq:qsqb}
        q_S^{(k)} \!+\! q_B^{(k)} = \max\Big\{\! \min\left\{ x, y \right\}, \min\left\{ x-y+z, z \right\} \!\Big\},
    \end{align}
    where 
    \begin{align*}
        \begin{dcases}
            x = \left( 1-\frac{1}{2^{\frac{k}{2}+1}}+\frac{\beta}{2^{\frac{k}{2}+1}} \right)\frac{b-c}{a},\\
            y = \left( \frac{1}{2^{\frac{k}{2}+1}}+\beta \right)\frac{b-c}{a},\\
            z = \left( \frac{1}{2}+\frac{\beta}{2} \right)\frac{b-c}{a}.
        \end{dcases}
    \end{align*}
    Since $\beta \in [0,1]$, we see that {$x \le (b-c)/a$} and {$z \le (b-c)/a$}. {Thus, Eq.\ \eqref{eq:qsqb} implies $q_S^{(k)} + q_B^{(k)} \le \max\{x,z\} \le {(b-c)/a}$ and therefore}
    $$d(q_S^{(k)}, q_B^{(k)}) = {(b-c)/a}-(q_S^{(k)} + q_B^{(k)}).$$
    Recall the total produced quantity at Nash equilibrium as $q_S^{NE} + q_B^{NE} = z$, we see that $d(q_S^{NE},q_B^{NE}) = {(b-c)/a}-(q_S^{NE}+q_B^{NE})$. By Lemma \ref{lemma: equivalent social welfare measure}, the level-$k$ performance is better than the equilibrium performance if and only if 
    $$q_S^{(k)} + q_B^{(k)} > q_S^{NE}+q_B^{NE}.$$
   We combine Eq. \ref{eq:qsqb}  with the observation $\min\left\{ x-y+z, z \right\} \le z$, the previous inequality is satisfied if and only if
    $$\min\left\{ x, y \right\} > z,$$
    which implies that $(1-\frac{1}{2^{\frac{k}{2}}}) < \beta < 1$ since $k>0$. When $k\in\mathbb{N}$ is even and $\Delta=1$, similar techniques can be established and the same condition is obtained. For $\Delta \ge 2$, we first write the production levels of two suppliers as follows,
    \begin{align*}
        \begin{dcases}
            q_S^{(k)} = \max\left\{ \frac{1}{2^{\frac{k}{2}+1}}\frac{b-c}{a}, \left(\frac{1}{2}-\frac{\beta}{2}\right)\frac{b-c}{a} \right\},\\
            q_B^{(k)} = \min\left\{ \left(1-\frac{1}{2^{\frac{k+\Delta}{2}}}+\frac{\beta}{2^{\frac{k+\Delta}{2}+1}}\right)\frac{b-c}{a}, \beta\frac{b-c}{a} \right\}.
        \end{dcases}
    \end{align*}
    By adding them up and embedding minimum into maximum, we obtain the total production level as
    \begin{align}
        q_S^{(k)} \!+\! q_B^{(k)} = \max\Big\{\! \min\left\{ x, y \right\}, \min\left\{ x-y+z, z \right\} \!\Big\},
    \end{align}
    where 
    \begin{align*}
        \begin{dcases}
            x = \left( 1-\frac{1}{2^{\frac{k+\Delta}{2}}}+\frac{1}{2^{\frac{k}{2}+1}}+\frac{\beta}{2^{\frac{k+\Delta}{2}+1}} \right)\frac{b-c}{a},\\
            y = \left( \frac{1}{2^{\frac{k}{2}+1}}+\beta \right)\frac{b-c}{a},\\
            z = \left( \frac{1}{2}+\frac{\beta}{2} \right)\frac{b-c}{a}.
        \end{dcases}
    \end{align*}
    Denote 
    \begin{align*}
        \begin{dcases}
            T_{xy} = 1-\frac{1}{2^{\frac{k+\Delta}{2}+1}-1},\\
            T_{zy} = 1-\frac{1}{2^{\frac{k}{2}}}.
        \end{dcases}
    \end{align*}
    It can be verified that $x \le y$ if and only if $\beta \ge T_{xy}$, and $y \le z$ if and only if $\beta \le T_{zy}$. Furthermore, as we assume $\Delta \ge 2$, the inequality $2^{\frac{k}{2}}+1 \le 2^{\frac{k+\Delta}{2}+1}$ holds. It then implies 
    $$\frac{1}{2^{\frac{k+\Delta}{2}+1}-1} \le \frac{1}{2^{\frac{k}{2}}}.$$
    Thus, we have $T_{zy} \le T_{xy}$. We partition $\beta \in [0,1]$ into three intervals, the total production level $q_S^{(k)} + q_B^{(k)}$ can be seen as a piece-wise linear function of $\beta$. 
    \begin{enumerate}[label=(\roman*)]
        \item if $\beta \le T_{zy}$: we have $x \ge y$, $y \le z$ and thus
        $$q_S^{(k)} + q_B^{(k)} = \max\{y,z\} = z;$$
        \item if $T_{zy} \le \beta \le T_{xy}$: we have $x \ge y$, $y \ge z$ and thus
        $$q_S^{(k)} + q_B^{(k)} = \max\{y,z\} = y;$$
        \item if $\beta \ge T_{xy}$: we have $x \le y$, $y \ge z$ and thus
        $$q_S^{(k)} + q_B^{(k)} = \max\{x,x-y+z\} = x.$$
    \end{enumerate}
    Since $q_S^{NE} + q_B^{NE} = z$, we observe that the total production level $q_S^{(k)} + q_B^{(k)}$ is the same as equilibrium in interval (i), increases with a steeper slope in interval (ii), and then keeps increasing in interval (iii). The function is illustrated in Fig. \ref{fig: supplement}. 

    \begin{figure}[ht]
    \centering
    \includegraphics[width=\linewidth]{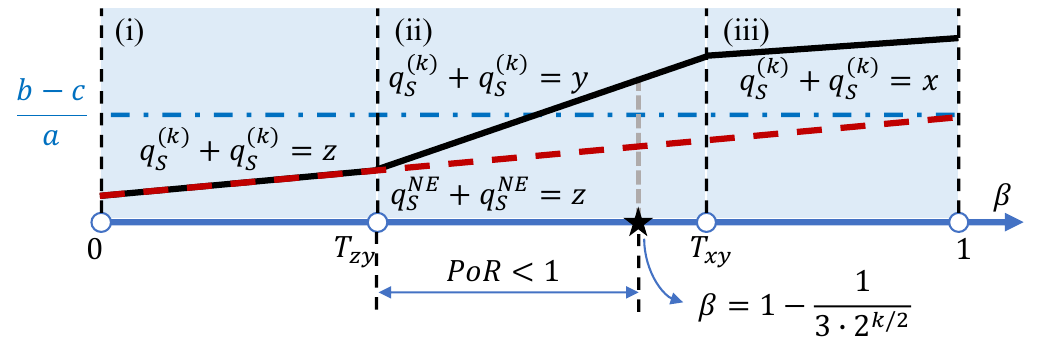}
    \caption{Illustration of the level-$k$ total production level, as a piece-wise linear function of $\beta$.}
    \label{fig: supplement}
    \end{figure}

    According to Lemma \ref{lemma: equivalent social welfare measure}, to make the level-$k$ performance strictly better than the equilibrium performance, we need the distance between $q_S^{(k)} + q_B^{(k)}$ and $(b-c)/a$ to be strictly less than the distance between $q_S^{NE} + q_B^{NE}$ and $(b-c)/a$. From Fig. \ref{fig: supplement}, we observe that such a region of $\beta$ exists if $\beta$ is slightly larger than $T_{zy}$. Next, we assume the boundary condition occurs in interval (ii) and thus let $y-(b-c)/a = (b-c)/a-(q_S^{NE} + q_B^{NE})$, yield
    $$\beta = 1-\frac{1}{3\cdot2^{\frac{k}{2}}} \le 1-\frac{1}{2^{\frac{k}{2}+2}-1} \le 1-\frac{1}{2^{\frac{k+\Delta}{2}+1}-1} = T_{xy},$$
    since $\Delta \ge 2$. Thus, the boundary condition indeed occurs in interval (ii). 
    
    When $T_{zy} < \beta < 1-\frac{1}{3\cdot2^{\frac{k}{2}}}$, it can be seen that $q_S^{(k)} + q_B^{(k)}$ is closer to $(b-c)/a$ compared to $q_S^{NE} + q_B^{NE}$, implying that level-$k$ performance is strictly better than equilibrium performance. On the other hand, when $\beta > 1-\frac{1}{3\cdot2^{\frac{k}{2}}}$, $q_S^{(k)} + q_B^{(k)}$ is monotonically deviating from $(b-c)/a$ while $q_S^{NE} + q_B^{NE}$ is monotonically approaching $(b-c)/a$. Thus, in this case, the equilibrium performance is better than level-$k$ performance. 

    Therefore, the sufficient and necessary condition for the case of $k$ being even and $\Delta \ge 2$ is $\beta \in (1-\frac{1}{2^{\frac{k}{2}}}, 1-\frac{1}{3\cdot2^{\frac{k}{2}}})$. When $k$ is an odd number, the statement can be established similarly.

\section{Proof of Theorem \ref{thm: optimal cooperation level}(ii)}\label{appendix: optimal cooperation level}

Recall that we aim to solve the following problem
\begin{align*}
    \underset{\gamma \in \mathbb{R}}{\max} \;W(\tilde{q}_S^{(k)},\tilde{q}_B^{(k)}),
\end{align*}
for any odd number $k \in \mathbb{N}$. Based on Lemma \ref{lemma: equivalent social welfare measure}, it is also equivalent to considering 
\begin{align*}
    \min_{\gamma \in \mathbb{R}} \; d(\tilde{q}_S^{(k)},\tilde{q}_B^{(k)}) = \big| (b-c)/a - (\tilde{q}_S^{(k)}+\tilde{q}_B^{(k)}) \big|.
\end{align*}
This indicates that we aim to find $\gamma$ such that the total production level $(\tilde{q}_S^{(k)}+\tilde{q}_B^{(k)})$ is as close to the optimal social total production $(b-c)/a$ as possible. For ease of notation, we will abbreviate $d(\tilde{q}_S^{(k)}, \tilde{q}_B^{(k)})$ to $d$ and let $f=\beta(b-c)/a$ with $\beta \in [0,1]$ since $0 \le f \le (b-c)/a$. Denote
\begin{align*}
    &x = \left[ 1 - \frac{\gamma(1+\gamma)^{\frac{k-1}{2}}}{2^{\frac{k+1}{2}}} - \frac{(1+\gamma)^{\frac{k-1}{2}}}{2^{\frac{k+1}{2}}}\frac{\beta}{2}\right] \frac{b-c}{a},\\
    &y = \left[ \beta+\frac{(1+\gamma)^{\frac{k-1}{2}}}{2^{\frac{k+1}{2}}} \left( 1-\frac{\beta}{2} \right)\right] \frac{b-c}{a},\\
    &z = \left( \frac{1}{2}+\frac{\beta}{2} \right)\frac{b-c}{a}.
\end{align*}
We first show the case of $k=1$, and then prove the statement for a general $k$ with $k>1$.

\begin{lemma}\label{lemma: A0}
    When $k=1$, an optimal cooperation level is given by
    \begin{align*}
        \gamma^* = \max\left\{ -\frac{\beta}{2}, 2\left( 1-\beta \right)-1 \right\}.
    \end{align*}
\end{lemma}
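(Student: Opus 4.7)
The plan is to apply Lemma \ref{lemma: equivalent social welfare measure} to reduce the welfare-maximization problem to the one-dimensional minimization of $d(\gamma) := \bigl|(b-c)/a - \tilde q_S^{(1)}(\gamma) - \tilde q_B^{(1)}(\gamma)\bigr|$ over $\gamma\in\mathbb{R}$, and then carry out a clean case analysis on the scaled flow capacity $\beta := fa/(b-c)\in[0,1]$.

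First, I would evaluate $\tilde q_S^{(1)}$ and $\tilde q_B^{(1)}$ on each of the three regimes $\gamma<-1$, $-1\le\gamma\le 1$, $\gamma>1$ by invoking Lemmas \ref{lemma: closed-form level k behaviors utility design, gamma<-1}, \ref{lemma: closed-form level k behaviors utility design, -1<gamma<1} and \ref{lemma: closed-form level k behaviors utility design, gamma>1}. Using $0\le f\le(b-c)/a$, the outer $\max$/$\min$ collapse and give the identity $\tilde q_S^{(1)} = (b-c)/(2a)-f/4$ on every regime. This is natural, since firm $S$'s level-1 response is to firm $B$'s level-0 (random) choice, which carries no dependence on $\gamma$.

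Collecting the cases for firm $B$, I would verify that
$$\tilde q_B^{(1)}(\gamma) = \begin{cases} f, & \gamma < 1-2\beta, \\ \tfrac{1-\gamma}{2}\,(b-c)/a, & 1-2\beta\le\gamma\le 1,\\ 0, & \gamma>1,\end{cases}$$
which is continuous at $\gamma=1-2\beta$ since $\tfrac{1-(1-2\beta)}{2}(b-c)/a = f$. Summing, $S(\gamma):=\tilde q_S^{(1)}+\tilde q_B^{(1)}$ is piecewise linear in $\gamma$, and on the middle band $d(\gamma)=\bigl|\gamma/2+\beta/4\bigr|(b-c)/a$, which attains zero at the unconstrained minimizer $\gamma_0=-\beta/2$.

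The analysis then splits at the threshold $\beta=2/3$, which is exactly when $\gamma_0$ enters the middle band (since $-\beta/2\ge 1-2\beta\iff\beta\ge 2/3$). If $\beta\ge 2/3$, then $\gamma^{*}=-\beta/2$ is feasible and attains $d=0$, so it is a global minimizer and coincides with $\max\{-\beta/2,1-2\beta\}$. If $\beta<2/3$, the feasible minimum on the middle band is attained at the left boundary $\gamma=1-2\beta$ with value $d=(1/2-3\beta/4)(b-c)/a$; this matches the constant value of $d$ on the whole left region $\gamma\le 1-2\beta$ (including the regime $\gamma<-1$) and is strictly less than the value $(1/2+\beta/4)(b-c)/a$ that $d$ attains on $\gamma>1$. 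Hence $\gamma^{*}=1-2\beta=\max\{-\beta/2,1-2\beta\}$ is an optimal cooperation level in this case.

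The main obstacle is bookkeeping rather than conceptual: one has to carefully check that the outer $\max$/$\min$ in Lemmas \ref{lemma: closed-form level k behaviors utility design, gamma<-1}--\ref{lemma: closed-form level k behaviors utility design, gamma>1} collapse to the claimed simple expressions on each regime, and then verify continuity of $S(\gamma)$ across the breakpoints $\gamma=-1$, $1-2\beta$, $1$ so that no optimum is missed when comparing the three regimes. The non-uniqueness of the optimum on the plateau $\gamma\le 1-2\beta$ (when $\beta<2/3$) is consistent with the lemma, which only claims that one optimal $\gamma$ is given by the stated formula.
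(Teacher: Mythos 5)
Your proposal is correct and follows essentially the same route as the paper's proof: both reduce to minimizing $\bigl|\gamma/2+\beta/4\bigr|$ via Lemma \ref{lemma: equivalent social welfare measure}, evaluate the level-1 behaviors on the three $\gamma$-regimes using Lemmas \ref{lemma: closed-form level k behaviors utility design, gamma<-1}--\ref{lemma: closed-form level k behaviors utility design, gamma>1}, and identify the unconstrained minimizer $-\beta/2$ versus the boundary $2(1-\beta)-1$. Your observation that $\tilde q_S^{(1)}$ is $\gamma$-independent (being a best response to the fixed level-0 play $f/2$) and your explicit continuity check across breakpoints are nice presentational streamlinings, but not a different argument.
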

\begin{proof}
    We first assume $\gamma < -1$, apply $k=1$ to Lemma \ref{lemma: closed-form level k behaviors utility design, gamma<-1} and obtain $\tilde{q}_S^{(1)}+\tilde{q}_B^{(1)} = \left( \frac{1}{2}+\frac{3\beta}{4} \right)\frac{b-c}{a}$. Second, if $\gamma > 1$, by Lemma \ref{lemma: closed-form level k behaviors utility design, gamma>1}, we have $\tilde{q}_S^{(1)}+\tilde{q}_B^{(1)} = \left( \frac{1}{2}-\frac{\beta}{4} \right)\frac{b-c}{a}$, which is farther from the optimal social total production since $\beta \in [0,1]$. 

    Next, if $-1 \le \gamma \le 1$, there holds
    \begin{align*}
        \tilde{q}_S^{(1)} \!+\! \tilde{q}_B^{(1)} = \min\left\{\! \left( 1-\frac{\gamma}{2}-\frac{\beta}{4} \right)\frac{b-c}{a}, \left( \frac{1}{2}+\frac{3\beta}{4} \right)\frac{b-c}{a} \!\right\}.
    \end{align*}
    Depending on which term in the preceding expression is greater, we discuss the following two cases.

    \textit{i) If $-1 \le \gamma \le 2(1-\beta)-1$:} we have 
    \begin{align*}
        \tilde{q}_S^{(1)}+\tilde{q}_B^{(1)} = \left( \frac{1}{2}+\frac{3\beta}{4} \right)\frac{b-c}{a},
    \end{align*}
    regardless of $\gamma$.

    \textit{ii) If $2(1-\beta)-1 \le \gamma \le 1$:} we have 
    \begin{align*}
        \tilde{q}_S^{(1)}+\tilde{q}_B^{(1)} = \left( 1-\frac{\gamma}{2}-\frac{\beta}{4} \right)\frac{b-c}{a}.
    \end{align*}
    Thus, the problem we aim to solve can be rewritten as
    \begin{align*}
        \min_{2(1-\beta)-1 \le \gamma \le 1} \; d^2(\tilde{q}_S^{(k)},\tilde{q}_B^{(k)}) = \left( \frac{\gamma}{2}+\frac{\beta}{4} \right)^2 \left(\frac{b-c}{a}\right)^2.
    \end{align*}
    Since the objective function is convex, the minimum point is $\gamma^* = -\beta/2$. The corresponding total production level is then equal to $(b-c)/a$. In the case where the minimal point is not feasible, we have $\gamma^* = 2(1-\beta)-1$ instead, since $\beta \in [0,1]$. In this case, the total production level is $\tilde{q}_S^{(1)}+\tilde{q}_B^{(1)} = \left( \frac{1}{2}+\frac{3\beta}{4} \right)\frac{b-c}{a}$, which is no worse than that achieved by assuming $\gamma<-1$ or $\gamma>1$. Therefore, an optimal cooperation level is given by
    \begin{align*}
        \gamma^* = \max\left\{ -\frac{\beta}{2}, 2\left( 1-\beta \right)-1 \right\}.
    \end{align*}
\end{proof}

Therefore, Theorem \ref{thm: optimal cooperation level}(ii) is established for the case of $k=1$. In the remainder of the proof, we assume $k>1$. The following lemma shows a compact representation of the total production level. 

\begin{lemma}\label{lemma: A1}
    For $-1 \le \gamma \le 1$, we have the total production level as
    \begin{align*}
        \tilde{q}_S^{(k)}+\tilde{q}_B^{(k)} = \max\big\{ \min\{x,y\}, \min\{x-y+z, z\} \big\}.
    \end{align*}
\end{lemma}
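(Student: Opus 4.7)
The plan is to reduce the claim to an elementary algebraic identity combined with a standard distributive rule for $\max$ and $\min$. Setting $\alpha := (1+\gamma)^{(k-1)/2}/2^{(k+1)/2}$ and $f = \beta(b-c)/a$, Lemma \ref{lemma: closed-form level k behaviors utility design, -1<gamma<1} expresses, for odd $k$, $\tilde{q}_S^{(k)} = \max\{A_1, A_2\}$ with $A_1 = \alpha(1-\beta/2)(b-c)/a$ and $A_2 = (1-\beta)(b-c)/(2a)$, and $\tilde{q}_B^{(k)} = \min\{B_1, B_2\}$ with $B_1 = (1 - \alpha(1+\gamma))(b-c)/a$ and $B_2 = \beta(b-c)/a$. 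I would begin by making this identification explicit.

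Next, I would apply the elementary identity $\max\{a,b\} + \min\{c,d\} = \max\{\min\{a+c, a+d\}, \min\{b+c, b+d\}\}$, which holds because $\min\{c,d\}$ is a scalar that can be passed inside each branch of the outer max (equivalently, because addition commutes with both $\min$ and $\max$). This rewrites $\tilde{q}_S^{(k)} + \tilde{q}_B^{(k)}$ as $\max\{\min\{A_1+B_1,\, A_1+B_2\},\, \min\{A_2+B_1,\, A_2+B_2\}\}$, exactly the skeleton of the target expression.

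The remaining task is to verify the four pairings $A_1 + B_1 = x$, $A_1 + B_2 = y$, $A_2 + B_2 = z$, and $A_2 + B_1 = x - y + z$ by direct computation against the definitions of $x$, $y$, $z$. The first three are immediate, each reducing to a single combination involving at most one $\alpha$-term. The main obstacle is the last identification: it requires expanding $x - y$ and observing that the $\alpha\beta/2$ contributions cancel, leaving an expression of the form $1 - \alpha(1+\gamma) - \beta$ (times $(b-c)/a$), which added to $z$ recovers $A_2 + B_1$. This step is routine but must be carried out carefully, tracking the $\alpha$, $\beta$, $\gamma$ coefficients. Once these four identifications are in place, substituting them into the distributive formula yields the compact representation claimed in the lemma.
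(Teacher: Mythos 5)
Your proposal is correct and follows essentially the same route as the paper: sum the closed-form expressions from Lemma \ref{lemma: closed-form level k behaviors utility design, -1<gamma<1}, distribute the addition across the outer $\max$ and inner $\min$, and identify the four sums with $x$, $y$, $z$, and $x-y+z$ (the paper compresses this into ``through calculations''). All four of your identifications, including the cancellation of the $\alpha\beta/2$ terms in $x-y$, check out.
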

\begin{proof}
    Based on Lemma \ref{lemma: closed-form level k behaviors utility design, -1<gamma<1}, we have
    \begin{align*}
        \begin{dcases}
            &\!\!\!\!\!\! \tilde{q}_S^{(k)} \!=\! \max\left\{\! \frac{(1+\gamma)^{\frac{k-1}{2}}}{2^{\frac{k+1}{2}}} \!\left(\! 1-\frac{\beta}{2} \!\right)\frac{b-c}{a}, \left(\!\frac{1}{2}-\frac{\beta}{2}\!\right)\frac{b-c}{a} \!\right\},\\
            &\tilde{q}_B^{(k)} = \min\left\{ \left( 1-\frac{(1+\gamma)^{\frac{k+1}{2}}}{2^{\frac{k+1}{2}}} \right) \frac{b-c}{a}, \beta\frac{b-c}{a} \right\}.
        \end{dcases}
    \end{align*}
    Summing up two production levels, we obtain
    \begin{align*}
        \tilde{q}_S^{(k)}+\tilde{q}_B^{(k)} = \max\left\{ \frac{(1+\gamma)^{\frac{k-1}{2}}}{2^{\frac{k+1}{2}}} \left( 1-\frac{\beta}{2} \right)\frac{b-c}{a} + \tilde{q}_B^{(k)}, \right.\\
        \left.\left(\frac{1}{2}-\frac{\beta}{2}\right)\frac{b-c}{a} + \tilde{q}_B^{(k)}\right\}.
    \end{align*}
    Through calculations, it is further simplified as 
    \begin{align*}
        \tilde{q}_S^{(k)}+\tilde{q}_B^{(k)} = \max\big\{ \min\{x,y\}, \min\{x-y+z, z\} \big\}.
    \end{align*}
\end{proof}

To further simplify the expression of the total production level, we derive the sufficient and necessary conditions such that $x \le y$ and $y \le z$, respectively. For simplicity, we use the following shorthand, 
\begin{align*}
    &T_{xy} = 2\left( 1-\beta \right)^{\frac{2}{k+1}}-1,\\
    &T_{zy} = 2\left( \frac{1-\beta}{1-\beta/2} \right)^{\frac{2}{k-1}}-1.
\end{align*}

\begin{lemma}\label{lemma: A2}
    For $-1 \le \gamma \le 1$, $x \le y$ if and only if $\gamma \ge T_{xy}$, while $y \le z$ if and only if $\gamma \le T_{zy}$.
\end{lemma}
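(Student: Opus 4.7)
The plan is to verify both equivalences by computing the differences $y-x$ and $y-z$ explicitly, simplifying them to isolate the $\gamma$-dependence in a single clean factor, and then inverting a monotone power function to solve for $\gamma$.

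First I would handle $x \le y$. Subtracting and regrouping the two $\beta/2$ terms inside the bracket, one obtains
\begin{align*}
    y - x = \left[(\beta - 1) + \frac{(1+\gamma)^{(k-1)/2}}{2^{(k+1)/2}}\bigl(1 + \gamma\bigr)\right]\frac{b-c}{a} = \left[\left(\frac{1+\gamma}{2}\right)^{(k+1)/2} - (1-\beta)\right]\frac{b-c}{a},
\end{align*}
after collapsing $(1+\gamma)^{(k-1)/2}(1+\gamma) = (1+\gamma)^{(k+1)/2}$. Since $1+\gamma \ge 0$ for $\gamma \in [-1,1]$ and $k$ is odd so $(k+1)/2$ is a positive integer, the map $\gamma \mapsto ((1+\gamma)/2)^{(k+1)/2}$ is strictly increasing on $[-1,1]$. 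Hence $y \ge x$ is equivalent to $((1+\gamma)/2)^{(k+1)/2} \ge 1-\beta$, which, by taking the $2/(k+1)$-th power (valid since both sides are non-negative), reduces to $\gamma \ge 2(1-\beta)^{2/(k+1)} - 1 = T_{xy}$.

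Next I would handle $y \le z$ analogously. Subtracting and simplifying,
\begin{align*}
    y - z = \left[\frac{\beta-1}{2} + \frac{(1+\gamma)^{(k-1)/2}}{2^{(k+1)/2}}\cdot\frac{2-\beta}{2}\right]\frac{b-c}{a} = \frac{2-\beta}{2}\left[\left(\frac{1+\gamma}{2}\right)^{(k-1)/2} - \frac{1-\beta}{1-\beta/2}\right]\frac{b-c}{a},
\end{align*}
where I factor out $(2-\beta)/2 > 0$. Since $k > 1$ is odd, $(k-1)/2$ is a positive integer and the inner power function is again strictly increasing in $\gamma$ on $[-1,1]$. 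Hence $y \le z$ is equivalent to $((1+\gamma)/2)^{(k-1)/2} \le (1-\beta)/(1-\beta/2)$, which on taking the $2/(k-1)$-th power yields $\gamma \le T_{zy}$.

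There is no substantive obstacle: the only subtlety is ensuring monotonicity of the relevant power functions on $[-1,1]$, which holds because $1+\gamma \ge 0$ and the exponents $(k\pm 1)/2$ are non-negative integers when $k$ is a positive odd integer, so that taking $p$-th roots preserves the inequalities. The key algebraic step worth highlighting is the telescoping $(1-\beta/2) + \gamma + \beta/2 = 1+\gamma$ in the $y-x$ calculation and the factoring of $(2-\beta)/2$ in the $y-z$ calculation; both are routine but necessary to reach the clean threshold expressions $T_{xy}$ and $T_{zy}$.
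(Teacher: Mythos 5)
Your proof is correct and follows essentially the same route as the paper's: both reduce $x\le y$ to $\bigl(\tfrac{1+\gamma}{2}\bigr)^{(k+1)/2}\ge 1-\beta$ and $y\le z$ to $\bigl(\tfrac{1+\gamma}{2}\bigr)^{(k-1)/2}\le \tfrac{1-\beta}{1-\beta/2}$, then invert the monotone power map on $[-1,1]$. The only blemish is the prefactor in your $y-z$ factorization, which should be $\tfrac{2-\beta}{4}$ rather than $\tfrac{2-\beta}{2}$; since it is positive either way, this does not affect the equivalence.
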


\begin{proof}
    By definitions, $x \le y$ if and only if
    \begin{align*}
        \beta \ge 1-\frac{(1+\gamma)^{\frac{k+1}{2}}}{2^{\frac{k+1}{2}}}.
    \end{align*}
    The preceding condition is equivalent to
    \begin{align*}
        \gamma \ge 2\left( 1-\beta \right)^{\frac{2}{k+1}}-1 {= T_{xy}},
    \end{align*}
    since $-1 \le \gamma \le 1$ and $0 \le \beta \le 1$. Similarly, $y \le z$ if and only if
    \begin{align*}
        1-\beta \ge \left(\frac{1+\gamma}{2}\right)^{\frac{k-1}{2}} \left(1-\frac{\beta}{2}\right),
    \end{align*}
    which is equivalent to 
    \begin{align*}
        \gamma \le 2\left( \frac{1-\beta}{1-\beta/2} \right)^{\frac{2}{k-1}}-1 {= T_{zy}},
    \end{align*}
    since we assume $k>1$.
\end{proof}

Next, under different conditions, we write the total production level as a piece-wise function of $\gamma$ in the following two lemmas.
\begin{lemma}\label{lemma: A3}
    For $-1 \le \gamma \le 1$, if $T_{zy} \le T_{xy}$, then 
    \begin{align*}
    \tilde{q}_S^{(k)}+\tilde{q}_B^{(k)} = 
        \begin{dcases}
            z, &\text{ if } \gamma \le T_{zy},\\
            x, &\text{ if } \gamma \ge T_{xy},\\
            y, &\text{ otherwise},
        \end{dcases}
    \end{align*}
    if $T_{zy} \ge T_{xy}$, then 
    \begin{align*}
    \tilde{q}_S^{(k)}+\tilde{q}_B^{(k)} = 
        \begin{dcases}
            z, &\text{ if } \gamma \le T_{xy},\\
            x, &\text{ if } \gamma \ge T_{zy},\\
            x-y+z, &\text{ otherwise}. 
        \end{dcases}
    \end{align*}
\end{lemma}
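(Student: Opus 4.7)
The plan is to reduce Lemma \ref{lemma: A3} to a direct case analysis of the compact formula in Lemma \ref{lemma: A1}, using Lemma \ref{lemma: A2} to determine the signs of $x-y$ and $y-z$ on each piece. Observe that the expression $\max\{\min\{x,y\},\min\{x-y+z,z\}\}$ is entirely controlled by which of $x,y$ is smaller and which of $x-y+z,z$ is smaller, and the latter comparison reduces to the sign of $x-y$ as well. So the only relevant data are the two inequalities $x\lessgtr y$ and $y\lessgtr z$, and Lemma \ref{lemma: A2} translates these into the thresholds $\gamma\lessgtr T_{xy}$ and $\gamma\lessgtr T_{zy}$.

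First I would split according to whether $T_{zy}\le T_{xy}$ or $T_{zy}\ge T_{xy}$, and within each branch treat the three sub-intervals of $\gamma\in[-1,1]$ cut out by the two thresholds. For the branch $T_{zy}\le T_{xy}$: when $\gamma\le T_{zy}$ both $x\ge y$ and $y\le z$ hold, so $\min\{x,y\}=y$ and (since $x-y+z\ge z$) $\min\{x-y+z,z\}=z$, yielding the max $z$; when $\gamma\ge T_{xy}$ both $x\le y$ and $y\ge z$ hold, so $\min\{x,y\}=x$, $\min\{x-y+z,z\}=x-y+z$, and since $y\ge z$ the max is $x$; in the middle interval we have $x\ge y$ and $y\ge z$, giving $\min\{x,y\}=y$, $\min\{x-y+z,z\}=z$, and the max is $y$. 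For the branch $T_{zy}\ge T_{xy}$ the same bookkeeping yields $z$, $x-y+z$, and $x$ on the three intervals respectively; the middle case is the only new point, where $x\le y\le z$ forces $x-y+z\ge x$ so the max equals $x-y+z$.

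I do not expect any real obstacle: the argument is entirely mechanical once Lemmas \ref{lemma: A1} and \ref{lemma: A2} are in hand. The one bookkeeping subtlety is remembering that $\min\{x-y+z,z\}=z$ precisely when $x\ge y$ (and equals $x-y+z$ otherwise), which ties the second inner minimum to the same threshold $T_{xy}$ already used for the first; this is what makes the piecewise description so clean. I would also note explicitly that at the boundary $\gamma=T_{xy}$ one has $x=y$ so $y=x=x-y+z$ and the two formulas agree, and similarly at $\gamma=T_{zy}$ one has $y=z$ so the formulas agree, ensuring continuity and well-posedness of the piecewise definition on any overlap of boundary points.
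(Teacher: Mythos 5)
Your proposal is correct and follows essentially the same route as the paper's proof: both reduce the compact formula of Lemma \ref{lemma: A1} to the two comparisons $x \lessgtr y$ and $y \lessgtr z$ via the thresholds of Lemma \ref{lemma: A2}, then enumerate the three sub-intervals of $[-1,1]$ in each ordering of $T_{xy}$ and $T_{zy}$. The additional boundary-agreement check you include is a nice touch that the paper only addresses implicitly when it later verifies continuity in Lemmas \ref{lemma: A6} and \ref{lemma: A7}.
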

\begin{proof}
    Since $0 \le \beta \le 1$, both $T_{xy}$ and $T_{zy}$ are between -1 and 1. The domain of $\gamma$ is then partitioned into three intervals. Based on Lemmas \ref{lemma: A1} and \ref{lemma: A2}, assuming $T_{zy} \le T_{xy}$, we have the following observations for each interval,
    \begin{enumerate}[label=(\roman*)]
        \item if {$\gamma \le T_{zy}$:} we have $x \ge y$, $y \le z$ and thus
        $$\tilde{q}_S^{(k)}+\tilde{q}_B^{(k)} = \max\{y,z\} = z;$$
        \item if {$T_{zy} \le \gamma \le T_{xy}$:} we have $x \ge y$, $y \ge z$ and thus
        $$\tilde{q}_S^{(k)}+\tilde{q}_B^{(k)} = \max\{y,z\} = y;$$
        \item if {$\gamma \ge T_{xy}$:}  we have $x \le y$, $y \ge z$ and thus
        $$\tilde{q}_S^{(k)}+\tilde{q}_B^{(k)} = \max\{x,x-y+z\} = x.$$
    \end{enumerate}
    Assuming $T_{zy} \ge T_{xy}$, we have
    \begin{enumerate}[label=(\roman*)]
        \item if {$\gamma \le T_{xy}$:} we have $x \ge y$, $y \le z$ and thus
        $$\tilde{q}_S^{(k)}+\tilde{q}_B^{(k)} = \max\{y,z\} = z;$$
        \item if {$T_{xy} \le \gamma \le T_{zy}$:} we have $x \le y$, $y \le z$ and thus
        $$\tilde{q}_S^{(k)}+\tilde{q}_B^{(k)} = \max\{x,x-y+z\} = x-y+z;$$
        \item if {$\gamma \ge T_{zy}$:} we have $x \le y$, $y \ge z$ and thus
        $$\tilde{q}_S^{(k)}+\tilde{q}_B^{(k)} = \max\{x,x-y+z\} = x.$$
    \end{enumerate}
\end{proof}

\begin{lemma}\label{lemma: A6}
    Assume that {$T_{zy} \le T_{xy}$,} an optimal cooperation level is given by
    \begin{align*}
        \gamma^* = \max\left\{ -\frac{\beta}{2}, 2\left( 1-\beta \right)^{\frac{2}{k+1}}-1 \right\}.
    \end{align*}
\end{lemma}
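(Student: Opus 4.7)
The plan is to minimize the distance $d(\gamma) := |(b-c)/a - (\tilde{q}_S^{(k)}+\tilde{q}_B^{(k)})|$, which by Lemma \ref{lemma: equivalent social welfare measure} is equivalent to maximizing social welfare. Using Lemma \ref{lemma: A3} under the hypothesis $T_{zy}\le T_{xy}$, the total production on $[-1,1]$ is $z$ on $[-1, T_{zy}]$, $y$ on $[T_{zy}, T_{xy}]$, and $x$ on $[T_{xy}, 1]$. Outside this interval, Lemma \ref{lemma: closed-form level k behaviors utility design, gamma<-1} gives total $= z$ for $\gamma < -1$, and Lemma \ref{lemma: closed-form level k behaviors utility design, gamma>1} yields a value at most $(b-c)/(2a) \le z$ for $\gamma > 1$. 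So it suffices to optimize over $\gamma \in [-1, 1]$.

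Writing $u := ((1+\gamma)/2)^{(k-1)/2}/2$, which is positive and strictly increasing on $(-1, \infty)$ (since $k$ is odd, $k-1$ is even), I would observe that $y = [\beta + u(1-\beta/2)](b-c)/a$ is strictly increasing on interval (ii), and that $x = [1 - u(\gamma + \beta/2)](b-c)/a$ on interval (iii). The key identity is that $x = (b-c)/a$ precisely when $u(\gamma + \beta/2) = 0$, whose only root in $(-1, \infty)$ is $\gamma = -\beta/2$. The proof then splits naturally into two cases.

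Case A ($T_{xy} \le -\beta/2$): Then $\gamma = -\beta/2 \in [T_{xy}, 1]$, and Lemma \ref{lemma: A3} yields total $= x(-\beta/2) = (b-c)/a$ with $d = 0$; hence $\gamma^* = -\beta/2$ is globally optimal. Case B ($T_{xy} > -\beta/2$): On interval (iii), $\gamma + \beta/2 > 0$ together with $u > 0$ imply $x < (b-c)/a$, and $dx/d\gamma = -\big[(du/d\gamma)(\gamma + \beta/2) + u\big] < 0$, so the distance is strictly increasing. On interval (ii), $y$ strictly increases toward $y(T_{xy}) = x(T_{xy}) < (b-c)/a$, so the distance strictly decreases. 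On interval (i), the distance is constant and equals its value at $\gamma = T_{zy}$ in (ii), which exceeds the distance at $T_{xy}$. Therefore the minimum is attained at $\gamma = T_{xy} = 2(1-\beta)^{2/(k+1)} - 1$. Combining both cases gives $\gamma^* = \max\{-\beta/2,\, 2(1-\beta)^{2/(k+1)} - 1\}$.

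The main obstacle I anticipate is the sign and monotonicity bookkeeping on interval (iii): one must use the sign of $T_{xy} + \beta/2$ together with positivity of $u$ and $du/d\gamma$ on $(-1, \infty)$ to rule out that $x$ might re-enter a regime where total exceeds $(b-c)/a$. Once that is pinned down, the global optimization reduces to comparing three piecewise-monotone branches at their shared endpoint $T_{xy}$.
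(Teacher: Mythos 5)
Your proposal is correct and follows essentially the same route as the paper's proof: reduce to $\gamma\in[-1,1]$ via the piecewise characterization of Lemma \ref{lemma: A3}, use the monotonicity of the three branches to locate the optimum at the zero of $\gamma+\beta/2$ when it lies in interval (iii) and at the boundary $T_{xy}$ otherwise. The only difference is presentational: you make the monotonicity and the sign of $x-(b-c)/a$ explicit via the substitution $u$ and a derivative computation, where the paper argues the same shape of the piecewise function by inspection and a figure.
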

\begin{proof}
    We first assume $-1 \le \gamma \le 1$. By definitions, we observe that $x$ is decreasing in $\gamma$ and $y$ is increasing in $\gamma$ since $\beta \in [0,1]$. According to Lemma \ref{lemma: A3}, the total production level $\tilde{q}_S^{(k)}+\tilde{q}_B^{(k)}$ is a piece-wise function of $\gamma$. By checking boundary cases, we verify that it is also continuous in $\gamma \in [-1,1]$. Moreover, as illustrated in Fig. \ref{fig: lemma 16}, it is flat in interval (i), increasing in interval (ii) and decreasing in interval (iii).
    
    \begin{figure}[ht]
    \centering
    \includegraphics[width=\linewidth]{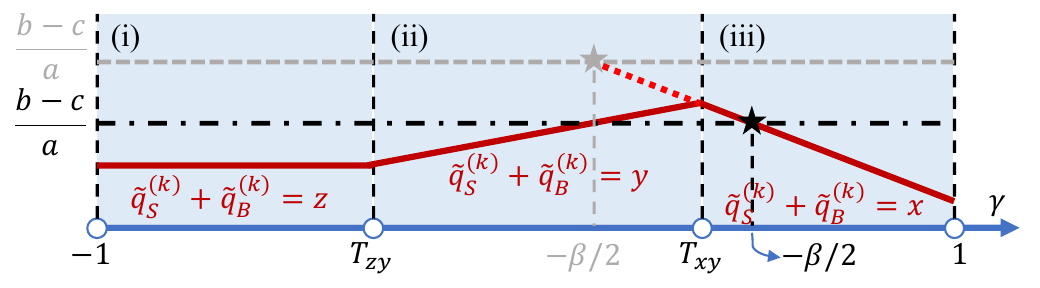}
    \caption{Illustration of the total production level, assuming $T_{zy} \le T_{xy}$.}
    \label{fig: lemma 16}
    \end{figure}
    To maximize social welfare, the total production level $\tilde{q}_S^{(k)}+\tilde{q}_B^{(k)}$ needs to be as close to $(b-c)/a$ as possible. When $\gamma = 1$, we have $x = \left( \frac{1}{2}-\frac{\beta}{4} \right)\frac{b-c}{a} \le z \le (b-c)/a$ since $\beta \in [0,1]$. Therefore, both the maximum and minimum of function $\tilde{q}_S^{(k)}+\tilde{q}_B^{(k)}$ can be achieved in interval (iii). Thus, to find an optimal $\gamma$, it suffices to study interval (iii). We let $\tilde{q}_S^{(k)}+\tilde{q}_B^{(k)} = (b-c)/a$ and solve for $\gamma$, which yields $\gamma^* = -\beta/2 < 1$. However, this solution is feasible only when the maximum of total production level is at least $(b-c)/a$, as shown in dashdotted black line in Fig. \ref{fig: lemma 16}. In the case that it is not feasible in interval (iii), $(b-c)/a$ is shown as the dashed grey line. Then we have {$\gamma^* = T_{xy} = 2\left( 1-\beta \right)^{\frac{2}{k+1}}-1$} instead. Therefore, we obtain 
    \begin{align*}
        \gamma^* = \max\left\{ -\frac{\beta}{2}, 2\left( 1-\beta \right)^{\frac{2}{k+1}}-1 \right\}.
    \end{align*}
    
    On the other hand, when $\gamma < -1$ and $k > 1$, we have $\tilde{q}_S^{(k)}+\tilde{q}_B^{(k)}=z$. When $\gamma > 1$, $\tilde{q}_S^{(k)}+\tilde{q}_B^{(k)} \le \frac{b-c}{2a} \le z$. In both cases, the social welfare is no better than that achieved by $\gamma^*$. Therefore, $\gamma^*$ is a global optimizer. 
\end{proof}

\begin{lemma}\label{lemma: A7}
    Assume that {$T_{zy} \ge T_{xy}$}, an optimal cooperation level is also given by
    \begin{align*}
        \gamma^* = \max\left\{ -\frac{\beta}{2}, 2\left( 1-\beta \right)^{\frac{2}{k+1}}-1 \right\}.
    \end{align*}
\end{lemma}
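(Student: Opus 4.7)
My plan is to mirror the proof of Lemma~\ref{lemma: A6}, but invoke the complementary branch of Lemma~\ref{lemma: A3} that applies when $T_{zy}\ge T_{xy}$. For $\gamma\in[-1,1]$ this decomposes $\tilde{q}_S^{(k)}+\tilde{q}_B^{(k)}$ into the constant $z$ on $[-1,T_{xy}]$, the expression $x-y+z=\bigl[\tfrac32-\tfrac{\beta}{2}-\bigl(\tfrac{1+\gamma}{2}\bigr)^{(k+1)/2}\bigr]\tfrac{b-c}{a}$ on $[T_{xy},T_{zy}]$, and the expression $x$ on $[T_{zy},1]$. A short calculation confirms continuity at the two junctions (from $x=y$ at $\gamma=T_{xy}$ and $y=z$ at $\gamma=T_{zy}$) and that the middle piece is strictly decreasing in $\gamma$, so the total production drops monotonically from $z$ to $x(T_{zy})$ on $[T_{xy},T_{zy}]$.

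The crucial step is ruling out any $\gamma$ in the third piece $[T_{zy},1]$ beating $z$. Raising $T_{zy}\ge T_{xy}$ to the $(k-1)/2$-th power and rearranging shows it is equivalent to $T_{zy}\ge 1-\beta$. Plugging this into $A(T_{zy})(T_{zy}+\beta/2)$, with $A=(1+\gamma)^{(k-1)/2}/2^{(k+1)/2}$, yields $A(T_{zy})(T_{zy}+\beta/2)\ge(1-\beta)/2$, which is precisely $x(T_{zy})\le z$. Moreover $T_{zy}\ge 1-\beta>-\beta/2$, so both factors of $A(\gamma+\beta/2)$ are positive and increasing on $[T_{zy},1]$, forcing $x$ to be decreasing and hence $x\le x(T_{zy})\le z$ throughout the third piece.

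Combined with a brief argument that $\gamma<-1$ and $\gamma>1$ both yield total production at most $z$ (using Lemmas~\ref{lemma: closed-form level k behaviors utility design, gamma<-1} and~\ref{lemma: closed-form level k behaviors utility design, gamma>1}, exactly as in Lemma~\ref{lemma: A6}), the global maximum of total production over $\mathbb{R}$ equals $z$ and is attained on all of $(-\infty,T_{xy}]$. Lemma~\ref{lemma: equivalent social welfare measure} then promotes this to an optimum of the original problem, so in particular $\gamma^*=T_{xy}=2(1-\beta)^{2/(k+1)}-1$ works. To recover the stated $\max\{-\beta/2,T_{xy}\}$ formula, I would prove $T_{zy}\ge T_{xy}\Rightarrow -\beta/2\le T_{xy}$ by contraposition: if $T_{xy}<-\beta/2$, then $(1-\beta)^{2/(k+1)}<(1-\beta/2)/2$, which on raising to $(k+1)/2$ and squaring gives $(1-\beta)^2<(1-\beta/2)^{k+1}/2^{k+1}\le(1-\beta/2)^{k+1}$, i.e., $T_{zy}<T_{xy}$; hence $\max\{-\beta/2,T_{xy}\}=T_{xy}$ throughout this regime.

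The step I expect to be most delicate is the monotonicity of $x$ on $[T_{zy},1]$, since a direct derivative calculation produces a sign-indeterminate bracket. The cleanest route is the one above: tie the monotonicity of $A(\gamma+\beta/2)$ back to the inequality $T_{zy}>-\beta/2$, which itself follows from the hypothesis $T_{zy}\ge T_{xy}$ via the intermediate bound $T_{zy}\ge 1-\beta$. The remainder is routine piecewise bookkeeping that closely parallels Lemma~\ref{lemma: A6}.
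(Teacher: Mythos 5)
Your proof is correct and follows essentially the same route as the paper's: the piecewise decomposition from Lemma~\ref{lemma: A3}, continuity at the two junctions, monotone decrease of the middle and right pieces so that the maximum total production is $z \le (b-c)/a$ attained at $\gamma = T_{xy}$, and the implication $T_{zy} \ge T_{xy} \Rightarrow T_{xy} \ge -\beta/2$ to collapse the stated $\max$ formula to $T_{xy}$. If anything, your handling of the third piece is more careful than the paper's, which asserts without qualification that $x$ is decreasing in $\gamma$ (this fails near $\gamma = -1$ where $\gamma + \beta/2 < 0$); your observation that the hypothesis forces $T_{zy} \ge 1-\beta > -\beta/2$ is precisely what makes the monotonicity of $x$ on $[T_{zy},1]$ rigorous.
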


\begin{proof}
    We first assume $-1 \le \gamma \le 1$. By definitions, we have 
    \begin{align*}
        x-y+z = \left[ \frac{3}{2}-\frac{\beta}{2}-\frac{(1+\gamma)^{\frac{k+1}{2}}}{2^{\frac{k+1}{2}}} \right]\frac{b-c}{a},
    \end{align*}
    which is decreasing in $\gamma$, and $x$ is also decreasing in $\gamma$ since $\beta \ge 0$. According to Lemma \ref{lemma: A3}, the total production level $\tilde{q}_S^{(k)}+\tilde{q}_B^{(k)}$ is a piece-wise function of $\gamma$. Additionally, by checking boundary cases, we see that it is continuous in $\gamma \in [-1,1]$. Thus, as shown in Fig. \ref{fig: lemma 17}, it is flat in interval (i) and increasing in intervals (ii) \& (iii). 
    
    \begin{figure}[ht]
    \centering
    \includegraphics[width=\linewidth]{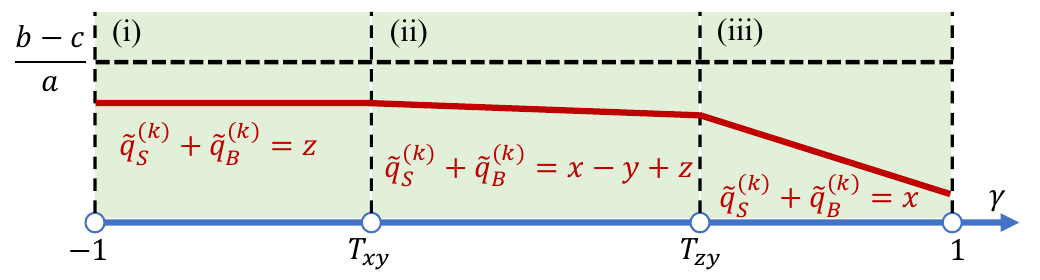}
    \caption{Illustration of the total production level, assuming $T_{zy} \ge T_{xy}$.}
    \label{fig: lemma 17}
    \end{figure}
    To maximize social welfare, the total production level $\tilde{q}_S^{(k)}+\tilde{q}_B^{(k)}$ needs to be as close to $(b-c)/a$ as possible, where $(b-c)/a$ is the dashed black line shown in Fig. \ref{fig: lemma 17}. Since $\beta \in [0,1]$, we have $z \le (b-c)/a$. Therefore, an optimal $\gamma$ is given by 
    $$\gamma^* {=T_{xy}} = 2\left( 1-\beta \right)^{\frac{2}{k+1}}-1.$$
    Next, we prove that it is equivalent to the optimal cooperation level shown in the statement. Due to the condition established in the lemma, we have 
    $$\left( \frac{1-\beta}{1-\beta/2} \right)^{\frac{2}{k-1}} \ge \left( 1-\beta \right)^{\frac{2}{k+1}}.$$
    Taking the power of $(k-1)/2$ on both sides yields
    $$\left( 1-\beta \right)^{\frac{k-1}{k+1}} \le \frac{1-\beta}{1-\beta/2}.$$
    Then, we relax the preceding inequality as follows,
    \begin{align*}
        \left( 1-\beta \right)^{\frac{k-1}{k+1}} \le \frac{2(1-\beta)}{1-\beta/2}.
    \end{align*}
    By eliminating $(1-\beta)$ on both sides and take their inverses, we obtain 
    \begin{align*}
        -\frac{\beta}{2} \le 2\left( 1-\beta \right)^{\frac{2}{k+1}}-1.
    \end{align*}
    Hence, it is equivalent to rewrite $\gamma^*$ as 
    $$\gamma^* = \max\left\{ -\frac{\beta}{2}, 2\left( 1-\beta \right)^{\frac{2}{k+1}}-1 \right\}.$$
    Moreover, if $\gamma < -1$, we have $\tilde{q}_S^{(k)}+\tilde{q}_B^{(k)}=z$. If $\gamma > 1$, $\tilde{q}_S^{(k)}+\tilde{q}_B^{(k)} \le \frac{b-c}{2a} \le z$. In both cases, the social welfare is no better than that achieved by $\gamma^*$. Therefore, $\gamma^*$ is a global optimizer. 
\end{proof}

Combining Lemmas \ref{lemma: A6} and \ref{lemma: A7}, the proof of Theorem \ref{thm: optimal cooperation level}(b) is complete.

\end{document}